\newcommand\org@hypertarget{}
\let\org@hypertarget\hypertarget
\renewcommand\hypertarget[2]{%
  \Hy@raisedlink{\org@hypertarget{#1}{}}#2%
  }
\definecolor{darkgreen}{RGB}{50,190,50}
\definecolor{darkblue}{RGB}{0,0,190}
\definecolor{darkred}{RGB}{238,0,0}
\newcommand{\nl}{\ensuremath{\hspace*{-0.5pt}}}
\newcommand{\nr}{\ensuremath{\hspace*{0.5pt}}}
\newcommand{\subtiny}[3]{\ensuremath{_{\hspace{#1 pt}\protect\raisebox{#2 pt}{\tiny{$ #3$}}}}}
\newcommand{\supscr}[3]{\ensuremath{^{\hspace{#1 pt}\protect\raisebox{#2 pt}{\scriptsize{$ #3$}}}}}
\newcommand{\suptiny}[3]{\ensuremath{^{\hspace{#1 pt}\protect\raisebox{#2 pt}{\tiny{$ #3$}}}}}
\DeclarePairedDelimiter\floor{\lfloor}{\rfloor}
\newcommand{\rank}{\text{rank}}
\newcommand{\identity}{\mathbbm{1}}
\newtheorem{theorem}{Theorem}
\newtheorem{corollary}{Corollary}
\newtheorem{lemma}{Lemma}
\newtheorem{proposition}{Proposition}
\newtheorem{observation}{Observation}
\theoremstyle{definition}
\newtheorem{definition}{Definition}
\newtheorem{problem}{Problem}
\newtheorem{remark}{Remark}
\newcommand{\tr}{\textnormal{Tr}}
\newcommand{\djj}{d\kern-0.4em\char"16\kern-0.1em}
\newcommand{\newreptheorem}[2]{\newtheorem*{rep@#1}{\rep@title}\newenvironment{rep#1}[1]{\def\rep@title{#2 \ref*{##1}}\begin{rep@#1}}{\end{rep@#1}}}
\renewcommand{\thesection}{\Roman{section}}
\renewcommand{\thesubsection}{\Roman{section}.\Alph{subsection}}
\renewcommand{\thesubsubsection}{\Roman{section}.\Alph{subsection}.\arabic{subsubsection}}
\renewcommand{\p@subsection}{}
\renewcommand{\p@subsubsection}{}
\definecolor{mycolor}{rgb}{0.122, 0.435, 0.698}
\newmdenv[innerlinewidth=0.5pt, roundcorner=4pt,linecolor=mycolor,innerleftmargin=6pt,
innerrightmargin=6pt,innertopmargin=6pt,innerbottommargin=6pt]{mybox}
\newtcolorbox[blend into=figures]{boxdefi}[3][]
{ float*=ht,width=\textwidth,lower separated=false, center upper,
title={#2},label= def:#3,#1}
\newtcolorbox[blend into=tables]{smallboxtable}[3][]
{colback=mycolor!5, colframe=mycolor, float*=ht, width=\textwidth, lower separated=false, blend before title=colon hang,
title={#2}, label= table:#3 ,#1}
\begin{document}
\title{
High-dimensional entanglement witnessed by correlations in arbitrary bases
}
\author{Nicky Kai Hong Li}
\affiliation{Atominstitut, Technische Universit\"{a}t Wien, Stadionallee 2, 1020 Vienna, Austria}
\affiliation{Vienna Center for Quantum Science and Technology, TU Wien, 1020 Vienna, Austria}
\affiliation{Institute for Quantum Optics and Quantum Information (IQOQI), Austrian Academy of Sciences, Boltzmanngasse 3, 1090 Vienna, Austria}
\author{Marcus Huber}
\affiliation{Atominstitut, Technische Universit\"{a}t Wien, Stadionallee 2, 1020 Vienna, Austria}
\affiliation{Institute for Quantum Optics and Quantum Information (IQOQI), Austrian Academy of Sciences, Boltzmanngasse 3, 1090 Vienna, Austria}
\author{Nicolai Friis}
\affiliation{Atominstitut, Technische Universit\"{a}t Wien, Stadionallee 2, 1020 Vienna, Austria}
\affiliation{Institute for Quantum Optics and Quantum Information (IQOQI), Austrian Academy of Sciences, Boltzmanngasse 3, 1090 Vienna, Austria}

\date{\today}

\begin{abstract}
Certifying entanglement is an important step in the development of many quantum technologies, especially for higher-dimensional systems, where entanglement promises increased capabilities for quantum communication and computation. A key feature distinguishing entanglement from classical correlations is the occurrence of correlations for complementary measurement bases. In particular, mutually unbiased bases (MUBs) are a paradigmatic example that is well-understood and routinely employed for entanglement certification. However, implementing unbiased measurements exactly is challenging and not generically possible for all physical platforms. Here, we extend the entanglement-certification toolbox from correlations in MUBs to arbitrary bases. This practically significant simplification paves the way for efficient characterizations of high-dimensional entanglement in a wide range of physical systems. Furthermore, we introduce a simple three-MUBs construction for all dimensions without using the Wootters-Fields construction, potentially simplifying experimental requirements when measurements in more than two MUBs are needed, especially in high-dimensional settings.
\end{abstract}
\maketitle

\section{Introduction}
Entanglement is an important signature of ``quantumness" and a central resource in quantum information processing. In particular, it is a crucial ingredient to achieving quantum advantages in many communication~\cite{BennettWiesner1992,BennettBrassardCrepeauJozsaPeresWootters1993,BuhrmanCleveMassarDeWolf2010,ScaraniEtAl2009,XuMaZhangLoPan2020}, metrological~\cite{DegenReinhardCappellaro2017,GiovannettiLloydMaccone2011}, and computational tasks~\cite{JozsaLinden2003}. Consequently, continuous efforts are being made to develop mathematical tools for the detection and quantification of entanglement in experiments~\cite{FriisVitaglianoMalikHuber2019}.

The certification of high-dimensional entanglement is of particular relevance to setups that use multilevel quantum systems to store and process information~\cite{BavarescoEtAl2018,HrmoEtAl2023}. Entanglement in higher dimensions can be more robust to noise and can allow for higher data throughput when used for teleportation, such that the performance of many quantum information-processing tasks improves with the dimension of the accessible entanglement resources. For example, using higher-dimensional entanglement can improve secure key rates in quantum key distribution~\cite{DodaHuberMurtaPivoluskaPleschVlachou2021} and benefit a wide range of quantum technologies such as entanglement-enhanced imaging~\cite{NdaganoDefienneLyonsStarshynovVillaTisaFaccio2020,DefienneNdaganoLyonsFaccio2021,CameronCourmeVernierePandyaFaccioDefienne2024}.

Moreover, entanglement certification can serve as a benchmark 
for quantum computers and simulators: If a device is supposed to output states with high-dimensional entanglement, then certifying the latter in the actual outputs can indicate how well the device is functioning~\cite{LanyonEtAl2009,HrmoEtAl2023}. 
At the same time, entanglement certification can increase one's confidence that a quantum advantage can be achieved since the hardness of classically simulating a many-body system increases with the amount of entanglement~\cite{AmicoFazioOsterlohVedral2008,VerstraeteMurgCirac2008,EisertCramerPlenio2010}.

A central intuition behind entanglement detection is that entanglement leads to correlations between outcomes of local measurements in two or more complementary bases. 
In this context, complementarity is typically approached via the extremal case of mutually unbiased bases (MUBs). For a quantum system prepared in any basis state of any one of these bases, all measurement outcomes for any of the other MUBs are equally likely, i.e., knowing the measurement outcome in one basis tells us nothing about the outcomes in the complementary bases. Consequently, MUBs have been at the centre of many existing entanglement-detection methods~\cite{SpenglerHuberBrierleyAdaktylosHiesmayr2012,ErkerKrennHuber2017,BavarescoEtAl2018,Herrera-ValenciaSrivastavPivoluskaHuberFriisMcCutcheonMalik2020,MorelliHuberTavakoli2023}. Correlations measured in MUBs can in turn be used to bound well-defined figures of merit that quantify how strongly entangled the underlying state is. An example of such a quantity is the Schmidt number~\cite{TerhalHorodecki2000}{\textemdash}a generalization of the Schmidt rank for mixed states{\textemdash}that is used here to quantify entanglement dimensionality.

However, while some systems allow one to freely select the measurement bases (e.g., via spatial light modulators and single-mode fibres for spatial degrees-of-freedom of photons~\cite{BavarescoEtAl2018,Herrera-ValenciaSrivastavPivoluskaHuberFriisMcCutcheonMalik2020}), this is generically not the case for all setups. 
The inability to measure in the desired MUBs often goes hand in hand with (but does not mathematically imply) the impossibility of carrying out tomographically complete sets of measurements. 
For previous approaches to certifying high-dimensional entanglement, such limited control has been prohibitively restrictive.  

Nevertheless, classical correlations cannot simultaneously be arbitrarily strong for any set of bases that are complementary in the sense of corresponding to non-commuting observables. 
Indeed, the detection of bipartite entanglement from measurements in arbitrary bases can, in principle, be achieved via entropic uncertainty relations~\cite{BertaChristandlColbeckRenesRenner2010,ColesBertaTomamichelWehner2017}, which provide a lower bound on the entanglement cost~\cite{DevetakWinter2005}, but not on the Schmidt number.

In this work, we fill this gap by proposing a family of Schmidt-number witnesses based on correlations in at least two coordinated local orthonormal bases which can be chosen arbitrarily. We provide analytic upper bounds of the corresponding witness operators evaluated on any bipartite state with Schmidt number at most~$k$ (Theorem~\ref{thm:SchmidtRkBound} \& Lemma~\ref{lemma:SchmidtRkLoose}). The main advantage of our method is that the upper bounds depend only on the absolute values of the basis-vector overlaps and are independent of the relative phases between the measurement bases, which are often not directly measurable in experiments. Therefore, our witness significantly simplifies the requirements for certifying high-dimensional entanglement in experiments across a wider range of platforms.
We also show that the bounds are the tightest when the underlying measurement bases are mutually unbiased (Corollary~\ref{cor:MUBoptimal}), confirming the intuition that MUBs are optimal for entanglement detection within this framework.\\[-3mm]

In addition, we analytically lower bound the \textit{entanglement fidelity} (or the \textit{singlet fraction} \cite{BennettDiVincenzoSmolinWootters1996, HorodeckiMPR1999}){\textemdash}the maximum fidelity of a bipartite state with any two-qudit maximally entangled state{\textemdash}using our witness (Theorem~\ref{thm:SchmidtRkBound} \& Lemma~\ref{lemma:SchmidtRkLoose}). This provides an alternative way to quantify high-dimensional entanglement. Next, we demonstrate the effectiveness of our Schmidt-number witnesses with two examples: the two-qudit isotropic state and the noisy two-qudit purified thermal states, and evaluate the difference between the actual entanglement fidelity and our bound for these examples (Sec.\;\ref{sec:examples} \& Appendix \ref{appendix:Examples}). To complete our analysis, we compare the white-noise tolerances of our witnesses with those proposed in Ref.~\cite{BavarescoEtAl2018} (Appendix \ref{app:JessicaWitness}). We also discuss the possibility of using random measurement bases in high dimensions (Sec.\;\ref{sec:randomBases} \& Appendix \ref{app:COM_Levy}) or approximately MUBs (AMUBs) (Appendix \ref{app:AMUBs}) to witness Schmidt numbers and propose a simple (and, to the best of our knowledge, new) construction of three MUBs for all dimensions without using the Wootters-Fields construction \cite{WoottersFields1989} (Sec.\;\ref{sec:3MUBsAnyD}). Finally, we compare various existing methods for certifying high-dimensional entanglement with our method in Table~\ref{table:CertifyHighDimEntMethods}.\\[-8mm]

\section{Results}
\subsection{Background \& notation}
To detect bipartite entanglement, parties A and B measure their shared state $\rho\subtiny{0}{0}{A\nl B}$ in~$m$ local bases with global projectors $|e^z_a\rangle\!\langle e^z_a|\otimes|\tilde{e}^z_a{}^*\rangle\!\langle\tilde{e}^z_a{}^*|$ where $\{\ket{e^z_a}\}_{a=0}^{d-1}$ is the $z$-th orthonormal basis of the $m$ bases, $\ket{\phi^*}$ denotes the complex conjugate of the state $\ket{\phi}$ with respect to the computational basis $\{\ket{i}\}_{i=0}^{d-1}$, and $\ket{\tilde{e}^z_a{}^*} \coloneqq U\ket{e^z_a{}^*}$ with $U\in U(d)$ fixed for all $a$ and $z$\@. Note that we do not require the $m$ measurement bases to be MUBs as in Ref.\;\cite{MorelliHuberTavakoli2023}. The entanglement witness is then defined to be the sum of the probabilities of all matching outcomes in all matching pairs of bases, i.e.,
\begin{equation}
    \mathcal{S}_{d}\suptiny{1}{0}{(m)}(\rho\subtiny{0}{0}{A\nl B}) = \sum_{z=1}^m\sum_{a=0}^{d-1} \langle e^z_a,\tilde{e}^z_a{}^*|\nr\rho\subtiny{0}{0}{A\nl B}\nr|e^z_a,\tilde{e}^z_a{}^*\rangle.
    \label{eq:witness}
\end{equation}
In Theorem~\ref{thm:SchmidtRkBound}, we show how the upper bound of Eq.\;\eqref{eq:witness} depends on the Schmidt number $k(\rho\subtiny{0}{0}{A\nl B})$~\cite{TerhalHorodecki2000} of the state $\rho\subtiny{0}{0}{A\nl B}$, which is defined as
\begin{equation}
    k(\rho\subtiny{0}{0}{A\nl B}) \coloneqq \inf_{\mathcal{D}(\rho\subtiny{0}{0}{A\nl B})}\left\{\max_{\{(p_i,\ket{\psi_i})\}_i} \rank(\tr\subtiny{0}{0}{B}|\psi_i\rangle\!\langle\psi_i|)\right\},
\end{equation}
where $\mathcal{D}(\rho)$ is the set of all pure-state decompositions, $\{(p_i,\ket{\psi_i})\}_i$, of $\rho=\sum_i p_i|\psi_i\rangle\!\langle\psi_i|$ and $\{p_i\}_i$ is a probability distribution. In addition, we show that the maximum fidelity of $\rho\subtiny{0}{0}{A\nl B}$ with any maximally entangled state,
\begin{align}
    \text{$\mathcal{F}$}(\rho\subtiny{0}{0}{A\nl B})\coloneqq \max_{U\subtiny{0}{0}{A}}\langle\Phi^+_d|(U\subtiny{0}{0}{A}\otimes\identity\subtiny{0}{0}{B})\rho\subtiny{0}{0}{A\nl B}(U\subtiny{0}{0}{A}\otimes\identity\subtiny{0}{0}{B})^\dagger|\Phi^+_d\rangle,
\end{align}
where the maximization is over all unitaries $U\subtiny{0}{0}{A}$ acting on subsystem $A$ and $\ket{\Phi^+_d}=\frac{1}{\sqrt{d}}\sum_{i=0}^{d-1}\ket{ii}$, can be lower bounded using the quantity $\mathcal{S}_{d}\suptiny{1}{0}{(m)}(\rho\subtiny{0}{0}{A\nl B})$\@. From now on, we call $\mathcal{F}(\rho\subtiny{0}{0}{A\nl B})$ the \textit{entanglement fidelity} (also known as the \textit{singlet fraction} in the case of qubits \cite{BennettDiVincenzoSmolinWootters1996, HorodeckiMPR1999}) of $\rho\subtiny{0}{0}{A\nl B}$\@.\\[-3mm]

Let us define the maximum and minimum overlaps between two bases $z$ and $z'$ as $c^{z,z'}_\text{max}=\max_{a,a'} |\langle e^z_a|e^{z'}_{a'}\rangle|^2$ and $c^{z,z'}_\text{min}=\min_{a,a'} |\langle e^z_a|e^{z'}_{a'}\rangle|^2$, respectively. We then define $\mathcal{C}=\{|\langle e^z_a|e^{z'}_{a'}\rangle|^2\}_{a,a',z\neq z'}$ ($\overline{\mathcal{C}}=\{(c^{z,z'}_\text{max},c^{z,z'}_\text{min})\}_{z\neq z'}$) to be the set that contains all (pairs of maximum and minimum) overlaps between any two different measurement bases.\\[-5mm]

\subsection{Schmidt-number witness \& entanglement-fidelity bound}
We now present our main results that use the expectation value $\mathcal{S}_{d}\suptiny{1}{0}{(m)}$ to infer lower bounds on the Schmidt number and entanglement fidelity of the state $\rho\subtiny{0}{0}{A\nl B}$\@.
\begin{theorem}\label{thm:SchmidtRkBound}
For any bipartite state $\rho\subtiny{0}{0}{A\nl B}$ of equal local dimension $d$ and Schmidt number at most~$k$, it holds that
\begin{equation}
    \mathcal{S}_{d}\suptiny{1}{0}{(m)}(\rho\subtiny{0}{0}{A\nl B})\leq \frac{k(m-\mathcal{T}(\mathcal{C}))}{d}+ \mathcal{T}(\mathcal{C}) \eqcolon \mathcal{B}_k, \label{eq:SchmidtRkBnd}
\end{equation}
where the upper bound $\mathcal{B}_k$ depends on the integers $d$ and $k$, the number of measurement bases $m$, and a quantity $\mathcal{T}(\mathcal{C})$ which depends on the set of bases overlaps. More specifically, $\mathcal{T}(\mathcal{C}) \coloneqq \min\{\lambda(\mathcal{C}), m\}$, $\lambda(\mathcal{C}) \coloneqq \frac{1}{2}\left(1+\sqrt{1+2d\sum_{z\neq z'}G^{z,z'}}\right)\geq 1$, and $G^{z,z'} \coloneqq 1-(d+1)c^{z,z'}_\text{min} + \frac{1}{d}\sum_{a,a'}|\langle e^z_a|e^{z'}_{a'}\rangle|^4$\@. Furthermore, the entanglement fidelity of $\rho\subtiny{0}{0}{A\nl B}$ can be lower bounded as follows:
\begin{equation}\label{ineq:Fbound}
    \mathcal{F}(\rho\subtiny{0}{0}{A\nl B})\geq \max\left\{0, \frac{\mathcal{S}_{d}\suptiny{1}{0}{(m)}(\rho\subtiny{0}{0}{A\nl B})-\mathcal{T}(\mathcal{C})}{m-\mathcal{T}(\mathcal{C})}\right\} \eqcolon \mathcal{F}_m.
\end{equation}
\end{theorem}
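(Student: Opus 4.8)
The plan is to write the witness as $\mathcal{S}_{d}\suptiny{1}{0}{(m)}(\rho\subtiny{0}{0}{A\nl B}) = \tr[W\rho\subtiny{0}{0}{A\nl B}]$ with $W=\sum_{z=1}^m P_z$ and $P_z = \sum_a \ketbra{\eta^z_a}{\eta^z_a}$, where $\ket{\eta^z_a}\coloneqq\ket{e^z_a,\tilde{e}^z_a{}^*}$. The first thing I would establish is the algebraic fact that makes the whole statement phase-independent: since $U$ is unitary and complex conjugation conjugates inner products, $\langle\eta^z_a|\eta^{z'}_{a'}\rangle = \langle e^z_a|e^{z'}_{a'}\rangle\,\overline{\langle e^z_a|e^{z'}_{a'}\rangle} = |\langle e^z_a|e^{z'}_{a'}\rangle|^2$, which is real, nonnegative, and independent of $U$ and of all relative phases. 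In particular $\tfrac{1}{\sqrt d}\sum_a\ket{\eta^z_a}=\ket{\Phi_U}$ with $\ket{\Phi_U}\coloneqq(\identity\otimes U)\ket{\Phi^+_d}$, the \emph{same} maximally entangled state for every $z$; hence $P_z\ket{\Phi_U}=\ket{\Phi_U}$ and $\ket{\Phi_U}$ is an eigenvector of $W$ with eigenvalue $m$.

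Next I would reduce both inequalities to a single operator bound,
\[ W \preceq \mathcal{T}(\mathcal{C})\,\identity + \big(m-\mathcal{T}(\mathcal{C})\big)\ketbra{\Phi_U}{\Phi_U}. \]
Because $\ket{\Phi_U}$ is already an eigenvector of $W$, this block-diagonalizes and is equivalent to the single scalar claim $\lambda_2(W)\le\mathcal{T}(\mathcal{C})$, where $\lambda_2$ denotes the largest eigenvalue of $W$ on $\ket{\Phi_U}^{\perp}$. Granting it, taking the expectation in $\rho\subtiny{0}{0}{A\nl B}$ and using $\mathcal{F}(\rho\subtiny{0}{0}{A\nl B})\ge\langle\Phi_U|\rho\subtiny{0}{0}{A\nl B}|\Phi_U\rangle$ (as $\ket{\Phi_U}$ is maximally entangled) yields $\mathcal{S}_{d}\suptiny{1}{0}{(m)}\le\mathcal{T}+(m-\mathcal{T})\mathcal{F}$, which rearranges to the fidelity bound~\eqref{ineq:Fbound} (the $\max\{0,\cdot\}$ accounting for $\mathcal{F}\ge0$). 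The Schmidt-number bound~\eqref{eq:SchmidtRkBnd} then follows immediately by inserting the standard estimate $\mathcal{F}(\rho\subtiny{0}{0}{A\nl B})\le k/d$, valid for any state of Schmidt number at most $k$ (itself from convexity together with $|\langle\Phi|\psi\rangle|^2\le k/d$ for Schmidt-rank-$k$ pure states).

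It remains to prove $\lambda_2(W)\le\mathcal{T}=\min\{\lambda(\mathcal{C}),m\}$. The bound $\lambda_2\le m$ is immediate from $W\preceq m\,\identity$, so the work is to show $\lambda_2\le\lambda(\mathcal{C})$. Here I would take a unit top eigenvector $v\perp\ket{\Phi_U}$ with eigenvalue $\mu=\lambda_2$ and expand $\mu^2=\langle v|W^2|v\rangle$ using $P_z^2=P_z$ to obtain the self-consistency relation $\mu(\mu-1)=\sum_{z\ne z'}\langle v|P_zP_{z'}|v\rangle$. Writing $\alpha^z_a=\langle\eta^z_a|v\rangle$, the right-hand side becomes $\sum_{z\ne z'}\sum_{a,a'}\overline{\alpha^z_a}\,|\langle e^z_a|e^{z'}_{a'}\rangle|^2\,\alpha^{z'}_{a'}$, subject to $\sum_a\alpha^z_a=0$ (from $v\perp\ket{\Phi_U}$), $\sum_a|\alpha^z_a|^2\le1$, and $\sum_z\sum_a|\alpha^z_a|^2=\mu$. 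The overlap matrices $C^{z,z'}=(|\langle e^z_a|e^{z'}_{a'}\rangle|^2)_{a,a'}$ are doubly stochastic, so the vanishing row and column sums allow me to recenter each entry by $c^{z,z'}_\text{min}$ without changing the quadratic form; a Cauchy--Schwarz/spectral estimate on the recentered, entrywise-nonnegative matrices then bounds the interference by $\tfrac{d}{2}\sum_{z\ne z'}G^{z,z'}=\lambda(\lambda-1)$, whence $\mu(\mu-1)\le\lambda(\lambda-1)$ and, since $\lambda\ge1$, monotonicity of $x\mapsto x(x-1)$ on $[1,\infty)$ gives $\mu\le\lambda$.

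The main obstacle is precisely this final estimate: turning the cross-basis interference $\sum_{z\ne z'}\langle v|P_zP_{z'}|v\rangle$ into exactly the combination $\tfrac d2\sum_{z\ne z'}G^{z,z'}$, where the specific appearance of $c^{z,z'}_\text{min}$ and of $\sum_{a,a'}|\langle e^z_a|e^{z'}_{a'}\rangle|^4$ inside $G^{z,z'}$ must be matched. The recentering by $c^{z,z'}_\text{min}$ and the double stochasticity of $C^{z,z'}$ are what make the bound sharp; as a consistency check, for mutually unbiased bases every overlap equals $1/d$, so $G^{z,z'}=0$, $\lambda=1$, $\mathcal{T}=1$, the $\bar P_z\coloneqq P_z-\ketbra{\Phi_U}{\Phi_U}$ become mutually orthogonal, and the bound collapses to the known value $\mathcal{B}_k=k(m-1)/d+1$, which also previews Corollary~\ref{cor:MUBoptimal}.
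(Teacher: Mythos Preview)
Your plan matches the paper's proof almost step for step: define $W$, note that $\ket{\Phi_U}$ is an eigenvector with eigenvalue $m$, reduce to bounding the second-largest eigenvalue via $W^2=W+\sum_{z\ne z'}P_zP_{z'}$, recenter by $c^{z,z'}_{\min}$ (the paper does this by splitting off $c^{z,z'}_{\min}\mathcal{T}^{z,z'}_1=c^{z,z'}_{\min}\,d\,\ketbra{\Phi_U}{\Phi_U}$), solve the resulting quadratic $\mu(\mu-1)\le\tfrac{d}{2}\sum G^{z,z'}$, and finish with $\mathcal{F}\le k/d$. Your observation $\langle\eta^z_a|\eta^{z'}_{a'}\rangle=|\langle e^z_a|e^{z'}_{a'}\rangle|^2$ is a clean way to see both that $\ket{\Phi_U}$ is common to all $P_z$ and that the whole analysis is phase-independent; the paper arrives at the same facts via the identity $A\otimes\identity\,\ket{\Phi^+_d}=\identity\otimes A^{T}\ket{\Phi^+_d}$.

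The one place your writeup is genuinely incomplete is exactly the step you flag as ``the main obstacle''. The paper does \emph{not} use a Cauchy--Schwarz or Frobenius-type spectral estimate on the recentered doubly-stochastic matrices; instead it uses the elementary rank-two operator bound
\[
\ketbra{\psi}{\phi}+\ketbra{\phi}{\psi}\ \preceq\ \bigl(|\langle\psi|\phi\rangle|+1\bigr)\,\identity
\]
(which the paper states as Proposition~\ref{prop:maxEig}) term by term on $\ketbra{\eta^z_a}{\eta^{z'}_{a'}}+\text{H.c.}$, with nonnegative weights $(C^{z,z'}_{a,a'}-c^{z,z'}_{\min})$. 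Summing $(C-c_{\min})(C+1)$ over $a,a'$ and using $\sum_{a'}C^{z,z'}_{a,a'}=1$ gives exactly $dG^{z,z'}$, hence $\mathcal{T}_2\preceq\tfrac{d}{2}\sum_{z\ne z'}G^{z,z'}\,\identity$. A generic Cauchy--Schwarz bound (e.g.\ via $\|\tilde C^{z,z'}\|_F$ or the row-sum bound $1-d\,c^{z,z'}_{\min}$) does not reproduce the specific combination $1-(d+1)c^{z,z'}_{\min}+\tfrac{1}{d}\sum_{a,a'}|\langle e^z_a|e^{z'}_{a'}\rangle|^4$, so this rank-two lemma is precisely the missing ingredient in your sketch.
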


We sketch the proof of Theorem~\ref{thm:SchmidtRkBound} here and refer to Sec.\;\ref{appendix:Thm1} for the formal proof. 
\begin{proof}[Proof sketch]
   We define the witness operator $W=\sum_{z}\sum_{a} |e^z_a\rangle\!\langle e^z_a|\otimes|\tilde{e}^z_a{}^*\rangle\!\langle\tilde{e}^z_a{}^*|$ so that $\tr(W\rho\subtiny{0}{0}{A\nl B})=\mathcal{S}_{d}\suptiny{1}{0}{(m)}(\rho\subtiny{0}{0}{A\nl B})$\@. We notice that $W\ket{\widetilde{\Phi}^+_d}= m\ket{\widetilde{\Phi}^+_d}$ where $\ket{\widetilde{\Phi}^+_d} \coloneqq (\identity_d\otimes U)\ket{\Phi^+_d}$\@. Since $W$ is positive semi-definite, it has a spectral decomposition. We then prove that all eigenvalues of $W$ in the subspace orthogonal to $\ket{\widetilde{\Phi}^+_d}$ are upper bounded by $\lambda(\mathcal{C})$ by showing that $W^2 \leq W + d\sum_{z\neq z'} (c^{z,z'}_\text{min} \ket{\widetilde{\Phi}^+_d}\!\!\bra{\widetilde{\Phi}^+_d} + \frac{1}{2}G^{z,z'}\identity_{d^2})$\@. Since the maximum eigenvalue of $W$ is $m$, we get $W \leq (m-\mathcal{T}(\mathcal{C}))\ket{\widetilde{\Phi}^+_d}\!\!\bra{\widetilde{\Phi}^+_d} + \mathcal{T}(\mathcal{C})\identity_{d^2}$\@. Using $\bra{\widetilde{\Phi}^+_d}\rho\subtiny{0}{0}{A\nl B}\ket{\widetilde{\Phi}^+_d}\leq\mathcal{F}(\rho\subtiny{0}{0}{A\nl B})\leq \frac{k}{d}$ for all $\rho\subtiny{0}{0}{A\nl B}$ with Schmidt number at most~$k$~\cite{TerhalHorodecki2000}, we obtain Ineqs.\;\eqref{eq:SchmidtRkBnd} and \eqref{ineq:Fbound}.
\end{proof}

Theorem~\ref{thm:SchmidtRkBound} implies that if the measured quantity $\mathcal{S}_{d}\suptiny{1}{0}{(m)}(\rho\subtiny{0}{0}{A\nl B})$ exceeds $\mathcal{B}_k$ for $1\leq k\leq d-1$, then the Schmidt number of $\rho\subtiny{0}{0}{A\nl B}$ must be at least $k+1$\@. 
For a given set of local measurement bases $\{\{\ket{e^z_a}\}_{a=0}^{d-1}\}_{z=1}^m$, parties A and B can maximize the certified Schmidt number by choosing $U$ (their relative reference frame) that maximizes $\mathcal{S}_{d}\suptiny{1}{0}{(m)}(\rho\subtiny{0}{0}{A\nl B})$ since the upper bound in Ineq.\;\eqref{eq:SchmidtRkBnd} is independent of $U$.
Notice that if all the measurement bases are MUBs, i.e., $|\langle e^z_a|e^{z'}_{a'}\rangle|^2=\frac{1}{d}\;\forall\;a,a',z\neq z'$, then $\mathcal{T}(\mathcal{C})=1$ and the bound in Ineq.\;\eqref{eq:SchmidtRkBnd} coincides with the one in Ref.\;\cite{MorelliHuberTavakoli2023}. Since $\mathcal{T}(\mathcal{C})\geq 1$ for any $m\geq2$, we immediately arrive at Corollary~\ref{cor:MUBoptimal}.

\begin{corollary}\label{cor:MUBoptimal}
    The bound $\mathcal{B}_k$ in Ineq.\;\eqref{eq:SchmidtRkBnd} is the tightest for all $k<d$ when the $m$ measurement bases are MUBs.
\end{corollary}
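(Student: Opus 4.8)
The plan is to regard $\mathcal{B}_k$ as a function of the single basis-dependent scalar $\mathcal{T}(\mathcal{C})$ and exploit monotonicity, so that finding the tightest (i.e.\ smallest) bound reduces to minimizing $\mathcal{T}(\mathcal{C})$ over admissible choices of the $m$ bases. First I would rewrite Eq.\;\eqref{eq:SchmidtRkBnd} by collecting the $\mathcal{T}(\mathcal{C})$-dependent terms,
\[
    \mathcal{B}_k = \frac{km}{d} + \left(1-\frac{k}{d}\right)\mathcal{T}(\mathcal{C}).
\]
Since the corollary restricts to $k<d$, the prefactor $1-k/d$ is strictly positive, so $\mathcal{B}_k$ is a strictly increasing function of $\mathcal{T}(\mathcal{C})$. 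Moreover, $\mathcal{T}(\mathcal{C})$ is manifestly the only factor in $\mathcal{B}_k$ that depends on the measurement bases (the integers $d$, $k$, $m$ being fixed by the experimental configuration). Hence the smallest achievable value of $\mathcal{B}_k$ corresponds to the smallest achievable value of $\mathcal{T}(\mathcal{C})$.

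Next I would establish the universal floor $\mathcal{T}(\mathcal{C})\geq 1$. This is immediate from the definition $\mathcal{T}(\mathcal{C})=\min\{\lambda(\mathcal{C}),m\}$ together with the inequality $\lambda(\mathcal{C})\geq 1$ already recorded in Theorem~\ref{thm:SchmidtRkBound} and the fact that $m\geq 2$, which force both entries of the minimum to be at least $1$.

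The remaining step is to show that MUBs saturate this floor. Substituting $|\langle e^z_a|e^{z'}_{a'}\rangle|^2 = 1/d$ for all $a,a'$ and $z\neq z'$ into the definition of $G^{z,z'}$ gives $c^{z,z'}_\text{min}=1/d$ and $\sum_{a,a'}|\langle e^z_a|e^{z'}_{a'}\rangle|^4 = d^2\cdot d^{-2} = 1$, so that
\[
    G^{z,z'} = 1 - (d+1)\frac{1}{d} + \frac{1}{d} = 0
\]
for every pair of bases. Therefore $\sum_{z\neq z'}G^{z,z'}=0$, whence $\lambda(\mathcal{C})=\tfrac{1}{2}(1+\sqrt{1})=1$ and $\mathcal{T}(\mathcal{C})=\min\{1,m\}=1$. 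Combining the three steps, MUBs attain the minimal value $\mathcal{T}(\mathcal{C})=1$ and hence, by the monotonicity of the first step, the tightest bound $\mathcal{B}_k$ for all $k<d$.

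This argument contains no genuinely hard step: the nontrivial analytic input—the inequality $\lambda(\mathcal{C})\geq 1$—is supplied by Theorem~\ref{thm:SchmidtRkBound}, so the corollary collapses to the elementary monotonicity observation plus the direct evaluation $G^{z,z'}=0$ for MUBs. The only point deserving a moment of care is verifying that the floor $\mathcal{T}(\mathcal{C})=1$ is attained \emph{exactly} at MUBs (via $G^{z,z'}=0$) rather than merely approached, which the explicit computation settles.
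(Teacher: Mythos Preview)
Your proposal is correct and follows essentially the same reasoning as the paper: the paper observes that $\mathcal{T}(\mathcal{C})=1$ for MUBs, that $\mathcal{T}(\mathcal{C})\geq 1$ in general (for $m\geq 2$), and that this immediately gives the corollary. Your version simply makes the monotonicity step $\mathcal{B}_k=km/d+(1-k/d)\mathcal{T}(\mathcal{C})$ and the computation $G^{z,z'}=0$ more explicit than the paper does, but the argument is the same.
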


In case we only have access to the set of maximum and minimum overlaps $\overline{\mathcal{C}}$ instead of all the overlaps $\mathcal{C}$, we can still bound the Schmidt number and the entanglement fidelity by loosening the bounds in Theorem~\ref{thm:SchmidtRkBound}. By maximizing the term $\sum_{a,a'}|\langle e^z_a|e^{z'}_{a'}\rangle|^4$ in Eq.\;\eqref{eq:SchmidtRkBnd} such that the overlaps are compatible with $\overline{\mathcal{C}}$, we obtain the following lemma which is proven in Sec.\;\ref{appendix:Lemma1}.

\begin{lemma}\label{lemma:SchmidtRkLoose}
For any bipartite state $\rho\subtiny{0}{0}{A\nl B}$ of equal local dimension $d$ and Schmidt number at most~$k$, it holds that
\begin{equation}
    \mathcal{S}_{d}\suptiny{1}{0}{(m)}(\rho\subtiny{0}{0}{A\nl B})\leq \frac{k(m-\overline{\mathcal{T}}(\overline{\mathcal{C}}))}{d}+ \overline{\mathcal{T}}(\overline{\mathcal{C}}) \eqcolon \overline{\mathcal{B}}_k, \label{eq:SchmidtRkBndLoose}
\end{equation}
where the upper bound $\overline{\mathcal{B}}_k$ depends on $d$, $k$, the number of measurement bases $m$, and a quantity $\overline{\mathcal{T}}(\overline{\mathcal{C}})$ which depends on the set of minimum and maximum bases overlaps. More specifically, $\overline{\mathcal{T}}(\overline{\mathcal{C}}) \coloneqq \min\{\overline{\lambda}(\overline{\mathcal{C}}), m\}$, $\overline{\lambda}(\overline{\mathcal{C}}) \coloneqq \frac{1}{2}\left(1+\sqrt{1+2d\sum_{z\neq z'}\overline{G}(c^{z,z'}_\text{max},c^{z,z'}_\text{min})}\right)\geq1$, $\overline{G}(c^{z,z'}_\text{max},c^{z,z'}_\text{min}) \coloneqq 1-(d+1)c^{z,z'}_\text{min} + \Omega^{z,z'}$, $\Omega^{z,z'} \coloneqq L^{z,z'}(c^{z,z'}_\text{max})^2 + (d-L^{z,z'}-1)(c^{z,z'}_\text{min})^2 + [1-L^{z,z'}c^{z,z'}_\text{max} - (d-L^{z,z'}-1)c^{z,z'}_\text{min}]^2$, and 
\begin{align}
    L^{z,z'} \coloneqq 
    \begin{cases} 
        \ \floor*{\frac{1-c^{z,z'}_\text{min}d}{c^{z,z'}_\text{max}-c^{z,z'}_\text{min}}} 
        &\text{if}\ \ c^{z,z'}_\text{max}>c^{z,z'}_\text{min}\,,\\
        \ d &\text{if}\ \ c^{z,z'}_\text{max} =c^{z,z'}_\text{min}\,.   
    \end{cases}
\end{align}
Furthermore, the entanglement fidelity of $\rho\subtiny{0}{0}{A\nl B}$ can be lower bounded as follows:
\begin{equation}\label{ineq:FboundLoose}
    \mathcal{F}(\rho\subtiny{0}{0}{A\nl B})\geq \max\left\{0, \frac{\mathcal{S}_{d}\suptiny{1}{0}{(m)}(\rho\subtiny{0}{0}{A\nl B})-\overline{\mathcal{T}}(\overline{\mathcal{C}})}{m-\overline{\mathcal{T}}(\overline{\mathcal{C}})}\right\} \eqcolon \overline{\mathcal{F}}_m.
\end{equation}
\end{lemma}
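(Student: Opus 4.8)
The plan is to reduce Lemma~\ref{lemma:SchmidtRkLoose} to Theorem~\ref{thm:SchmidtRkBound} by replacing the term $\sum_{a,a'}|\langle e^z_a|e^{z'}_{a'}\rangle|^4$ appearing in $G^{z,z'}$ with its maximum over all overlap configurations compatible with the given pair $(c^{z,z'}_\text{max},c^{z,z'}_\text{min})$. Inspecting the upper bound $\mathcal{B}_k$ in Ineq.~\eqref{eq:SchmidtRkBnd}, one sees it is monotonically increasing in $\mathcal{T}(\mathcal{C})$ whenever $k<d$, and $\mathcal{T}(\mathcal{C})=\min\{\lambda(\mathcal{C}),m\}$ is monotonically increasing in each $G^{z,z'}$, which in turn is increasing in $\sum_{a,a'}|\langle e^z_a|e^{z'}_{a'}\rangle|^4$. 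Hence it suffices to show that $\Omega^{z,z'}$ is exactly this maximal value of the fourth-power sum, so that $\overline{G}(c^{z,z'}_\text{max},c^{z,z'}_\text{min})\geq G^{z,z'}$ and consequently $\overline{\mathcal{T}}(\overline{\mathcal{C}})\geq\mathcal{T}(\mathcal{C})$, $\overline{\mathcal{B}}_k\geq\mathcal{B}_k$. The fidelity bound~\eqref{ineq:FboundLoose} then follows from~\eqref{ineq:Fbound} by the same monotonicity in $\overline{\mathcal{T}}(\overline{\mathcal{C}})$.

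The core of the argument is therefore the optimization problem: for each fixed pair $z\neq z'$, maximize $\sum_{a,a'}x_{a,a'}^2$ over the variables $x_{a,a'}\coloneqq|\langle e^z_a|e^{z'}_{a'}\rangle|^2$ subject to the box constraints $c^{z,z'}_\text{min}\leq x_{a,a'}\leq c^{z,z'}_\text{max}$ together with the unitarity constraints that the matrix $(x_{a,a'})_{a,a'}$ is doubly stochastic, i.e., every row and every column sums to~$1$. First I would relax to the single linear constraint $\sum_{a,a'}x_{a,a'}=d$ (coming from summing all $d$ row-normalizations) and observe that maximizing a convex function (a sum of squares) over a polytope is attained at a vertex. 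Because the objective is separable and convex, the maximum is achieved by pushing as many entries as possible to the upper endpoint $c^{z,z'}_\text{max}$ and the remainder to the lower endpoint $c^{z,z'}_\text{min}$, with at most one intermediate entry absorbing the residual so that the total sum stays fixed at~$d$. Counting how many entries can sit at $c^{z,z'}_\text{max}$ subject to $\sum x_{a,a'}=d$ gives the integer $L^{z,z'}=\floor*{\frac{1-c^{z,z'}_\text{min}d}{c^{z,z'}_\text{max}-c^{z,z'}_\text{min}}}$ per row (after accounting for normalization), and evaluating the sum of squares at this extremal configuration — $L^{z,z'}$ entries at $c^{z,z'}_\text{max}$, $(d-L^{z,z'}-1)$ entries at $c^{z,z'}_\text{min}$, and one leftover entry carrying the residual — reproduces exactly the three-term expression defining $\Omega^{z,z'}$. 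The degenerate case $c^{z,z'}_\text{max}=c^{z,z'}_\text{min}$ forces every entry to equal $1/d$, giving $L^{z,z'}=d$ and $\Omega^{z,z'}=d\cdot(1/d)^2$, consistent with the case split.

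The main obstacle I anticipate is rigorously justifying that the relaxed box-and-single-sum optimum is actually an \emph{upper} bound on the true quantity, since I deliberately drop the full doubly-stochastic structure (keeping only one aggregate linear constraint). This is sound because enlarging the feasible region can only increase the maximum, so the relaxed optimum safely overestimates $\sum_{a,a'}|\langle e^z_a|e^{z'}_{a'}\rangle|^4$; the resulting bound remains valid even though the extremal configuration need not itself correspond to a genuine pair of orthonormal bases. The secondary technical point is verifying the floor-function bookkeeping — ensuring the count $L^{z,z'}$ of maximal entries and the single residual entry are consistent with $\sum x_{a,a'}=d$ and that the residual lies in $[c^{z,z'}_\text{min},c^{z,z'}_\text{max}]$ — but this is a routine endpoint computation rather than a conceptual difficulty. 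Once $\Omega^{z,z'}$ is identified as the optimum, the lemma follows immediately from Theorem~\ref{thm:SchmidtRkBound} by the monotonicity chain described above.
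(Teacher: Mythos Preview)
Your overall strategy matches the paper's exactly: reduce Lemma~\ref{lemma:SchmidtRkLoose} to Theorem~\ref{thm:SchmidtRkBound} by showing $\overline{\mathcal T}(\overline{\mathcal C})\geq\mathcal T(\mathcal C)$, which in turn reduces to bounding $\tfrac{1}{d}\sum_{a,a'}|\langle e^z_a|e^{z'}_{a'}\rangle|^4\leq\Omega^{z,z'}$. The paper isolates this as a separate optimization statement (Proposition~\ref{prop:NonCVXopt}) and proves it via the KKT conditions of Lemma~\ref{lemma:nonlin}; your idea of passing to the variables $x_{a,a'}=|\langle e^z_a|e^{z'}_{a'}\rangle|^2$ so that the objective becomes a convex sum of squares over a polytope, and then invoking ``convex maximum at a vertex'', is genuinely more elementary than the paper's Lagrangian analysis and would shorten that step considerably.

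There is, however, a real gap in the relaxation you describe. You say you relax to the \emph{single} constraint $\sum_{a,a'}x_{a,a'}=d$, but the extremal vertex of that polytope does \emph{not} give $d\,\Omega^{z,z'}$: it has roughly $dL^{z,z'}$ entries at $c_{\max}$ and only \emph{one} residual entry overall, whereas $d\,\Omega^{z,z'}$ corresponds to $d$ residual entries (one per row). A quick check with, say, $d=3$, $c_{\min}=0$, $c_{\max}=0.4$ shows the single-aggregate maximum ($=1.16$) strictly exceeds $d\,\Omega^{z,z'}$ ($=1.08$), so the aggregate relaxation is too coarse to land on the formula in the lemma. The fix is precisely what the paper does implicitly: keep the $d$ row constraints $\sum_{a'}x_{a,a'}=1$ (drop only the column constraints). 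The problem then decouples into $d$ identical one-row subproblems, each of the form ``maximize $\sum_{a'}x_{a'}^2$ on the box $[c_{\min},c_{\max}]^d$ intersected with the hyperplane $\sum_{a'}x_{a'}=1$''. Your convexity/vertex argument applies verbatim to each row and yields exactly $\Omega^{z,z'}$ per row, hence $d\,\Omega^{z,z'}$ in total. With that correction the proof goes through and is cleaner than the paper's KKT route.
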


Similar to Theorem~\ref{thm:SchmidtRkBound}, if $\mathcal{S}_{d}\suptiny{1}{0}{(m)}(\rho\subtiny{0}{0}{A\nl B})>\overline{\mathcal{B}}_k$, the Schmidt number of $\rho\subtiny{0}{0}{A\nl B}$ must be at least $k+1$\@. Furthermore, if all the measurement bases are MUBs, i.e., $c^{z,z'}_\text{max}=c^{z,z'}_\text{min}=\frac{1}{d}\;\forall\;z,z'$, then $\overline{\mathcal{T}}(\overline{\mathcal{C}})=1$ and $\overline{\mathcal{B}}_k$ coincides with the bound in Ref.\;\cite{MorelliHuberTavakoli2023}. Since $\overline{\mathcal{T}}(\overline{\mathcal{C}})\geq 1$, MUBs give the tightest bounds $\overline{\mathcal{B}}_k$\@.\\[-5mm]

\subsection{Examples of witness violation}\label{sec:examples}
To illustrate that our method can verify Schmidt numbers and lower bound the entanglement fidelity, we first apply our witness and the fidelity bound to a standard benchmark for entanglement witnesses, i.e., isotropic states  $\rho\subtiny{0}{0}{A\nl B}\suptiny{1}{0}{\mathrm{iso}}= (1-p)\ket{\Phi^+_{d}}\!\!\bra{\Phi^+_{d}}+\frac{p}{d^2}\identity_{d^2}$, whose Schmidt number is $k+1$ if and only if the white-noise ratio $p$ satisfies $\frac{d(d-k-1)}{d^2-1} \leq p < \frac{d(d-k)}{d^2-1} \eqcolon p_\text{iso}\suptiny{1}{0}{(k)}$~\cite{TerhalHorodecki2000}. We compare this with the noise that our witness can tolerate until we can no longer witness the actual Schmidt number of $\rho\subtiny{0}{0}{A\nl B}\suptiny{1}{0}{\mathrm{iso}}$\@. Since
\begin{equation}\label{eq:Eg1}
    \mathcal{S}_{d}\suptiny{1}{0}{(m)}(\rho\subtiny{0}{0}{A\nl B}\suptiny{1}{0}{\mathrm{iso}}) = p\frac{m}{d} + (1-p)m,
\end{equation} 
for $\mathcal{S}_{d}\suptiny{1}{0}{(m)}(\rho\subtiny{0}{0}{A\nl B}\suptiny{1}{0}{\mathrm{iso}})$ to exceed the bound $\mathcal{B}_k$ in Theorem~\ref{thm:SchmidtRkBound}, the white-noise ratio must satisfy
\begin{equation}\label{ineq:noiseBnd}
    p < \frac{(m-\mathcal{T}(\mathcal{C}))(d-k)}{m(d-1)} \eqcolon p_{c,m}\suptiny{1}{0}{(k)}.
\end{equation}
In the case when $d+1$ MUBs exist and $m=d+1$, we see that $p_{c,m}\suptiny{1}{0}{(k)} = p_\text{iso}\suptiny{1}{0}{(k)}$ for all~$k$\@.\\[-3mm]

\begin{figure}[t!]
    \centering
    \includegraphics[width=\linewidth]{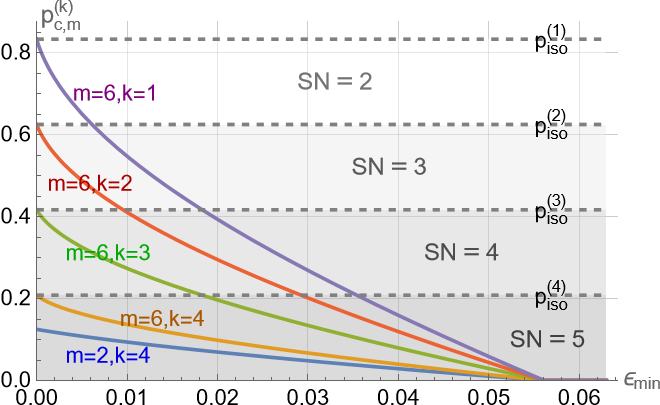}
    \caption{The upper bounds of the white-noise ratio, $p_{c,m}\suptiny{1}{0}{(k)}$ in Ineq.\;\eqref{ineq:noiseBnd}, for witnessing Schmidt number $k+1$ versus $\epsilon_\text{min} \coloneqq 1/d - c_\text{min}$ in local dimension $d=5$, where we set $c_\text{max} = 1-(d-1)c_\text{min}$\@. When $\epsilon_\text{min}=0$ (i.e., $c_\text{min}=1/d$ for MUBs), $p_{c,m}\suptiny{1}{0}{(k)}$ reaches its maximum, $(m-1)(d-k)/[m(d-1)]$\@. When $m=d+1=6$, it coincides with $p_\text{iso}\suptiny{1}{0}{(k)}$, the maximum white-noise ratio for $\rho\subtiny{0}{0}{A\nl B}\suptiny{1}{0}{\mathrm{iso}}$ having Schmidt number $k+1$\@. The noise tolerance of the witness is higher for larger $m$ or smaller~$k$ but reduces as $\epsilon_\text{min}$ increases, and eventually, when $\epsilon_\text{min} \geq 1/5-(7-\sqrt{17})/20\approx 0.0562$, we cannot witness non-trivial Schmidt numbers of $\rho\subtiny{0}{0}{A\nl B}\suptiny{1}{0}{\mathrm{iso}}$\@.}\label{fig:p_Bound}
\end{figure}
Suppose that we have the worst possible choice of measurement bases for a given $c_\text{min} \coloneqq \min_{z,z'} c^{z,z'}_\text{min}$ such that $c_\text{min} = c^{z,z'}_\text{min}$ and $c^{z,z'}_\text{max} = 1-(d-1)c_\text{min} \eqcolon c_\text{max}$ for all $z,z'$~\cite{footnote:Eg1WorstCase}. We can use the upper bound in Lemma~\ref{lemma:SchmidtRkLoose} by replacing $\mathcal{T}(\mathcal{C})$ in Ineq.\;\eqref{ineq:noiseBnd} with $\overline{\mathcal{T}}(\overline{\mathcal{C}})$, where $\overline{\lambda}(\overline{\mathcal{C}}) = \tfrac{1}{2}\bigl(1 + \sqrt{1+ 2dm(m-1) \overline{G}(c_\text{max},c_\text{min})}\bigr)$ in Ineq.\;\eqref{eq:SchmidtRkBndLoose}. To verify that our state has Schmidt number at least $k+1$, the bound $\overline{\mathcal{B}}_{k}$ must be violated. In Fig.\;\ref{fig:p_Bound}, the white-noise thresholds $p_{c,m}\suptiny{1}{0}{(k)}$ for witnessing the Schmidt number of $\rho\subtiny{0}{0}{A\nl B}\suptiny{1}{0}{\mathrm{iso}}$ in $d=5$ to be at least $k+1$ with $m$ measurement bases are plotted against the parameter $\epsilon_\text{min} \coloneqq \frac{1}{d} - c_\text{min}$, which quantifies the deviation of the measurement bases from MUBs. Note that the maximum value that $c_\text{min}$ can take is $\tfrac{1}{d}$ in order for $\sum_{a'}|\langle e^z_a|e^{z'}_{a'}\rangle|^2=1$ to hold for all $a,z$ and $z'$, so $\epsilon_\text{min}\in[0,\frac{1}{d}]$\@. When $\epsilon_\text{min}=0$, corresponding to measurements in MUBs, $p_{c,m}\suptiny{1}{0}{(k)}$ attains its maximum. In general, the witness can be violated under more white noise for smaller~$k$\@. In addition, as $p_{c,m}\suptiny{1}{0}{(k)}$ goes to zero when $\overline{\lambda}(\overline{\mathcal{C}})=m$, we see that whenever
\begin{equation}
    c_\text{min} \leq \frac{3\nr d-1-\sqrt{d^2+10\nr d-7}}{2d\nr (d-1)},\label{ineq:cminBoundForWitness}
\end{equation}
we cannot witness any Schmidt number $2\leq k+1\leq d$ of the state $\rho\subtiny{0}{0}{A\nl B}\suptiny{1}{0}{\mathrm{iso}}$ as $p_{c,m}\suptiny{1}{0}{(k)}=0$ for all $m\geq2$\@. To see how the Schmidt number is related to other standard measures of bipartite entanglement, we also compare the entanglement of formation (EoF) \cite{footnote:EoFiso}, the entanglement fidelity/singlet fraction (EF/SF), and the negativity \cite{footnote:NegativityIso} with the exact Schmidt number (SN) of the isotropic state of $d=5$ for different white-noise ratios $p$ in Fig.\;\ref{fig:NegEoF_iso}.\\[-3mm]

In Appendix \ref{app:isotropic}, we provide more elaborate analyses of our witness when applied to isotropic states. First, we show that the entanglement fidelity of $\rho\subtiny{0}{0}{A\nl B}\suptiny{1}{0}{\mathrm{iso}}$ satisfies Ineq.\;\eqref{ineq:Fbound} in Theorem~\ref{thm:SchmidtRkBound}. Then, we provide an example suggesting that our Schmidt-number witness can tolerate less bases bias in larger local dimensions. Finally, we compare the white-noise tolerance of our witness and our lower bound on the entanglement fidelity with the counterparts from Ref.\;\cite{BavarescoEtAl2018}.\\[-3mm]

To give a more comprehensive picture, we provide another example, i.e., the purified thermal states with white noise, in Appendix \ref{app:thermal} to demonstrate that our method also works in cases where the eigenvalues of the single-party reduced states are not degenerate and to compare the white-noise tolerance of our witness with that of Ref.\;\cite{BavarescoEtAl2018} in such cases. We also show that adding a third basis that is slightly biased with respect to two mutually unbiased measurement bases can increase our witness' tolerance to white noise in this example. On the other hand, adding a basis that is too biased with respect to the other bases could worsen our witnesses' performance due to an increased upper bound $\mathcal{B}_k$ ($\overline{\mathcal{B}}_k$) in Theorem~\ref{thm:SchmidtRkBound} (Lemma~\ref{lemma:SchmidtRkLoose}). We summarize this observation in the following remark.\\[-3mm]

\begin{remark}\label{remark1}
    There exist scenarios where an additional measurement basis improves the noise tolerance of our Schmidt-number witness. However, the opposite can also occur for certain choices of bases. Therefore, in order to witness the highest Schmidt number of a state, one should apply the witness inequality in Theorem~\ref{thm:SchmidtRkBound} or Lemma~\ref{lemma:SchmidtRkLoose} to all subsets of the total set of $m'$ available measurement bases and find the largest~$k$ such that $\mathcal{S}_{d}\suptiny{1}{0}{(m)}(\rho)>\mathcal{B}_k$ or $\overline{\mathcal{B}}_k$ when evaluated over all subsets of $m$ chosen bases for all $m\in\{2,\ldots,m'\}$\@.
\end{remark}
\begin{figure}[t!]
    \centering
    \includegraphics[width=\linewidth]{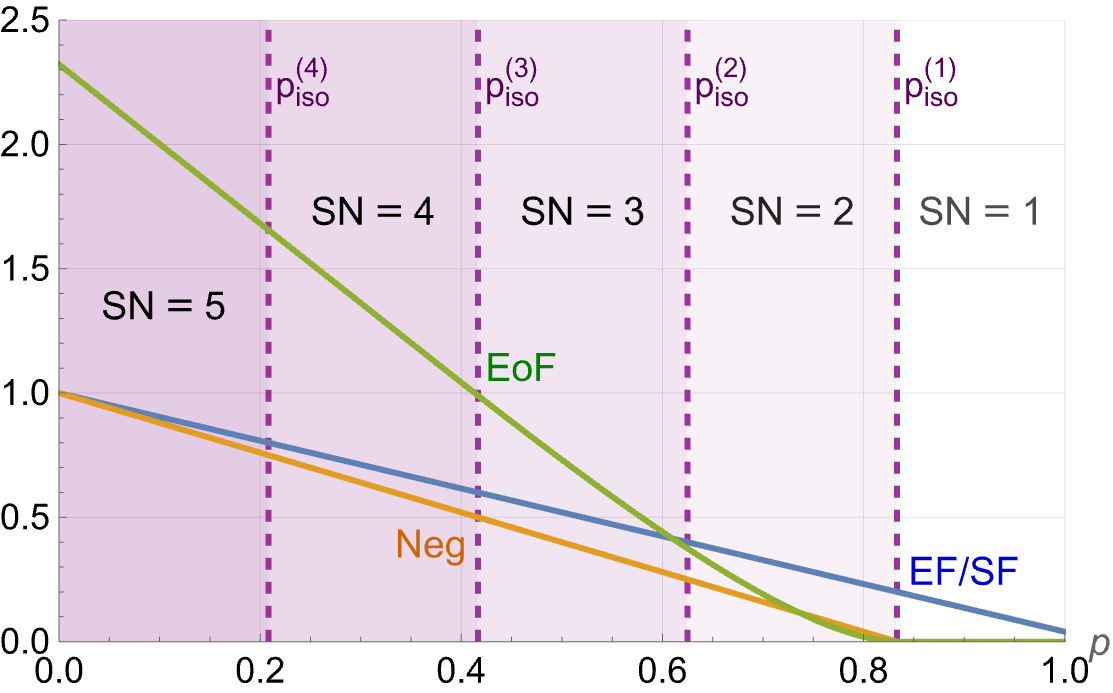}
    \caption{The entanglement of formation (EoF), the entanglement fidelity/singlet fraction (EF/SF), the negativity (Neg), and the exact Schmidt number (SN) of the isotropic state $\rho\subtiny{0}{0}{A\nl B}\suptiny{1}{0}{\mathrm{iso}}$ of $d=5$ are plotted for different white-noise ratios $p$.}\label{fig:NegEoF_iso}
\end{figure}

The intuition behind Remark~\ref{remark1} is that one can potentially certify a higher Schmidt number by post-selecting a subset of the total measurement data that achieves the optimal balance between showing the strongest measurement correlations and minimizing bias in the measurement bases. In practice, this can be realized easily by using as many local orthogonal measurement bases as possible for both parties and then calculating both the sum of expectation values $\mathcal{S}_{d}\suptiny{1}{0}{(m)}(\rho)$ and the bounds~$\mathcal{B}_k$ or~$\overline{\mathcal{B}}_k$ for each subset of measurement bases, using the corresponding subset of the full measurement data.\\[-3mm]

As a further remark, we observe that in $d=6$, adding a fourth basis to a set with three MUBs decreases the noise tolerance for witnessing Schmidt numbers in isotropic states for a wide range of choices for the additional basis. Since it is widely believed that the maximum number of MUBs in $d=6$ is 3 \cite{ColomerMortimerFrerotFarkasAcin2022}, this observation could indicate that our witness performs best when the local measurement bases consist only of the maximal set of MUBs in the given local dimension and no other bases.\\[-3mm]

\subsection{Implication of concentration of measure}\label{sec:randomBases}
We have seen from Corollary~\ref{cor:MUBoptimal} that measuring in MUBs will give the best Schmidt-number witness. However, requiring all local measurement bases to be MUBs is experimentally demanding as it requires precise control over the relative phases among all measurement bases. In light of this practical difficulty, it is natural to ask, how likely will a set of measurement bases chosen uniformly at random in $\mathbbm{C}^d$ be close to being mutually unbiased? Using L\'{e}vy's lemma~\cite{MilmanSchechtman1986,Ledoux2001,HaydenLeungWinter2006}, a result from concentration of measure, we show that the likelihood of any two randomly chosen orthonormal bases to be biased decreases exponentially with the dimension $d$, i.e., for $\epsilon>0$,
\begin{equation}
    \text{Pr}\left\{\Bigl|\nr\bigl|\langle e^{z}_{a}\nr|\nr e^{z'}_{a'}\rangle\bigr|^2-\tfrac{1}{d}\Bigr|>\epsilon\right\} \leq 2 \exp\left(-\frac{d\nr \epsilon^2}{18\pi^3\ln2}\right),\label{ineq:bases_COM}
\end{equation}
for all $a,a'$ and $z\neq z'$\@. Therefore, in large dimensions $d$, random measurement bases are likely to be sufficient for our method to witness high-dimensional entanglement. The proof of Ineq.\;\eqref{ineq:bases_COM} can be found in Appendix \ref{app:COM_Levy}.\\[-5mm]

\subsection{Maximal number of orthonormal bases}
Intuitively, one cannot construct arbitrarily many orthonormal bases when the maximal and minimal bases overlaps are specified. For example, it was known that there cannot be more than $d+1$ MUBs in $\mathbbm{C}^d$ (where $c^{z,z'}_\text{max}=c^{z,z'}_\text{min}= \frac{1}{d}$)~\cite{WoottersFields1989}. By making a connection to the Welch bounds~\cite{Welch1974}, we can upper bound the number of orthonormal bases using the function $\lambda(\mathcal{C})$ defined in Theorem~\ref{thm:SchmidtRkBound} such that
\begin{small}
\begin{equation}\label{ineq:mBound}
    m \leq \frac{d+1}{2}\left(1+\sqrt{1+\frac{8\lambda(\mathcal{C})\bigl(\lambda(\mathcal{C})-1\bigr)}{d^2-1}}\right) \eqcolon \overline{m}_d,
\end{equation}
\end{small}which is proven in Appendix \ref{app:onbNumBound}. For MUBs, we have $\lambda(\mathcal{C})=1$, so Ineq.\;\eqref{ineq:mBound} becomes $m\leq d+1$, which agrees with the known upper bound~\cite{WoottersFields1989}. In general, the bound does not imply the existence of $\overline{m}_d$ orthonormal bases. For instance, the existence of $d+1$ MUBs for non-prime-power dimensions $d$ is still an open problem~\cite{DurtEnglertBengtssonZyczkowski2010}. In Appendix \ref{app:isotropic}, we use this bound to show that Ineq.\;\eqref{ineq:Fbound} is satisfied for our example.\\[-3mm]

\begin{smallboxtable}{A comparison of different methods for the verification of high-dimensional entanglement.}{CertifyHighDimEntMethods}
\begin{center}
\renewcommand{\arraystretch}{1.4}
\begin{tabular}{|>{\centering\arraybackslash}m{0.225\textwidth}||>{\arraybackslash}m{0.755\textwidth}|
}
\hline
\begin{small} \textbf{Methods} \end{small} 	&  \multicolumn{1}{c|}{\textbf{Features \& Experimental/Computational Requirements}}\\
\hline
  & \,  $\bullet$ Measure in at least two coordinated local (arbitrary) orthonormal bases \,\\[-3pt]
  {Our witness \text{\hspace{100pt}}} & \, $\bullet$ Only need to know the (minimum and maximum) absolute values of the \newline \text{ }\quad overlaps between the local measurement bases \,\\
\hline
Ref.\;\cite{BavarescoEtAl2018}'s witness & \, $\bullet$ Precise control over the absolute values \& the complex phases of the overlaps \newline \text{ }\quad between different local measurement bases and within each basis\,\\[-3pt] 
& \, $\bullet$ Measure in the computational basis $+$ at least one coordinated ``tilted" basis\,\\
\hline
 & \,  $\bullet$ Treat bases bias as imperfect implementations of measurements in MUBs   \,\\[-2pt]
SDP witness~\cite{MorelliYamasakiHuberTavakoli2022} & \, $\bullet$ Memory issue/long computational runtime for large dimensions \,\\[-2pt]
 & \,  $\bullet$ Efficient only for small dimensions in which case it is possible to obtain tighter \newline \text{ }\quad bounds compared to our witness  \,\\
\hline
Entropic uncertainty\newline relationship~\cite{BertaChristandlColbeckRenesRenner2010,DevetakWinter2005} & \, $\bullet$ Need to know the absolute values of the bases overlaps of only one party and \newline \text{ }\quad the classical entropies corresponding to the two parties' measurements\\[-2pt] \relax
[Eq.\;(17.135) in~\cite{BertlmannFriis2023}] & \,  $\bullet$ Can lower bound the distillable entanglement instead of the Schmidt number\,\\
\hline
& \,  $\bullet$ To witness Schmidt number without any measurement assumptions, require non-\newline \text{ }\quad trivial optimization over all possible local measurements and all states with\newline \text{ }\quad Schmidt number $\leq k$ for every local dimension, which is computationally costly\\[-2pt]
Generalized Bell & \,  $\bullet$ Certify lower Schmidt numbers than other methods can in general\\[-2pt]
\text{ } inequalities~\cite{CollinsGisinLindenMassarPopescu2002,DadaEtAl2011} \newline \text{ }\newline \text{ }\newline \text{ }\newline \text{ } & \,  $\bullet$ Standard approaches involve finding the largest eigenvalue corresponding to\newline \text{ }\quad eigenvectors with Schmidt rank $\leq k$ of the Bell operator \cite{BraunsteinMannRevzen1992} associated with\newline \text{ }\quad restricted measurement settings \cite{AcinDurtGisinLatorre2002,DadaEtAl2011} to keep optimization problems\newline \text{ }\quad numerically tractable \cite{footnote:BellIneq}\;$\rightarrow$ also require measurement assumptions and are\newline \text{ }\quad computationally feasible only for small dimensions\\
\hline
& \, $\bullet$ Independent of the relative reference frame between the two parties~\cite{footnote:CorrelationMatrixRFinv} \,\\[-2pt]
Correlation-matrix & \,  $\bullet$ Require sampling local unitaries randomly from (Haar measure or) $t$-designs,\newline \text{ }\quad where exact sampling is highly inefficient~\cite{NakataZhaoEtAl2021,footnote:ApproxTdesignSample}\\[-2pt]
norms from randomized measurements~\cite{WyderkaKetterer2023} & \,\;$\bullet$ Analytic bounds of the 2- and 4-norms of the correlation matrices corresponding\newline \text{ }\quad to states with Schmidt number $\leq k$ are known only for $k=2$\\[-2pt]
& \,\;$\bullet$ For certifying Schmidt number $\geq3$, require numerical optimizations which can\newline \text{ }\quad be computationally costly for large dimensions and the bounds can be loose \cite{footnote:CorrMatrixConstraints}\\
\hline
\end{tabular}\\
\end{center}
\end{smallboxtable}
\text{ }\\[-11mm]

\subsection{Simple construction of three MUBs in any dimension}\label{sec:3MUBsAnyD}
As a by-product of investigating the use of AMUBs for witnessing high-dimensional entanglement (see Appendix \ref{app:AMUBs}), we discover a construction of three MUBs that has a simple analytic form and works for any dimension $d\in\mathbbm{N}$\@. The nice feature about this is that it does not rely on (the tensor products of) the Wootters{\textendash}Fields bases~\cite{WoottersFields1989}, which inevitably requires knowing the prime factorization of the dimension. Since factorizing a large integer is assumed to be hard (at least before any quantum device can properly implement Shor's algorithm)~\cite{Shor1994} and the description of the Wootters{\textendash}Fields bases can be non-trivial for large prime powers~\cite{WoottersFields1989}, constructing MUBs with the tensor products of the Wootters{\textendash}Fields bases will require a certain amount of computational overhead, whereas our construction does not suffer from these problems despite having only three MUBs.\\[-3mm] 

The explicit form of our three-MUBs construction is stated in the following lemma and its proof can be found in Appendix \ref{app:3MUBsAnyD}.
\begin{lemma}\label{lemma:3MUBsAnyD}
    For any $d\in \mathbbm{N}$, the three orthonormal bases $\{\ket{e^1_a}=\ket{a}\}_{a=0}^{d-1}$, $\{\ket{e^2_a}\}_{a=0}^{d-1}$, and $\{\ket{e^3_a}\}_{a=0}^{d-1}$, with
    \begin{subequations}
    \begin{flalign}
        \ket{e^2_a} & =\frac{1}{\sqrt{d}}\sum_{j=0}^{d-1} e^{i2\pi\left[\frac{aj}{d}+f(j)\right]}\ket{j},  \label{eq:3MUBsA}\\
        \ket{e^3_a} & =\frac{1}{\sqrt{d}}\sum_{j=0}^{d-1} e^{i2\pi\left[\frac{(d-p^r) j^2}{2d}+\frac{aj}{d}+f(j)\right]}\ket{j},\label{eq:3MUBsB}
    \end{flalign}
    \end{subequations}
    where $f$ is any real-valued function, $r\in\mathbbm{N}\cup\{0\}$, and $p$ is any odd prime such that $gcd(d,p)=1$ and $d>p^r$, are mutually unbiased. The simplest example would be having $p^r=1$.
\end{lemma}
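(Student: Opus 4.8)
The plan is to verify the two defining properties in turn: that each of the three families is an orthonormal basis, and that every pair of distinct bases is mutually unbiased, i.e.\ $|\langle e^z_a|e^{z'}_{a'}\rangle|^2=1/d$ for $z\neq z'$. Orthonormality is immediate: in any same-basis inner product $\langle e^z_a|e^z_{a'}\rangle$ the phase function $f(j)$ and (for $z=3$) the quadratic term $(d-p^r)j^2/(2d)$ occur with opposite signs from the bra and the ket and cancel, leaving $\tfrac1d\sum_{j=0}^{d-1}e^{2\pi i(a'-a)j/d}=\delta_{a,a'}$ by the standard discrete-orthogonality (geometric-sum) identity. The unbiasedness of basis $1$ against bases $2$ and $3$ is equally direct, since $\langle a|e^2_{a'}\rangle$ and $\langle a|e^3_{a'}\rangle$ each consist of a single term of modulus $1/\sqrt d$, whose square is exactly $1/d$ independently of $f$.

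The substance of the lemma is the unbiasedness of basis $2$ against basis $3$. Here the $f(j)$ contributions again cancel, and writing $\beta\coloneqq a'-a$ one is left with
\[
\langle e^2_a|e^3_{a'}\rangle=\frac1d\sum_{j=0}^{d-1}e^{2\pi i\left[\frac{(d-p^r)j^2}{2d}+\frac{\beta j}{d}\right]}\eqcolon\frac1d\,S,
\]
so the claim reduces to showing that the quadratic Gauss sum $S$ satisfies $|S|=\sqrt d$ for every $\beta$. The key number-theoretic input is $\gcd(d-p^r,d)=\gcd(p^r,d)=1$, which holds because $p$ is prime with $\gcd(p,d)=1$; the hypotheses $d>p^r$ and $p$ odd guarantee, respectively, that the leading coefficient $d-p^r$ is a positive integer and that it carries the parity needed below.

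For odd $d$ the sum is elementary: then $d-p^r$ is even, so $k\coloneqq(d-p^r)/2$ is an integer with $\gcd(k,d)=1$ (as $2$ is invertible $\bmod\,d$), and the summand of $S$ depends only on $j\bmod d$. I would then expand $|S|^2=\sum_{j,j'}e^{2\pi i[k(j^2-j'^2)+\beta(j-j')]/d}$, substitute $j=j'+t$ over $\mathbb{Z}/d\mathbb{Z}$, and evaluate the inner sum over $j'$, which equals $d$ when $2kt\equiv0\pmod d$ and vanishes otherwise; since $\gcd(2k,d)=1$ this forces $t=0$ and yields $|S|^2=d$ at once. The even-$d$ case is the main obstacle: then $d-p^r$ is odd, the exponent genuinely lives modulo $2d$, the summand no longer descends to $\mathbb{Z}/d\mathbb{Z}$, and the substitution trick fails. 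There I would instead invoke the reciprocity theorem for generalized quadratic Gauss sums $G(A,B,C)=\sum_{n=0}^{C-1}e^{\pi i(An^2+Bn)/C}$ (applicable since $AC+B$ is even here) with $A=d-p^r$, $B=2\beta$, $C=d$; coprimality $\gcd(A,d)=1$ then gives $|S|=|G(A,2\beta,d)|=\sqrt d$. Equivalently one may quote the classical evaluation of such sums or check the residue classes $d\equiv2,0\pmod 4$ directly. Combining the cases gives $|\langle e^2_a|e^3_{a'}\rangle|^2=|S|^2/d^2=1/d$, completing the verification that the three bases are mutually unbiased.
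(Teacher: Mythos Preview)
Your proposal is correct and follows essentially the same route as the paper: trivial orthonormality and unbiasedness against the computational basis, then reduce $|\langle e^2_a|e^3_{a'}\rangle|$ to a quadratic Gauss sum and split into odd and even $d$, handling the latter via the reciprocity theorem for generalized Gauss sums with the same parameters $A=d-p^r$, $B=2\beta$, $C=d$. Two small remarks: for odd $d$ your direct $|S|^2$ expansion is more self-contained than the paper's appeal to a cited Gauss-sum lemma, while for even $d$ you compress the step after reciprocity---the paper makes explicit that the transformed sum $S(-d,-2\beta,d-p^r)$ is again a standard Gauss sum with odd modulus $d-p^r$ and leading coefficient $-d/2$, and it is the coprimality $\gcd(d/2,\,d-p^r)=1$ (not $\gcd(A,d)=1$) that finishes the evaluation.
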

Since the function $f$ (and to some degree, $p^r$) can be chosen freely, it can be optimized such that the relative phases in Eqs.\;\eqref{eq:3MUBsA} and \eqref{eq:3MUBsB} are more easily realizable for different experimental setups. This flexibility is particularly useful in cases where the dimension of the Hilbert space in which the experiment operates can change over time \cite{footnote:Advantage3MUBs} or the experiment has to measure different subsystems with distinct subspace dimensions at different times, as recalibration of the relative phases of our bases may require less drastic changes to the setup than using (the tensor product of) the Wootters-Fields bases \cite{footnote:RelPhaseChanges}.

\vspace*{-5mm}

\section{Discussion}
\vspace*{-2mm}
Our results provide a fresh perspective on the longstanding problem of detecting entanglement using only a few, potentially restricted, measurement settings. Specifically, we introduced a family of Schmidt-number witnesses based on correlations in at least two coordinated local orthonormal bases that can be chosen arbitrarily. We established analytic upper bounds for the corresponding witness operators when evaluated on any bipartite state with a Schmidt number of at most~$k$\@. The main advantage of our method is that the bounds depend solely on the absolute values of the overlaps between different measurement bases, but not on their relative phases, which are often inaccessible in experiments. These features of our witness simplify experimental requirements for certifying high-dimensional entanglement across many platforms. We demonstrated the effectiveness of our witness with two-qudit isotropic states and noisy purified thermal states. We also discussed the use of random measurement bases to witness Schmidt numbers. Finally, we compare our method with various existing approaches for certifying high-dimensional entanglement in Table~\ref{table:CertifyHighDimEntMethods}.\\[-3mm]

As Corollary~\ref{cor:MUBoptimal} suggests, one should aim at locally measuring in as many MUBs as possible to get the best performance of our witness. Sadly, the total number of MUBs is unknown for dimensions that are not prime powers~\cite{DurtEnglertBengtssonZyczkowski2010}. In fact, given the prime factorization of the dimension $d=\prod_j p_j^{n_j}$ with $p_j^{n_j}<p_{j+1}^{n_{j+1}}\;\forall\;j$, (tensor products of) the Wootters{\textendash}Fields construction only guarantees $p_1^{n_1}+1$ MUBs to exist~\cite{ColomerMortimerFrerotFarkasAcin2022}. Alternatively, if one can measure in bases that are nearly mutually unbiased, then one can construct $d+1$ AMUBs for any dimension $d$~\cite{ShparlinskiWinterhof2006}. We constructed Schmidt-number witnesses based on the AMUBs proposed in Ref.\;\cite{ShparlinskiWinterhof2006}, but we did not observe any advantage of measuring in $4\leq m\leq d+1$ AMUBs compared to measuring in three MUBs in the non-prime-powered dimensions $d=6,10,14,22$ (see Appendix \ref{app:AMUBs}). The discovery of the simple three-MUBs construction in any dimension (Lemma~\ref{lemma:3MUBsAnyD}) by modifying the AMUBs construction in Ref.\;\cite{ShparlinskiWinterhof2006} suggests that there could be other constructions of AMUBs that are more suitable for witnessing Schmidt numbers and we leave finding such bases as an open problem. Furthermore, given the flexibility of our three-MUBs construction, it may even contribute to answering a longstanding mathematical problem: Are there more than three MUBs in even, non-prime-power dimensions such as $d=6$?\\[-5mm]

\section{Methods}
\subsection{Proof of Theorem~\ref{thm:SchmidtRkBound}}\label{appendix:Thm1}

To prove Theorem~\ref{thm:SchmidtRkBound}, we start by stating the following propositions that we will need in the main proof.

\begin{proposition}
    It holds that $A\otimes\identity_d\ket{\Phi^+_d} = \identity_d\otimes A^T\ket{\Phi^+_d}$ for all $A\in M(\mathbbm{C}^d)$\@. 
    \label{prop:BellSymm}
\end{proposition}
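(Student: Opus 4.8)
The plan is to verify the identity $A\otimes\identity_d\ket{\Phi^+_d} = \identity_d\otimes A^T\ket{\Phi^+_d}$ by expanding both sides in the computational basis using the explicit form $\ket{\Phi^+_d}=\frac{1}{\sqrt{d}}\sum_{i=0}^{d-1}\ket{ii}$. This is the standard ``ricochet'' or ``transpose trick'' for maximally entangled states, and the cleanest route is a direct componentwise computation. First I would write $A=\sum_{k,l}A_{kl}\ketbra{k}{l}$ so that $A\otimes\identity_d$ acts only on the first tensor factor.

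Applying the left-hand side, I compute
\begin{align}
    (A\otimes\identity_d)\ket{\Phi^+_d}
    &= \frac{1}{\sqrt{d}}\sum_{i=0}^{d-1}(A\ket{i})\otimes\ket{i}
    = \frac{1}{\sqrt{d}}\sum_{i=0}^{d-1}\sum_{k=0}^{d-1}A_{ki}\ket{k}\otimes\ket{i}.
\end{align}
For the right-hand side, I use $(A^T)_{kl}=A_{lk}$ and let $A^T$ act on the second factor:
\begin{align}
    (\identity_d\otimes A^T)\ket{\Phi^+_d}
    &= \frac{1}{\sqrt{d}}\sum_{i=0}^{d-1}\ket{i}\otimes(A^T\ket{i})
    = \frac{1}{\sqrt{d}}\sum_{i=0}^{d-1}\sum_{k=0}^{d-1}(A^T)_{ki}\ket{i}\otimes\ket{k}.
\end{align}
Substituting $(A^T)_{ki}=A_{ik}$ and relabelling the summation indices $i\leftrightarrow k$ in this last expression then makes it coincide termwise with the left-hand side, which establishes the claim for all $A\in M(\mathbbm{C}^d)$.

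There is no real obstacle here, as the statement is a purely algebraic identity requiring only bookkeeping of indices; the only point demanding care is the relabelling step, where one must track which index is summed against the matrix entry and which labels the surviving basis ket, so that the transpose $(A^T)_{ki}=A_{ik}$ is applied consistently on the correct tensor factor. Since the identity is linear in $A$, it suffices in principle to check it on a basis of matrix units $\ketbra{k}{l}$, for which $(\ketbra{k}{l})^T=\ketbra{l}{k}$, and the general case follows by linearity; I would mention this as the conceptually simplest packaging of the argument.
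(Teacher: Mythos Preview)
Your proof is correct and follows essentially the same approach as the paper: both expand $A$ in the computational basis, act on $\ket{\Phi^+_d}=\tfrac{1}{\sqrt{d}}\sum_i\ket{ii}$, and compare the two sides term by term up to an index relabelling. The only difference is cosmetic (your extra remark on matrix units is a nice alternative packaging but not needed).
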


\begin{proof}
    Let $A=\sum_{i,j=0}^{d-1} A_{ij}\ket{i}\!\!\bra{j}$\@. Then,
    \begin{subequations}
    \begin{flalign}
        A\otimes\identity_d\ket{\Phi^+_d} &= \tfrac{1}{\sqrt{d}}\sum_{i,j,k=0}^{d-1} A_{ij}\ket{i}\!\!\langle j|k\rangle\otimes\ket{k}\\
        &= \tfrac{1}{\sqrt{d}}\sum_{i,j=0}^{d-1} A_{ij}\ket{i}\otimes\ket{j},\nonumber
\end{flalign}
\end{subequations}\addtocounter{equation}{-1}\vspace{-6mm}
\begin{subequations}\addtocounter{equation}{1}
\begin{flalign}
    \identity_d\otimes A^T\ket{\Phi^+_d} &= \tfrac{1}{\sqrt{d}}\sum_{i,j,k=0}^{d-1} A_{ji}\ket{k}\otimes\ket{i}\!\!\langle j|k\rangle\\
    &= \tfrac{1}{\sqrt{d}}\sum_{i,j=0}^{d-1} A_{ji}\ket{j}\otimes\ket{i},\nonumber
\end{flalign}
\end{subequations}
so we indeed have $A\otimes\identity_d\ket{\Phi^+_d} = \identity_d\otimes A^T\ket{\Phi^+_d}$\@.
\end{proof}

\begin{proposition}\label{prop:maxEig}
    Let $\ket{\psi}$ and $\ket{\phi}$ be normalized states. Then, the eigenvalues of $\ket{\psi}\!\!\bra{\phi}+\ket{\phi}\!\!\bra{\psi}$ are upper bounded by $|\langle\psi|\phi\rangle|+1$ and lower bounded by $-(|\langle\psi|\phi\rangle|+1)$\@.
\end{proposition}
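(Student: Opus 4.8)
The plan is to exploit that $M\coloneqq\ket{\psi}\bra{\phi}+\ket{\phi}\bra{\psi}$ is Hermitian and of rank at most two, so its spectral problem reduces to a $2\times 2$ computation on the subspace $V=\mathrm{span}\{\ket{\psi},\ket{\phi}\}$. Since the range of $M$ is contained in $V$, every eigenvector with nonzero eigenvalue lies in $V$, whereas the orthogonal complement $V^{\perp}$ is annihilated by $M$ and carries eigenvalue $0$, which trivially obeys the claimed bounds because $|\langle\psi|\phi\rangle|+1\ge 1>0>-(|\langle\psi|\phi\rangle|+1)$. Hence it suffices to diagonalise $M$ on $V$.

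First I would set $s\coloneqq\langle\psi|\phi\rangle$ and, assuming for the moment that $\ket{\psi}$ and $\ket{\phi}$ are not parallel, introduce an orthonormal basis $\{\ket{\psi},\ket{\psi_{\perp}}\}$ of $V$ with $\ket{\phi}=s\ket{\psi}+\sqrt{1-|s|^2}\,\ket{\psi_{\perp}}$. In this basis $M$ is represented by the real-symmetric matrix $\left(\begin{smallmatrix} 2\,\mathrm{Re}(s) & \sqrt{1-|s|^2}\\ \sqrt{1-|s|^2} & 0\end{smallmatrix}\right)$, whose characteristic equation $\mu^2-2\,\mathrm{Re}(s)\,\mu-(1-|s|^2)=0$ yields the two eigenvalues $\mu_{\pm}=\mathrm{Re}(s)\pm\sqrt{\mathrm{Re}(s)^2+1-|s|^2}$. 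The final step is the estimate: using $\mathrm{Re}(s)\le|s|$ together with $\mathrm{Re}(s)^2\le|s|^2$, so that $\sqrt{\mathrm{Re}(s)^2+1-|s|^2}\le 1$, I would conclude $\mu_{+}\le|s|+1$ and, symmetrically (from $\mathrm{Re}(s)\ge-|s|$), $\mu_{-}\ge-(|s|+1)$. The parallel case $|s|=1$ is handled separately or as a limit: there $M=(s+\bar s)\ket{\psi}\bra{\psi}$ has the single nonzero eigenvalue $2\,\mathrm{Re}(s)\in[-2,2]=[-(|s|+1),|s|+1]$.

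An alternative, basis-free route that I find cleaner to write up: for any normalised $\ket{v}$ one has $\langle v|M|v\rangle=2\,\mathrm{Re}\big(\langle\psi|v\rangle\,\overline{\langle\phi|v\rangle}\big)\le|\langle\psi|v\rangle|^2+|\langle\phi|v\rangle|^2=\langle v|\big(\ket{\psi}\bra{\psi}+\ket{\phi}\bra{\phi}\big)|v\rangle$, where the inequality is the arithmetic–geometric mean bound $2|ab|\le|a|^2+|b|^2$. The operator $\ket{\psi}\bra{\psi}+\ket{\phi}\bra{\phi}$ shares its nonzero spectrum with the Gram matrix $\left(\begin{smallmatrix}1&s\\ \bar s&1\end{smallmatrix}\right)$, whose eigenvalues are $1\pm|s|$; this gives $\lambda_{\max}(M)\le 1+|s|$ at once, and replacing $\ket{\phi}\mapsto-\ket{\phi}$, which leaves $|s|$ unchanged, yields the matching lower bound $\lambda_{\min}(M)\ge-(1+|s|)$.

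The only genuine subtlety in either route is the sharp estimate $\mathrm{Re}(s)\le|s|$ (equivalently, identifying the exact top eigenvalue $1+|s|$ of $\ket{\psi}\bra{\psi}+\ket{\phi}\bra{\phi}$ rather than a crude norm bound). Thus the minor obstacle is simply to avoid the loose triangle-inequality estimate $\|M\|\le\|\ket{\psi}\bra{\phi}\|+\|\ket{\phi}\bra{\psi}\|=2$ and instead retain the dependence on $|\langle\psi|\phi\rangle|$; everything else is a routine $2\times 2$ diagonalisation together with the care needed for the degenerate parallel case.
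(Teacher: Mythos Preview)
Your first approach is essentially identical to the paper's proof: both decompose $\ket{\phi}$ in an orthonormal frame $\{\ket{\psi},\ket{\psi_\perp}\}$, reduce $M$ to a $2\times 2$ matrix, obtain the eigenvalues $\mathrm{Re}(s)\pm\sqrt{1-\mathrm{Im}(s)^2}$ (your discriminant $\mathrm{Re}(s)^2+1-|s|^2$ simplifies to exactly this), and then apply the estimates $|\mathrm{Re}(s)|\le|s|$ and $\sqrt{1-\mathrm{Im}(s)^2}\le 1$. Your separate treatment of the parallel case is a small addition the paper leaves implicit. Your second, basis-free route via $2\,\mathrm{Re}(a\bar b)\le|a|^2+|b|^2$ and the Gram-matrix spectrum $1\pm|s|$ of $\ket{\psi}\bra{\psi}+\ket{\phi}\bra{\phi}$ is a genuinely different argument not in the paper; it is slightly slicker in that it bypasses the explicit eigenvalue computation and the parallel-case check, at the cost of not producing the exact eigenvalues of $M$ (which the paper does not actually need anyway).
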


\begin{proof}
    Let $\ket{\phi}=a\ket{\psi}+b\ket{\psi^\perp}$ such that $|a|^2+|b|^2=1$, $\langle\psi|\psi^\perp\rangle=0$, and $\langle\psi|\psi\rangle=1=\langle\psi^\perp|\psi^\perp\rangle$\@. Then, $\ket{\psi}\!\!\bra{\phi}+\ket{\phi}\!\!\bra{\psi} =(a+a^*)|\psi\rangle\!\langle\psi|+b^*|\psi\rangle\!\langle\psi^\perp|+b|\psi^\perp\rangle\!\langle\psi|$, which has eigenvalues $\lambda_\pm = \text{Re}(a)\pm\sqrt{1-\text{Im}(a)^2}$\@. Since $\text{Re}(a)\leq|a|$ and $\sqrt{1-\text{Im}(a)^2}\leq1$, we have $|\lambda_\pm| \leq |a|+1 = |\langle\psi|\phi\rangle|+1$\@.
\end{proof}

We are now ready to state the formal proof of Theorem~\ref{thm:SchmidtRkBound} which follows similar arguments for proving Result 1 in Ref.\;\cite{MorelliHuberTavakoli2023} except here, the reference frame of party B relative to party A's is fixed by an arbitrary unitary and we cannot assume that $|\langle e^z_a|e^{z'}_{a'}\rangle|^2=\frac{1}{d}$ for all $z\neq z'$\@.

\begin{proof}[Proof of Theorem~\ref{thm:SchmidtRkBound}]
Let us define the witness operator 
\begin{align}
    W &=\sum_{z=1}^m\sum_{a=0}^{d-1} |e^z_a\rangle\!\langle e^z_a|\otimes|\tilde{e}^z_a{}^*\rangle\!\langle\tilde{e}^z_a{}^*|
\end{align} 
so that $\tr(W\rho\subtiny{0}{0}{A\nl B})=\mathcal{S}_{d}\suptiny{1}{0}{(m)}(\rho\subtiny{0}{0}{A\nl B})$, where $\ket{\tilde{e}^z_a{}^*} = U\ket{e^z_a{}^*}$ and $U$ is a fixed unitary that is the same for all $a$ and $z$\@. Via the definition of $\ket{\tilde{e}^z_a{}^*}$, we have $\sum_{a=0}^{d-1} \bigl(|e^z_a\rangle\!\langle e^z_a|\otimes|\tilde{e}^z_a{}^*\rangle\!\langle\tilde{e}^z_a{}^*|\bigr)(\identity_d\otimes U)\ket{\Phi^+_d} = \sum_{a=0}^{d-1} (|e^z_a\rangle\!\langle e^z_a|\otimes U|e^z_a{}^*\rangle\!\langle e^z_a{}^*|)\ket{\Phi^+_d} = \sum_{a=0}^{d-1} |e^z_a\rangle\!\langle e^z_a|e^z_a\rangle\!\langle e^z_a|\otimes U\ket{\Phi^+_d} = (\identity_d\otimes U)\ket{\Phi^+_d} \eqcolon \ket{\widetilde{\Phi}^+_d}$, where we have used Proposition~\ref{prop:BellSymm} in the second step. Hence, we have $W\ket{\widetilde{\Phi}^+_d}= m\ket{\widetilde{\Phi}^+_d}$\@. Since $W$ is positive semi-definite, it has a spectral decomposition: $W = m\ket{\widetilde{\Phi}^+_d}\!\!\bra{\widetilde{\Phi}^+_d} + \sum_{i=2}^{d^2}\lambda_i\ket{\lambda_i}\!\!\bra{\lambda_i}$ where all normalized eigenvectors $\ket{\lambda_i}$ are orthogonal to $\ket{\widetilde{\Phi}^+_d}$\@.\\[-3mm]

Next, we will derive an upper bound for all the eigenvalues $\lambda_i$ with $i=2,\ldots,d^2$\@. To do this, we consider the operator
\begin{flalign}
    W^2 &= W + \sum_{z\neq z'}\sum_{a,a'} |\langle e^z_a|e^{z'}_{a'}\rangle|^2 |e^z_a \tilde{e}^z_a{}^*\rangle\!\langle e^{z'}_{a'} \tilde{e}^{z'}_{a'}{}^*|\nonumber\\
    &= W + \sum_{z\neq z'} c^{z,z'}_\text{min} \mathcal{T}^{z,z'}_1 + \mathcal{T}_2\,,\label{eq:Wsqr}
\end{flalign}
with $\mathcal{T}^{z,z'}_1 = \sum_{a,a'} |e^z_a \tilde{e}^z_a{}^*\rangle\!\langle e^{z'}_{a'} \tilde{e}^{z'}_{a'}{}^*| = d \ket{\widetilde{\Phi}^+_d}\!\!\bra{\widetilde{\Phi}^+_d}$ since $V\otimes V^*$ with $V=\sum_{j=0}^{d-1} \ket{e^z_j}\!\!\bra{j}$ is a symmetry of $\ket{\Phi^+_d}$ by Proposition~\ref{prop:BellSymm}, and 
\begin{flalign}
    \mathcal{T}_2 &= \sum_{z\neq z'}\sum_{a,a'}(|\langle e^z_a|e^{z'}_{a'}\rangle|^2-c^{z,z'}_\text{min}) |e^z_a \tilde{e}^z_a{}^*\rangle\!\langle e^{z'}_{a'} \tilde{e}^{z'}_{a'}{}^*|\label{eq:T2L1}\\
    &= \sum_{z\neq z'}\sum_{a,a'}(|\langle e^z_a|e^{z'}_{a'}\rangle|^2-c^{z,z'}_\text{min}) \frac{1}{2}(|e^z_a \tilde{e}^z_a{}^*\rangle\!\langle e^{z'}_{a'} \tilde{e}^{z'}_{a'}{}^*|+\text{H.c.}),\nonumber
\end{flalign}
where we use the fact that $|\langle e^z_a|e^{z'}_{a'}\rangle|=|\langle e^{z'}_{a'}|e^z_a\rangle|\;\forall\;a,a',z,z'$ and ``H.c." stands for the Hermitian conjugate of the previous term. Then, we use (i) $|\langle e^z_a|e^{z'}_{a'}\rangle|^2 - c^{z,z'}_\text{min}\geq 0$, (ii) $|e^z_a \tilde{e}^z_a{}^*\rangle\!\langle e^{z'}_{a'} \tilde{e}^{z'}_{a'}{}^*|+\text{H.c.}\leq \lambda_\text{max}(|e^z_a \tilde{e}^z_a{}^*\rangle\!\langle e^{z'}_{a'} \tilde{e}^{z'}_{a'}{}^*|+\text{H.c.})\identity_{d^2}$, where $\lambda_\text{max}(A)$ denotes the largest eigenvalue of $A$, (iii) $\lambda_\text{max}(\sum_i A_i)\leq \sum_i\lambda_\text{max}(A_i)$ for all $A_i\in \text{Herm}(\mathbbm{C}^d)$~\cite{Bhatia1996}, and (iv) Proposition~\ref{prop:maxEig} to obtain
\begin{flalign}
    \mathcal{T}_2 &\leq \frac{1}{2}\sum_{z\neq z'}\sum_{a,a'}(|\langle e^z_a|e^{z'}_{a'}\rangle|^2-c^{z,z'}_\text{min})\nonumber\\ 
    &\qquad\qquad\quad\times\lambda_\text{max}(|e^z_a \tilde{e}^z_a{}^*\rangle\!\langle e^{z'}_{a'} \tilde{e}^{z'}_{a'}{}^*|+\text{H.c.})\identity_{d^2}\nonumber\\
    &\leq \frac{1}{2}\sum_{z\neq z'}\sum_{a,a'}(|\langle e^z_a|e^{z'}_{a'}\rangle|^2-c^{z,z'}_\text{min})(|\langle e^z_a|e^{z'}_{a'}\rangle|^2+1)\identity_{d^2}.\label{eq:T2L3}
\end{flalign}
Finally, we use the equality $\sum_{a'=0}^{d-1} |\langle e^z_a|e^{z'}_{a'}\rangle|^2 = 1$ to get 
\begin{flalign}
    \mathcal{T}_2 &\leq \frac{1}{2}\sum_{z\neq z'}\left[\sum_{a}(1-c^{z,z'}_\text{min}-c^{z,z'}_\text{min}d) + \sum_{a,a'}|\langle e^z_a|e^{z'}_{a'}\rangle|^4\right]\identity_{d^2}\nonumber\\
    &= \frac{d}{2}\sum_{z\neq z'}\left[1-(d+1)c^{z,z'}_\text{min} + \frac{1}{d}\sum_{a,a'}|\langle e^z_a|e^{z'}_{a'}\rangle|^4\right]\identity_{d^2}\nonumber\\
    &\eqcolon \frac{d}{2}\sum_{z\neq z'}G^{z,z'}\identity_{d^2}.\label{eq:T2L5}
\end{flalign}
After combining Eqs.\;\eqref{eq:Wsqr} and \eqref{eq:T2L5}, we obtain
\begin{equation}
    W^2 \leq W + d\sum_{z\neq z'} (c^{z,z'}_\text{min} \ket{\widetilde{\Phi}^+_d}\!\!\bra{\widetilde{\Phi}^+_d} + \frac{1}{2}G^{z,z'}\identity_{d^2}). \label{ineq:W2}
\end{equation}

Since $W^2 = m^2\ket{\widetilde{\Phi}^+_d}\!\!\bra{\widetilde{\Phi}^+_d} + \sum_{i=2}^{d^2}\lambda_i^2\ket{\lambda_i}\!\!\bra{\lambda_i}$ and due to Ineq.\;\eqref{ineq:W2}, we have that $\lambda_i^2 \leq \lambda_i + \frac{d}{2}\sum_{z\neq z'}G^{z,z'}$ for $i=2,\ldots,d^2$, which implies that $\lambda_i \leq \frac{1}{2}\left(1+\sqrt{1+2d\sum_{z\neq z'}G^{z,z'}}\right) \eqcolon \lambda(\mathcal{C})$\@. Since $G^{z,z'}\geq0\;\forall\;z\neq z'$ [see Ineq.\;\eqref{eq:T2L3}], $\lambda(\mathcal{C})\geq1$\@. On the other hand, we know that $\lambda_\text{max}(W) = \lambda_\text{max}(\sum_{z=1}^m\sum_{a=0}^{d-1} |e^z_a\rangle\!\langle e^z_a|\otimes|\tilde{e}^z_a{}^*\rangle\!\langle\tilde{e}^z_a{}^*|)\leq \sum_z\lambda_\text{max}(\sum_a |e^z_a\rangle\!\langle e^z_a|\otimes|\tilde{e}^z_a{}^*\rangle\!\langle\tilde{e}^z_a{}^*|) = m$, so $\lambda_i \leq \mathcal{T}(\mathcal{C}) \coloneqq \min\{\lambda(\mathcal{C}), m\}$\@. Therefore, with $W\ket{\widetilde{\Phi}^+_d}= m\ket{\widetilde{\Phi}^+_d}$, we have that
\begin{equation}
    W \leq (m-\mathcal{T}(\mathcal{C}))\ket{\widetilde{\Phi}^+_d}\!\!\bra{\widetilde{\Phi}^+_d} + \mathcal{T}(\mathcal{C})\identity_{d^2}.
\end{equation}
Finally, we arrive at our bound in Ineq.\;\eqref{eq:SchmidtRkBnd} in Theorem~\ref{thm:SchmidtRkBound}:
\begin{flalign}
    \mathcal{S}_{d}\suptiny{1}{0}{(m)}(\rho\subtiny{0}{0}{A\nl B}) = \tr(W\rho\subtiny{0}{0}{A\nl B}) &\leq (m-\mathcal{T}(\mathcal{C}))\mathcal{F}(\rho\subtiny{0}{0}{A\nl B})+\mathcal{T}(\mathcal{C})\nonumber\\[2mm]
    &\leq \frac{k(m-\mathcal{T}(\mathcal{C}))}{d}+\mathcal{T}(\mathcal{C}),\label{ineq:Sbound}
\end{flalign}
where we have used the fact that $\bra{\widetilde{\Phi}^+_d}\rho\subtiny{0}{0}{A\nl B}\ket{\widetilde{\Phi}^+_d}\leq\mathcal{F}(\rho\subtiny{0}{0}{A\nl B})\leq \frac{k}{d}$ for all bipartite state $\rho\subtiny{0}{0}{A\nl B}$ of equal local dimension $d$ and Schmidt number at most~$k$~\cite{TerhalHorodecki2000}. The fidelity lower bound in Ineq.\;\eqref{ineq:Fbound} can also be obtained by rearranging the first line of Ineq.\;\eqref{ineq:Sbound} and is non-trivial only if $\mathcal{T}(\mathcal{C}) < \mathcal{S}_{d}\suptiny{1}{0}{(m)}(\rho\subtiny{0}{0}{A\nl B})$\@. Otherwise, we set $\mathcal{F}_m=0$ as $\mathcal{F}(\rho\subtiny{0}{0}{A\nl B})\geq 0$ always holds.
\end{proof}

\subsection{Proof of Lemma~\ref{lemma:SchmidtRkLoose}}\label{appendix:Lemma1}\vspace*{-1.5mm}

To prove Lemma~\ref{lemma:SchmidtRkLoose}, we need the following proposition which is proven in Sec.\;\ref{app:prop_NonCVXopt}.

\begin{proposition}\label{prop:NonCVXopt}
The optimal solution to the optimization problem: max $\sum_{i=1}^{d^2} x_i^4$ subject to $\sum_{i=jd+1}^{jd+d} x_i^2=1$ for $j=0,\ldots,d-1$, and $0\leq\sqrt{c_\text{min}}\leq x_i\leq\sqrt{c_\text{max}}\leq 1\;\forall\;i$ is $d \{L c_\text{max}^2 + (d-L-1)c_\text{min}^2 + [1-L c_\text{max} -(d-L-1)c_\text{min}]^2\}$, where 
\begin{align}
    L &= \begin{cases} \floor*{\frac{1-c_\text{min}d}{c_\text{max}-c_\text{min}}} & \text{if}\ \ c_\text{max}>c_\text{min},\\
    d &\text{if}\ \ c_\text{max}=c_\text{min}.\end{cases}
\end{align}
\end{proposition}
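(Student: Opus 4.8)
The plan is to exploit the separable and convex structure of the problem. First I would note that both the objective $\sum_{i=1}^{d^2} x_i^4$ and the constraints decouple into $d$ mutually independent blocks indexed by $j=0,\ldots,d-1$: block $j$ involves only the variables $x_{jd+1},\ldots,x_{jd+d}$, and these appear in no other constraint. Since all $d$ blocks are identical, the global maximum equals $d$ times the maximum of a single block. It therefore suffices to solve, after the substitution $u_k = x_k^2$ (which turns the quartic objective into a quadratic one and the bounds $\sqrt{c_\text{min}}\le x_k\le\sqrt{c_\text{max}}$ into $c_\text{min}\le u_k\le c_\text{max}$), the reduced problem: maximize $\sum_{k=1}^d u_k^2$ subject to $\sum_{k=1}^d u_k = 1$ and $c_\text{min}\le u_k\le c_\text{max}$.

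Second, I would observe that this reduced problem maximizes a strictly convex function over the compact convex polytope $P=\{u:\sum_k u_k=1,\ c_\text{min}\le u_k\le c_\text{max}\}$, so by the Bauer maximum principle the optimum is attained at an extreme point of $P$. The extreme points admit a clean description: $P$ lies in the hyperplane $\sum_k u_k=1$ (affine dimension $d-1$) and is otherwise cut out by the box constraints, so a vertex must saturate $d-1$ linearly independent box constraints, i.e.\ all but at most one coordinate sit at a bound. I would parametrize each vertex by the number $L$ of coordinates equal to $c_\text{max}$, with $d-L-1$ coordinates equal to $c_\text{min}$ and one leftover coordinate fixed by the equality to $r(L)=1-Lc_\text{max}-(d-L-1)c_\text{min}$; feasibility demands $c_\text{min}\le r(L)\le c_\text{max}$.

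Third, and this is the crux, I would prove that the vertex objective $g(L)=Lc_\text{max}^2+(d-L-1)c_\text{min}^2+r(L)^2$ is nondecreasing in $L$, so the optimum is the largest admissible $L$. Incrementing $L$ by one leaves all but two coordinates untouched and replaces the pair $\{r(L),c_\text{min}\}$ by the pair $\{c_\text{max},r(L+1)\}$ with $r(L+1)=r(L)-(c_\text{max}-c_\text{min})$; the two pairs have equal sum, and via $a^2+b^2=\tfrac12[(a+b)^2+(a-b)^2]$ the sign of $g(L+1)-g(L)$ is governed by the spread $|a-b|$, which increases precisely because $r(L)\le c_\text{max}$. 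Hence $g(L+1)\ge g(L)$ whenever both vertices are feasible. The largest $L$ compatible with $r(L)\ge c_\text{min}$ is $L=\floor*{\tfrac{1-c_\text{min}d}{c_\text{max}-c_\text{min}}}$, and I would check that this $L$ also respects $r(L)\le c_\text{max}$, since $L\ge\tfrac{1-c_\text{min}d}{c_\text{max}-c_\text{min}}-1$ is exactly the lower bound that guarantees it.

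Substituting this $L$ into $g(L)$ and multiplying by the block count $d$ then yields the stated optimum. Finally I would dispatch the degenerate case separately: if $c_\text{max}=c_\text{min}$, the equality $\sum_k u_k=1$ forces $c_\text{max}=c_\text{min}=1/d$ and every $u_k=1/d$, so one takes $L=d$ directly and the expression reduces consistently. The principal obstacle is the monotonicity step, i.e.\ showing that spreading the coordinates toward the extreme values with as many at $c_\text{max}$ as possible is optimal; everything else (separability, the convex-maximum-at-a-vertex principle, and the vertex description) is standard.
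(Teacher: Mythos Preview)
Your proof is correct and takes a genuinely different route from the paper. The paper also begins by observing that the problem decouples into $d$ identical blocks, but then attacks each block via the Karush--Kuhn--Tucker conditions: it writes out the Lagrangian, classifies regular versus irregular feasible points, and uses the stationarity condition to deduce that any regular local optimum has every coordinate equal to one of three values $\sqrt{c_\text{min}},\sqrt{\chi},\sqrt{c_\text{max}}$, before optimizing over the remaining discrete parameters; irregular points (where all coordinates sit at a bound) are then handled separately.

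Your argument is more elementary and more structural. The substitution $u_k=x_k^2$ linearizes the equality constraint and turns the objective into a strictly convex quadratic, so the Bauer maximum principle immediately localizes the optimum to a vertex of a simple polytope, and the vertex description (all but one coordinate at a bound) replaces the KKT case analysis in one stroke. Your monotonicity step, comparing two equal-sum pairs via $a^2+b^2=\tfrac12[(a+b)^2+(a-b)^2]$, is a clean way to pick out the largest feasible $L$ without any calculus. What the paper's approach buys is a self-contained derivation from first-order optimality, requiring no appeal to convex-analysis facts; what your approach buys is brevity and the avoidance of regularity bookkeeping.
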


\begin{proof}[Proof of Lemma~\ref{lemma:SchmidtRkLoose}]
We will prove that $\overline{\mathcal{T}}(\overline{\mathcal{C}})\geq \mathcal{T}(\mathcal{C})$ for all bases overlaps $\mathcal{C}$ so that 
\begin{align}
    \mathcal{S}_{d}\suptiny{1}{0}{(m)}(\rho\subtiny{0}{0}{A\nl B})
    &\leq\mathcal{B}_k=(1-\tfrac{k}{d})\mathcal{T}(\mathcal{C})+\tfrac{km}{d}\nonumber\\[1mm]
    &\leq (1-\tfrac{k}{d})\overline{\mathcal{T}}(\overline{\mathcal{C}})+\tfrac{km}{d}
\end{align} 
for all $k\leq d$ as in Ineq.\;\eqref{eq:SchmidtRkBndLoose}. This boils down to showing that $\sum_{a,a'}|\langle e^z_a|e^{z'}_{a'}\rangle|^4 \leq \Omega^{z,z'}d$ which implies $G^{z,z'}\leq \overline{G}(c^{z,z'}_\text{max},c^{z,z'}_\text{min}) \;\forall\;z,z'$\@.\\[-3mm]

For every pair of distinct bases $z,z'$, we want to maximize $\sum_{a,a'}|\langle e^z_a|e^{z'}_{a'}\rangle|^4$ given that $\sum_{a'}|\langle e^z_a|e^{z'}_{a'}\rangle|^2=1\;\forall\;a$ and $0\leq\sqrt{c_\text{min}}\leq |\langle e^z_a|e^{z'}_{a'}\rangle|\leq\sqrt{c_\text{max}}\leq 1\;\forall\;a,a'$\@. By setting $x_i=|\langle e^z_a|e^{z'}_{a'}\rangle|$ and $i=da+a'+1$, we can apply Proposition~\ref{prop:NonCVXopt} to obtain the maximum value, $\Omega^{z,z'}d$\@. Hence, $\overline{\mathcal{T}}(\overline{\mathcal{C}})\geq \mathcal{T}(\mathcal{C})$ and Ineq.\;\eqref{eq:SchmidtRkBndLoose} holds. Finally, the fidelity lower bound in Ineq.\;\eqref{ineq:FboundLoose} can be obtained in a similar fashion as in the proof of Theorem~\ref{thm:SchmidtRkBound}. The bound is non-trivial only if $\overline{\mathcal{T}}(\overline{\mathcal{C}}) < \mathcal{S}_{d}\suptiny{1}{0}{(m)}(\rho\subtiny{0}{0}{A\nl B})$\@. Otherwise, we set $\overline{\mathcal{F}}_m=0$ since $\mathcal{F}(\rho\subtiny{0}{0}{A\nl B})\geq 0$ always holds.
\end{proof}

\subsubsection{Proof of Proposition~\ref{prop:NonCVXopt}}\label{app:prop_NonCVXopt}

A general constrained optimization problem can be written in the following form~\cite{Bertsekas1999}.

\begin{problem}\label{prob:opt}
Let $\vec{x}\in \mathbbm{R}^d$\@. A constrained optimization problem can be written as
\begin{flalign}
\text{minimize}\quad& f(\vec{x})\nonumber\\
\text{subject to}\quad& h_1(\vec{x})=0,\ldots, h_m(\vec{x})=0,\\
& g_1(\vec{x})\leq0,\ldots,g_r(\vec{x})\leq0,\nonumber
\end{flalign}
where $f, h_i, g_j$ are functions mapping from $\mathbbm{R}^d$ to $\mathbbm{R}$\@. The feasible set $X\subset\mathbbm{R}^d$ is composed of all the $\vec{x}\in \mathbbm{R}^d$ that satisfy all the equality and inequality constraints. The associated Lagrangian of the problem is given by
\begin{equation}
    L(\vec{x},\vec{\lambda},\vec{\mu}) = f(\vec{x}) + \sum_{i=1}^m \lambda_i h_i(\vec{x}) + \sum_{j=1}^r \mu_j g_j(\vec{x}),
\end{equation}
where $\lambda_i,\mu_j\in \mathbbm{R}$ are Lagrange multipliers.
\end{problem}

It is useful to give the following definitions which we quote directly from Ref.\;\cite{Bertsekas1999} before stating Lemma~\ref{lemma:nonlin}.

\begin{definition}[Local minimum]
A vector $\vec{x}^*\in X$ is a local minimum of the objective function $f$ over the feasible set $X$ if there exists an $\epsilon>0$ such that $f(\vec{x}^*)\leq f(\vec{x})$ for all $\vec{x}\in X$ where $||\vec{x}-\vec{x}^*||<\epsilon$\@.
\end{definition}

\begin{definition}[Active constraints]
The set of active inequality constraints $A(\vec{x})$ at a point $\vec{x}\in X$ is the set of indices of the inequality constraints that are satisfied as equalities at $\vec{x}$, i.e., $A(\vec{x})=\{j|g_j(\vec{x})=0\}$\@.
\end{definition}

\begin{definition}[Regularity]
A feasible vector $\vec{x}$ is regular if the gradients of all the equality constraints $\nabla h_i(\vec{x}), i=1,\ldots,m$, and the gradients of all the active inequality constraints $\nabla g_j(\vec{x}), j\in A(\vec{x})$, are linearly independent.
\end{definition}

\begin{lemma}[Proposition~3.3.1 in Ref.\;\cite{Bertsekas1999}]
\label{lemma:nonlin}
Let $\vec{x}^*$ be a local minimum of Problem~\ref{prob:opt} where $f, h_i, g_j$ are continuously differentiable functions from $\mathbbm{R}^d$ to $\mathbbm{R}$, and assume that $\vec{x}^*$ is regular. Then, there exist unique Lagrange multiplier vectors $\vec{\lambda}^*=(\lambda_1^*,\ldots,\lambda_m^*)^T\in\mathbbm{R}^m$ and $\vec{\mu}^*=(\mu_1^*,\ldots,\mu_r^*)^T\in\mathbbm{R}^r$, such that
\begin{align}
&\nabla_x L(\vec{x}^*,\vec{\lambda}^*,\vec{\mu}^*)=\vec{0}\nonumber\\
&\mu_j^*\geq0, \quad j=1,\ldots,r,\\
&\mu_j^*=0, \quad \forall\;j\not\in A(\vec{x}^*),\nonumber
\end{align}
where $A(\vec{x}^*)$ is the set of active constraints at $\vec{x}^*$\@.
\end{lemma}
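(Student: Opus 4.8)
The plan is to follow the classical quadratic-penalty construction, which produces the multipliers as limits of the stationarity conditions of a sequence of smooth auxiliary problems and, crucially, builds in the correct sign automatically. First I would dispose of the inactive inequality constraints: for $j\notin A(\vec{x}^*)$ we have $g_j(\vec{x}^*)<0$, so by continuity $g_j<0$ on a neighborhood of $\vec{x}^*$; these constraints are locally irrelevant, and setting $\mu_j^*=0$ immediately yields the complementary-slackness requirement $\mu_j^*=0$ for $j\notin A(\vec{x}^*)$. It therefore suffices to produce $\vec{\lambda}^*$ and the active components $\mu_j^*$ ($j\in A(\vec{x}^*)$) satisfying the stationarity condition, with $\mu_j^*\ge0$ on the active set.

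The core device is the one-sided quadratic penalty together with a quadratic attraction term that isolates $\vec{x}^*$. For each $k\in\mathbbm{N}$ I would define
\[
F_k(\vec{x}) = f(\vec{x}) + \tfrac{k}{2}\sum_{i=1}^m h_i(\vec{x})^2 + \tfrac{k}{2}\sum_{j\in A(\vec{x}^*)} \bigl(\max\{0,g_j(\vec{x})\}\bigr)^2 + \tfrac{1}{2}\|\vec{x}-\vec{x}^*\|^2,
\]
and minimize it over a closed ball $\bar{B}=\{\vec{x}:\|\vec{x}-\vec{x}^*\|\le\epsilon\}$, with $\epsilon$ small enough that $\vec{x}^*$ is the unique minimizer of $f$ over the feasible points of $\bar{B}$ and the inactive constraints stay strictly inactive. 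A minimizer $\vec{x}^k$ exists by compactness. The penalty $\bigl(\max\{0,g_j\}\bigr)^2$ is the key: it is continuously differentiable with gradient $2\max\{0,g_j\}\,\nabla g_j$, keeping $F_k$ smooth, yet it penalizes only violation of $g_j\le0$, which is exactly what will force a nonnegative multiplier.

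The remaining steps are analytic. I would first show $\vec{x}^k\to\vec{x}^*$: the inequality $F_k(\vec{x}^k)\le F_k(\vec{x}^*)=f(\vec{x}^*)$ bounds the penalty and attraction terms, so the penalties force every limit point $\bar{\vec{x}}$ to be feasible, and then $f(\bar{\vec{x}})+\tfrac12\|\bar{\vec{x}}-\vec{x}^*\|^2\le f(\vec{x}^*)$ combined with local optimality $f(\bar{\vec{x}})\ge f(\vec{x}^*)$ forces $\bar{\vec{x}}=\vec{x}^*$. For large $k$, $\vec{x}^k$ lies in the interior of $\bar{B}$, so $\nabla F_k(\vec{x}^k)=\vec{0}$:
\[
\nabla f(\vec{x}^k)+\sum_{i=1}^m \lambda_i^k\,\nabla h_i(\vec{x}^k)+\sum_{j\in A(\vec{x}^*)}\mu_j^k\,\nabla g_j(\vec{x}^k)+(\vec{x}^k-\vec{x}^*)=\vec{0},
\]
where $\lambda_i^k=k\,h_i(\vec{x}^k)$ and $\mu_j^k=k\max\{0,g_j(\vec{x}^k)\}\ge0$. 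Passing to the limit requires boundedness of $(\vec{\lambda}^k,\vec{\mu}^k)$, which I would obtain by the normalize-and-contradict argument: if $\|(\vec{\lambda}^k,\vec{\mu}^k)\|\to\infty$, dividing by this norm and extracting a convergent subsequence produces a nontrivial vanishing linear combination of $\{\nabla h_i(\vec{x}^*)\}\cup\{\nabla g_j(\vec{x}^*):j\in A(\vec{x}^*)\}$ (the $\nabla f$ and $(\vec{x}^k-\vec{x}^*)$ terms drop out), contradicting regularity. Hence along a subsequence $\vec{\lambda}^k\to\vec{\lambda}^*$ and $\vec{\mu}^k\to\vec{\mu}^*$, giving $\nabla_x L(\vec{x}^*,\vec{\lambda}^*,\vec{\mu}^*)=\vec{0}$, and $\mu_j^k\ge0$ passes to $\mu_j^*\ge0$.

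I expect the sign condition $\mu_j^*\ge0$ and the boundedness of the multiplier sequence to be the main obstacle, and both are resolved precisely by the structure above: the one-sided penalty makes each $\mu_j^k$ nonnegative by construction, while regularity forces the multipliers to stay bounded and, in the limit, to be uniquely determined. Uniqueness of $\vec{\lambda}^*\in\mathbbm{R}^m$ and $\vec{\mu}^*\in\mathbbm{R}^r$ then follows at once: regularity says the representation of $-\nabla f(\vec{x}^*)$ as a linear combination of the linearly independent family $\{\nabla h_i(\vec{x}^*)\}\cup\{\nabla g_j(\vec{x}^*):j\in A(\vec{x}^*)\}$ is unique, and the inactive multipliers are pinned to zero, so the full multiplier vectors are determined uniquely.
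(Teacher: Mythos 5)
The paper does not actually prove this lemma: it is imported verbatim as Proposition~3.3.1 of Ref.~\cite{Bertsekas1999} and used as a black box in the proof of Proposition~\ref{prop:NonCVXopt}, so there is no in-paper proof to compare against. Your quadratic-penalty argument is correct and is essentially the canonical proof of this result given in that cited reference: the one-sided penalty $\bigl(\max\{0,g_j\}\bigr)^2$ makes the active multipliers nonnegative by construction, the attraction term $\tfrac{1}{2}\|\vec{x}-\vec{x}^*\|^2$ forces $\vec{x}^k\to\vec{x}^*$, and regularity yields both boundedness of the multiplier sequence (via the normalize-and-contradict step) and uniqueness of the limiting multipliers. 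One cosmetic point: local minimality does not make $\vec{x}^*$ the \emph{unique} minimizer of $f$ over the feasible points of $\bar{B}$ (consider constant $f$), but your argument never actually uses that claim — it only uses $f(\bar{\vec{x}})\geq f(\vec{x}^*)$ together with the attraction term, which is exactly what is needed.
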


\begin{proof}[Proof of Proposition~\ref{prop:NonCVXopt}]
We will first translate our optimization problem into the form in Problem~\ref{prob:opt}, where we have $f(\vec{x})=-\sum_{i=1}^{d^2} x_i^4$, $h_{j+1}(\vec{x})=\sum_{i=jd+1}^{jd+d} x_i^2-1$, $g_{2jd+k+1}(\vec{x}) = x_{jd+k+1} - \sqrt{c_\text{max}}$, and $g_{(2j+1)d+k+1}(\vec{x}) = \sqrt{c_\text{min}} - x_{jd+k+1}$ for $j,k=0,\ldots,d-1$, where $0\leq c_\text{min} \leq c_\text{max} \leq 1$\@. This is in fact a sum of $d$ independent optimization problems of identical form, which simplifies our whole optimization problem to:
\begin{flalign}
\text{minimize}\quad& f(\vec{x})=-d\sum_{i=1}^{d} \nonumber x_i^4\label{prob:Actual}\\
\text{subject to}\quad& h(\vec{x})=\sum_{i=1}^d x_i^2-1=0,\\
& g_{k}(\vec{x}) = x_{k} - \sqrt{c_\text{max}} \leq 0,\nonumber\\
& g_{d+k}(\vec{x}) = \sqrt{c_\text{min}} - x_{k}\leq0, \;k=1,\ldots,d,\nonumber
\end{flalign}
where we redefine the objective function $f(\vec{x})$\@. The associated Lagrangian is given by
\begin{align}
    L(\vec{x},\lambda,\vec{\mu}) = \sum_{i=1}^{d} [-d x_i^4 + \lambda x_i^2 + (\mu_i-\mu_{i+d})x_i]\nonumber\\ 
    - \lambda - \sum_{i=1}^{d} (\sqrt{c_\text{max}}\mu_i - \sqrt{c_\text{min}}\mu_{i+d}),
\end{align}
where $\lambda,\mu_j\in \mathbbm{R}$\@. Let us determine which $\vec{x}$ is regular by evaluating the following gradients:
\begin{equation}
    \nabla h(\vec{x}) = \sum_i 2x_i\ket{i},\; \nabla g_i(\vec{x})=\ket{i},\; \nabla g_{i+d}(\vec{x})=-\ket{i}.
\end{equation}
If $g_i(\vec{x})\leq 0$ is active (i.e., $x_i=\sqrt{c_\text{max}}$), then $g_{i+d}(\vec{x})\leq 0$ for the same $i$ must be inactive unless $c_\text{min}=c_\text{max}$~\cite{footnote:irregular}, and vice versa. Also, if $\sqrt{c_\text{min}}<x_i<\sqrt{c_\text{max}}$, then both $g_i(\vec{x})\leq 0$ and $g_{i+d}(\vec{x})\leq 0$ are inactive. Hence, for $\vec{x}^*$ to be regular, the following must hold:
\begin{enumerate}[(i)]
    \item \label{cond:reg_i} At least one component $x_i^*$ must satisfy $\sqrt{c_\text{min}}<x_i^*<\sqrt{c_\text{max}}$ so that both $g_i(\vec{x}^*)\leq 0$ and $g_{i+d}(\vec{x}^*)\leq 0$ are inactive.
    \item \label{cond:reg_ii} In case $c_\text{min}=0$, if at least one component $x_i^*=\sqrt{c_\text{min}}=0$ so that the $i$-th component of $\nabla h(\vec{x}^*)$ is 0 while $\nabla g_{i+d}(\vec{x}^*)=-\ket{i}$, then condition (i) is not necessary. If $x_i^*>0\;\forall\;i$, condition (i) must hold.
\end{enumerate}

Since the functions $f, h, g_i, g_{i+d}$ for all $i=1,\ldots,d$ are continuously differentiable, all regular local minima of Problem \eqref{prob:Actual} must satisfy Lemma~\ref{lemma:nonlin}. Let us evaluate
\begin{equation}
    \nabla_x L = \sum_{i=1}^d (-4d x_i^3 + 2\lambda x_i +\mu_i-\mu_{i+d})\ket{i},
\end{equation}
and consider the case where the $i$-th component of a regular local minimum $\vec{x}^*$ satisfies $\sqrt{c_\text{min}}<x_i^*<\sqrt{c_\text{max}}$\@. By Lemma~\ref{lemma:nonlin}, $\mu_i^*=\mu_{i+d}^*=0$, implying that $-4d (x_i^*)^3 + 2\lambda^* x_i^* =0$\@. Hence,
\begin{equation}\label{eq:lambdastar}
    \lambda^* = 2d(x_i^*)^2.
\end{equation}
Since $\lambda^*$ is a parameter independent of the index $i$, we can constrain all components of $\vec{x}^*$ that lie within the interval, $(\sqrt{c_\text{min}},\sqrt{c_\text{max}})$, with this common parameter. According to Eq.\;\eqref{eq:lambdastar}, $x_i^* = x_j^* = \sqrt{\lambda^*/(2d)}$ for all $i,j$ such that $\sqrt{c_\text{min}}<x_i^*, x_j^*<\sqrt{c_\text{max}}$\@. Hence, for any regular local minimum $\vec{x}^*$ of the optimization problem \eqref{prob:Actual}, each component $x_i^*$ can only take one of the three possible values: $\sqrt{c_\text{min}}, \sqrt{\chi}, \sqrt{c_\text{max}}$, where $\chi\in(c_\text{min},c_\text{max})$\@.\\[-3mm]

Therefore, we can translate our optimization problem of~\eqref{prob:Actual} into a much simplified form:
\begin{flalign}
\text{maximize}\quad& d [L c_\text{max}^2 + \overline{L}\chi^2 + (d-L-\overline{L})c_\text{min}^2] \nonumber\\
\text{subject to}\quad& L c_\text{max} + \overline{L}\chi + (d-L-\overline{L})c_\text{min}=1,\label{eq:Lconstraint}\\
& 0\leq L+\overline{L} \leq d, \quad L, \overline{L} \in \mathbbm{N}, \nonumber\\
& 0\leq c_\text{min}<\chi<c_\text{max}\leq 1,\nonumber
\end{flalign}
where we converted our problem back to a maximization problem. Clearly, the optimal solution can be obtained by maximizing $L$ while satisfying all constraints, including Eq.\;\eqref{eq:Lconstraint} which can be rearranged into:
\begin{equation}
    L=\frac{1-c_\text{min}d}{c_\text{max}-c_\text{min}} - \overline{L}\;\frac{\chi - c_\text{min}}{c_\text{max}-c_\text{min}}.
\end{equation}
Since the last fraction lies within the interval $(0,1)$, the maximum value allowed for $L$ is $\floor{\frac{1-c_\text{min}d}{c_\text{max}-c_\text{min}}}$, leaving either $\overline{L}=0$ if $\frac{1-c_\text{min}d}{c_\text{max}-c_\text{min}}\in\mathbbm{N}$, or $\overline{L}=1$ with $\chi=1- L c_\text{max}- (d-L-1)c_\text{min}$ otherwise.

Next, we consider the remaining cases: (1) regular local minima $\vec{x}^*$ with no components satisfying $\sqrt{c_\text{min}}<x_i^*<\sqrt{c_\text{max}}$ when $c_\text{min}=0$, and (2) non-regular points $\vec{x}$ with each component $x_i\in\{\sqrt{c_\text{min}}>0, \sqrt{c_\text{max}}\}$ [see conditions \eqref{cond:reg_i} and \eqref{cond:reg_ii}]. Note that the case where $c_\text{min}=c_\text{max}$ is included here~\cite{footnote:irregular}. Similar to the previous regular cases, we can simplify our optimization problem to:
\begin{flalign}
\text{maximize}\quad& d [L c_\text{max}^2 + (d-L)c_\text{min}^2] \nonumber\\
\text{subject to}\quad& L c_\text{max} + (d-L)c_\text{min}=1,\label{eq:LconstraintIrreg}\\
& 0\leq L\leq d, \; L\in \mathbbm{N},\; 0\leq c_\text{min}\leq c_\text{max}\leq 1.\nonumber
\end{flalign}
If $c_\text{min}<c_\text{max}$, the problem has a feasible optimum only if $L=\frac{1-c_\text{min}d}{c_\text{max}-c_\text{min}}\in\mathbbm{N}$\@. If $c_\text{min}=c_\text{max}$, the problem has a feasible solution only if $c_\text{min}=c_\text{max}=\frac{1}{d}$\@.\\[-3mm]

Finally, since the global optimum to our initial Problem \eqref{prob:Actual} is the minimum over the set composed of all feasible points fulfilling Lemma~\ref{lemma:nonlin} (a set containing all regular local minima) together with all irregular feasible solutions, we can conclude that the global optimum $\vec{x}^*$ satisfies: $x_i^*=\sqrt{c_\text{max}}$ for $i=1,\ldots,L$, $x_j^*=\sqrt{c_\text{min}}$ for $j=L+1,\ldots,d-1$, and $x_d^*=\sqrt{1-L c_\text{max} - (d-L-1)c_\text{min}}$, where 
\vspace*{-1.5mm}
\begin{align}
    L = \begin{cases} \floor*{\frac{1-c_\text{min}d}{c_\text{max}-c_\text{min}}} &\text{if}\ \ c_\text{max}>c_\text{min},\\
d &\text{if}\ \ c_\text{max}=c_\text{min}.\end{cases}
\end{align}
\end{proof}


\noindent\textbf{Acknowledgments}.\
We thank Dimpi Thakuria, Paul Erker, and Alexandra Bergmayr for insightful discussions about practical challenges in experimental entanglement certification. We also thank Matej Pivoluska for the helpful discussion regarding Lemma 1 in Ref.\;\cite{Herrera-ValenciaSrivastavPivoluskaHuberFriisMcCutcheonMalik2020} (Proposition \ref{prop:GaussSum}). In addition, we thank Christopher J.~Turner for pointing out that the phase freedom in our original 3-MUBs construction can be further generalized (Lemma~\ref{lemma:3MUBsAnyD}) and Richard Kueng for making us aware of the connection between our 3-MUBs construction and properties of stabilizer states in Refs.~\cite{Gross2006,KuengGross2015}. This research was funded in whole or in part by the Austrian Science Fund (FWF) [\href{https://doi.org/10.55776/P36478}{10.55776/P36478}]. For open access purposes, the author has applied a CC BY public copyright license to any author accepted manuscript version arising from this submission. 
We further acknowledge support 
from the Austrian Federal Ministry of Education, Science and Research via the Austrian Research Promotion Agency (FFG) through the flagship project FO999897481 (HPQC), the project FO999914030 (MUSIQ), and the project FO999921407 (HDcode) funded by the European Union{\textemdash}NextGenerationEU, from the European Research Council (Consolidator grant `Cocoquest' 101043705), and the Horizon-Europe research and innovation programme under grant agreement No 101070168 (HyperSpace).\\[-3mm]

\noindent\textbf{Data Availability}.\ 
Data sharing is not applicable to this article as no datasets were generated or analysed during the current study.\\[-3mm]

\noindent\textbf{Author Contributions}.\
N.\;K.\;H.\;L.~discovered and produced all the results in this work. N.\;F.~provided the ideas that led to the consideration of the examples with purified thermal states mixed with white noise, the analytic comparison of our witnesses with the ones of Ref.\;\cite{BavarescoEtAl2018}, and random measurement bases. M.\;H.~provided the idea that led to the study of applicability of AMUBs. All authors discussed the results and contributed to the writing of the final manuscript.\\[-3mm]

\noindent\textbf{Competing Interests}.\
The authors declare no competing interests.


\bibliographystyle{apsrev4-1fixed_with_article_titles_full_names_new}
\bibliography{Master_Bib_File}


\hypertarget{sec:appendix}
\appendix

\section*{Appendix: Supplemental Information}

\renewcommand{\thesubsubsection}{A.\Roman{subsection}.\arabic{subsubsection}}
\renewcommand{\thesubsection}{A.\Roman{subsection}}
\renewcommand{\thesection}{}
\setcounter{equation}{0}
\numberwithin{equation}{section}
\setcounter{figure}{0}
\renewcommand{\theequation}{A.\arabic{equation}}
\renewcommand{\thefigure}{A.\arabic{figure}}


\noindent This appendix, where we present additional details and explicit calculations that support our results, is structured as follows: in Appendix~\ref{appendix:Examples}, we provide additional analyses on the performance of our Schmidt-number witness and lower bound of the entanglement fidelity (or the singlet fraction \cite{BennettDiVincenzoSmolinWootters1996,HorodeckiMPR1999}) when applied to isotropic states and noisy purified thermal states. In the same section, we also compare the noise tolerance of our witness with that of Ref.\;\cite{BavarescoEtAl2018} when applied to the two types of states, but the full details of applying Ref.\;\cite{BavarescoEtAl2018}'s witness are deferred to Appendix~\ref{app:JessicaWitness}. In Appendix~\ref{app:COM_Levy}, we use L\'{e}vy's lemma to prove that any two bases from a set of orthonormal bases which are chosen uniformly at random have exponentially decreasing probability of being biased as the dimension increases. In Appendix~\ref{app:onbNumBound}, we prove Ineq.\;\eqref{ineq:mBound} which relates the maximal number of orthonormal bases to the function $\lambda(\mathcal{C})$ defined in Theorem~\ref{thm:SchmidtRkBound}. In Appendix~\ref{app:3MUBsAnyD}, we present the proof of Lemma~\ref{lemma:3MUBsAnyD} which states a new construction of three MUBs that has a simple analytic form and works for any dimension $d\in\mathbbm{N}$\@. Finally, in Appendix~\ref{app:AMUBs}, we explore whether using AMUBs has any advantage in witnessing the Schmidt number using our witness in dimensions where the maximum number of MUBs is unknown.


\subsection{Examples for violation of our Schmidt-number witness}\label{appendix:Examples}
\vspace*{-1.5mm}
In this appendix, we provide the full details of how we certify the Schmidt number and lower bound the entanglement fidelity of isotropic states in Appendix~\ref{app:isotropic} and the purified thermal states mixed with white noise in Appendix~\ref{app:thermal}.

\vspace*{-1.5mm}
\subsubsection{Isotropic states}\label{app:isotropic}
\vspace*{-1.5mm}
Let us first recall some properties of an isotropic state, i.e., a qudit Bell state mixed with a certain amount of white noise: $\rho\subtiny{0}{0}{A\nl B}\suptiny{1}{0}{\mathrm{iso}}= (1-p)\ket{\Phi^+_{d}}\!\!\bra{\Phi^+_{d}}+\frac{p}{d^2}\identity_{d^2}$\@. It has been shown that its Schmidt number is exactly $k+1$ if and only if the white-noise ratio satisfies $\frac{d(d-k-1)}{d^2-1} \leq p < 1-\frac{kd-1}{d^2-1} = \frac{d(d-k)}{d^2-1} \eqcolon p_\text{iso}\suptiny{1}{0}{(k)}$~\cite{TerhalHorodecki2000}. As mentioned in the main text,
\begin{equation}\label{eq:Eg1_app}
    \mathcal{S}_{d}\suptiny{1}{0}{(m)}(\rho\subtiny{0}{0}{A\nl B}\suptiny{1}{0}{\mathrm{iso}}) = p\frac{m}{d} + (1-p)m
\end{equation} 
and for it to exceed the bound $\mathcal{B}_k$ in Theorem~\ref{thm:SchmidtRkBound}, the white-noise ratio must satisfy
\begin{equation}\label{ineq:noiseBnd_app}
    p < \frac{(m-\mathcal{T}(\mathcal{C}))(d-k)}{m(d-1)} \eqcolon p_{c,m}\suptiny{1}{0}{(k)}.
\end{equation}
In the case when $d+1$ MUBs exist and $m=d+1$, it holds that $p_{c,m}\suptiny{1}{0}{(k)} = p_\text{iso}\suptiny{1}{0}{(k)}$ for all~$k$\@.\\[-3mm]

To illustrate that our lower bound for entanglement fidelity [Ineq.\;\eqref{ineq:Fbound} in Theorem~\ref{thm:SchmidtRkBound}] is valid, we will show that the inequality holds for isotropic states. As the entanglement fidelity of $\rho\subtiny{0}{0}{A\nl B}\suptiny{1}{0}{\mathrm{iso}}$ is given by
\begin{equation}
    \mathcal{F}(\rho\subtiny{0}{0}{A\nl B}\suptiny{1}{0}{\mathrm{iso}}) = 1-p+\frac{p}{d^2},
\end{equation}
it is easy to see that Ineq.\;\eqref{ineq:Fbound} holds whenever $m\leq \mathcal{T}(\mathcal{C})(d+1)$\@. When $\mathcal{T}(\mathcal{C})=m$, this inequality is trivially satisfied. When $\mathcal{T}(\mathcal{C})=\lambda(\mathcal{C}) < m$, the upper bound of orthonormal bases, $\overline{m}_d$, defined in Corollary~\ref{corollary:MaxBasesNum} satisfies $\overline{m}_d < \lambda(\mathcal{C})(d+1)$ for all $d\geq 2$\@. Therefore, we have $m\leq \overline{m}_d < \mathcal{T}(\mathcal{C})(d+1)$, implying that $\mathcal{F}(\rho\subtiny{0}{0}{A\nl B}\suptiny{1}{0}{\mathrm{iso}}) \geq \mathcal{F}_m$\@.

An important aspect that we should consider is how the performance of our Schmidt-number witness scales with the local dimension. With $m$ ``worst-case" measurement bases parametrized by $c_\text{min}$~\cite{footnote:Eg1WorstCase}, which corresponds to the same measurement settings of the main text's example, it must hold that
{\begin{small}
\begin{equation}
    c_\text{min} > \frac{3d-1-\sqrt{(d+1)^2+\frac{2(d-1)}{m(m-1)} \left\{\left[1\!+\!2m\left(\frac{p(d-1)}{d-k}\!-\!1\right)\right]^2\!\!-\!1\right\}}}{2d(d-1)} \label{ineq:BasesBiasTolerance_Iso}
\end{equation}
\end{small}}for our method to witness the Schmidt number of the isotropic state $\rho\suptiny{1}{0}{\mathrm{iso}}$ with white-noise ratio $p$ to be at least $k+1$\@.
In Fig.\;\ref{fig:BasesBiasnessVSdim}, we plot the suprema of the dimension-rescaled bases bias, $d\nr\epsilon_\text{min} \coloneqq 1- d\nr c_\text{min} \in [0,1]$, below which our method using two measurement bases can witness the Schmidt number of $\rho\suptiny{1}{0}{\mathrm{iso}}$ with $p=0.005$ to be at least $k+1$, against the local dimension $d$\@. As a benchmark, we compare these with the suprema of bases-bias tolerance for witnessing Schmidt numbers $2\leq k+1\leq d$ in the maximally entangled state (i.e., $p=0$), which turns out to be independent of~$k$ as one can see from Ineq.\;\eqref{ineq:BasesBiasTolerance_Iso} when $p=0$\@. At least for the worst-case bases choice, this figure and Ineq.\;\eqref{ineq:BasesBiasTolerance_Iso} suggest that our witness tolerates less and less bias in the measurement bases until it cannot certify any Schmidt number of isotropic states as the local dimension increases. We suspect that this is a general feature of our witness as the bounds $\mathcal{B}_k$ ($\overline{\mathcal{B}_k}$) in Theorem~\ref{thm:SchmidtRkBound} (Lemma~\ref{lemma:SchmidtRkLoose}) have a bases-bias-dependent term that grows at least in $\mathcal{O}(\sqrt{d})$, reducing the tolerance to bases bias in larger dimensions.\\[-3mm]

\begin{figure}[h!]
    \centering
    \includegraphics[width=\linewidth]{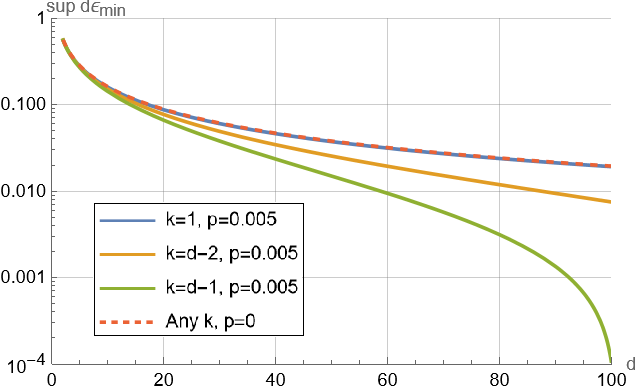}
    \caption{The suprema of the dimension-rescaled bases bias, $d\nr\epsilon_\text{min} \coloneqq 1- d\nr c_\text{min} \in [0,1]$, below which our method using 2 measurement bases can witness the Schmidt number of the isotropic state with $p=0.005$ to be at least $k+1$, are plotted on a log scale against the local dimension $d$\@. These are compared with the supremum of bases-bias tolerance for witnessing any Schmidt number $2\leq k+1\leq d$ in the maximally entangled state (i.e., $p=0$), represented by the dotted curve.}\label{fig:BasesBiasnessVSdim}
\end{figure}

Next, we will compare the performance of our witness with the one proposed in Ref.\;\cite{BavarescoEtAl2018} using the example of isotropic states $\rho\subtiny{0}{0}{A\nl B}\suptiny{1}{0}{\mathrm{iso}}$ \footnote{The Schmidt-number witness in Ref.\;\cite{BavarescoEtAl2018} is proven to be optimal only for all pure states and dephased maximally entangled states, $(1-p)\ket{\Phi^+_{d}}\!\!\bra{\Phi^+_{d}}+\frac{p}{d}\sum_{i=0}^{d-1} \ket{ii}\!\!\bra{ii}$\@.}. The main idea behind the Schmidt-number witness proposed in Ref.\;\cite{BavarescoEtAl2018} is summarized in Appendix~\ref{app:JessicaWitnessSummary}. The analytic expressions of the entanglement fidelity lower bound and the white-noise tolerance for each Schmidt number of the isotropic states $\rho\subtiny{0}{0}{A\nl B}\suptiny{1}{0}{\mathrm{iso}}$ are derived in Appendix~\ref{app:JessicaWitnessIsotropic}.\\[-3mm]

In Fig.\;\ref{fig:compareJessica_iso}, we fix the local dimension $d=5$ and plot the maximal white-noise ratios, $p_{c,m}\suptiny{1}{0}{(\nl k\nl\nl=\nl\nl4\nl)}$ and $\tilde{p}_{\text{iso},M}\suptiny{1}{0}{(\nl k\nl\nl=\nl\nl4\nl)}$, below which our witness and the witness in Ref.\;\cite{BavarescoEtAl2018} can certify $\rho\suptiny{1}{0}{\mathrm{iso}}\subtiny{0}{0}{A\nl B}$'s Schmidt number to be $k+1=5$, respectively. Note that $m$ denotes the total number of orthonormal measurement bases used in our setting, whereas $M$ denotes the total number of ``tilted" (potentially non-orthogonal) bases used in Ref.\;\cite{BavarescoEtAl2018}, which does not count the computational basis that they also need to measure in (see Appendix~\ref{app:JessicaWitnessSummary}). For isotropic states, the ``tilted" bases are orthonormal and form a set of MUBs (see Appendix~\ref{app:JessicaWitnessIsotropic}). Hence, we can make the association, $m=M+1$, for an intuitive comparison between Ref.\;\cite{BavarescoEtAl2018}'s and our methods. As the minimum bases overlap $c_\text{min}$ reduces from the maximum value $\frac{1}{d}=0.2$ (i.e., $\epsilon_\text{min}=\frac{1}{d}-c_\text{min}$ increases from 0), our white-noise threshold $p_{c,m}\suptiny{1}{0}{(\nl k\nl\nl=\nl\nl4\nl)}$ decreases. For $m=6$, $p_{c,m}\suptiny{1}{0}{(\nl k\nl\nl=\nl\nl4\nl)}$ coincides with Ref.\;\cite{BavarescoEtAl2018}'s thresholds $\tilde{p}_{\text{iso},M}\suptiny{1}{0}{(\nl k\nl\nl=\nl\nl4\nl)}$ at $\epsilon_\text{min} \approx 0.00065\;(M=1)$, $0.00040\;(M=2)$, $0.00026\;(M=3)$, $0.00015\;(M=4)$, and precisely 0 $(M=5)$\@. For $m=5$, $p_{c,m}\suptiny{1}{0}{(\nl k\nl\nl=\nl\nl4\nl)}$ only coincides with $\tilde{p}_{\text{iso},M=1}\suptiny{1}{0}{(\nl k\nl\nl=\nl\nl4\nl)}$ at $\epsilon_\text{min} \approx 0.00011$ and not for any $M\geq2$\@. Furthermore, $p_{c,m}\suptiny{1}{0}{(\nl k\nl\nl=\nl\nl4\nl)}$ does not coincide with $\tilde{p}_{\text{iso},M}\suptiny{1}{0}{(\nl k\nl\nl=\nl\nl4\nl)}$ for all $m\leq4$, $1\leq M\leq 5$ and $\epsilon_\text{min}\in[0,0.2]$\@.\\[-3mm]

\begin{figure}[h!]
    \centering\includegraphics[width=\linewidth]{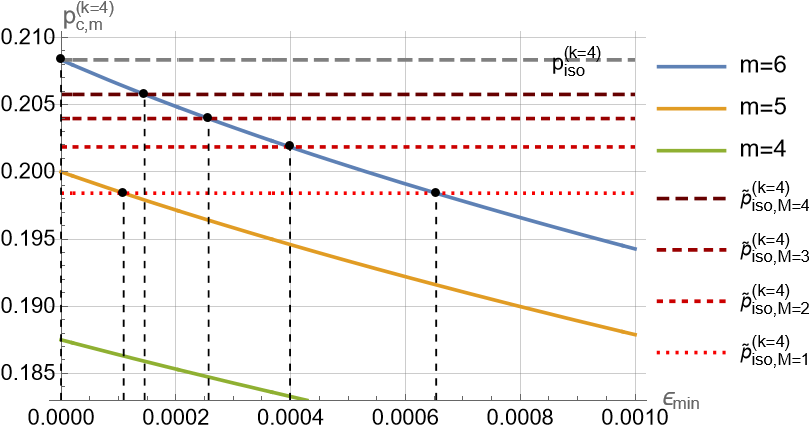}\caption{The maximal white-noise ratio, $p_{c,m}\suptiny{1}{0}{(\nl k\nl\nl=\nl\nl4\nl)}$, below which our witness can certify $\rho\suptiny{1}{0}{\mathrm{iso}}\subtiny{0}{0}{A\nl B}$'s Schmidt number to be $k+1=5$ for different number of measurement bases $m$ are plotted against the bases-bias parameter, $\epsilon_\text{min}\coloneqq 1/d -c_\text{min}$\@. For $m=5$ and 6, they coincide with the white-noise thresholds of Ref.\;\cite{BavarescoEtAl2018}, $\tilde{p}_{\text{iso},M}\suptiny{1}{0}{(\nl k\nl\nl=\nl\nl4\nl)}$, at different values of $\epsilon_\text{min}$ (labelled by black dots), where $M$ is the number of ``tilted" measurement bases used. For $m\leq 4$, $p_{c,m}\suptiny{1}{0}{(\nl k\nl\nl=\nl\nl4\nl)}$ are strictly smaller than $\tilde{p}_{\text{iso},M}\suptiny{1}{0}{(\nl k\nl\nl=\nl\nl4\nl)}$ for all $1\leq M\leq d$\@. The white-noise threshold $\tilde{p}_{\text{iso},M=5}\suptiny{1}{0}{(\nl k\nl\nl=\nl\nl4\nl)}$, which is not shown here, coincides with the white-noise ratio $p\suptiny{1}{0}{(\nl k\nl\nl=\nl\nl4\nl)}_\text{iso}$ for $\rho\suptiny{1}{0}{\mathrm{iso}}\subtiny{0}{0}{A\nl B}$ to have the maximum Schmidt number ($d=5$).}\label{fig:compareJessica_iso}
\end{figure}

We also compare our entanglement-fidelity lower bounds of the isotropic states $\rho\suptiny{1}{0}{\mathrm{iso}}$ with the bounds from Ref.\;\cite{BavarescoEtAl2018}. In Fig.\;\ref{fig:compareJessica_Fidelity_iso} we plot the square root of the relative differences between the actual entanglement fidelity of $\rho\suptiny{1}{0}{\mathrm{iso}}\subtiny{0}{0}{A\nl B}$ and our fidelity lower bounds, $\sqrt{\Delta\overline{\mathcal{F}}\suptiny{1}{-1.5}{\mathrm{iso}}_m} \coloneqq \sqrt{1 - \overline{\mathcal{F}}_m/\mathcal{F}(\rho\subtiny{0}{0}{A\nl B}\suptiny{1}{0}{\mathrm{iso}})}$ \footnote{Note that we plot $\sqrt{\Delta\overline{\mathcal{F}}\suptiny{1}{-1.5}{\mathrm{iso}}_m}$ instead of $\Delta\overline{\mathcal{F}}\suptiny{1}{-1.5}{\mathrm{iso}}_m$ for clearer presentation because the curves appear too close to one another when we plot $\Delta\overline{\mathcal{F}}\suptiny{1}{-1.5}{\mathrm{iso}}_m$ versus $p$\@.}, associated to measuring in $m$ orthonormal bases with different minimum overlaps $c_\text{min}$, and the equivalent square-root relative fidelity differences from Ref.\;\cite{BavarescoEtAl2018}, $\sqrt{\Delta\tilde{\mathcal{F}}\suptiny{1}{0}{(M)}} \coloneqq \sqrt{1 - \tilde{\mathcal{F}}\suptiny{1}{0}{(M)}/\mathcal{F}(\rho\subtiny{0}{0}{A\nl B}\suptiny{1}{0}{\mathrm{iso}})}$, for measuring in $M$ ``tilted" bases (which are MUBs in this example), against different white-noise ratios $p$\@.\\[-3mm]

Although Figs.\;\ref{fig:compareJessica_iso} and~\ref{fig:compareJessica_Fidelity_iso} suggest that our Schmidt-number witness and entanglement fidelity lower bound require more measurement bases than those in Ref.\;\cite{BavarescoEtAl2018} to achieve comparable performance, one should keep in mind that Ref.\;\cite{BavarescoEtAl2018} requires the measurement bases to be exact MUBs in this example or, in general, to be the ``tilted" bases with strict specifications of the relative phases. On the contrary, our method does not require the measurement bases to satisfy any relative phase relationship. In particular, we can still witness Schmidt number and lower bound the entanglement fidelity when measurements in MUBs are not experimentally feasible.\\[-3mm]

\begin{figure}[ht!]
    \centering
    \includegraphics[width=\linewidth]{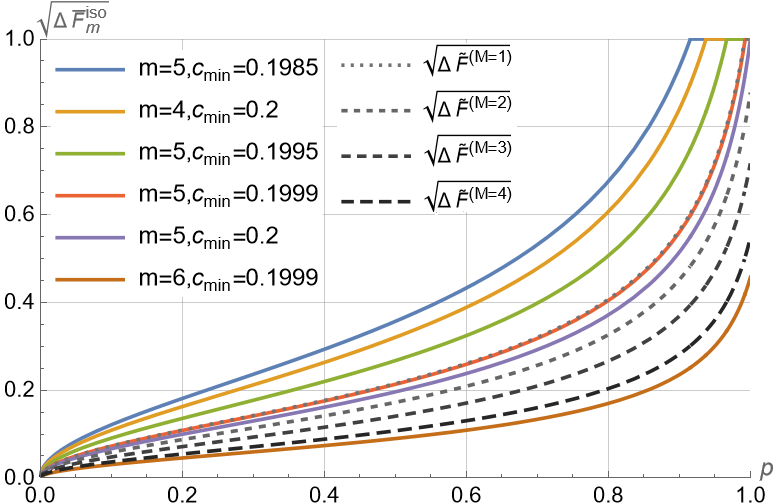}
    \caption{The square-root relative differences between the actual entanglement fidelity of $\rho\suptiny{1}{0}{\mathrm{iso}}\subtiny{0}{0}{A\nl B}$ and our fidelity lower bounds, $\sqrt{\Delta\overline{\mathcal{F}}\suptiny{1}{-1.5}{\mathrm{iso}}_m} \coloneqq \sqrt{1 - \overline{\mathcal{F}}_m/\mathcal{F}(\rho\subtiny{0}{0}{A\nl B}\suptiny{1}{0}{\mathrm{iso}})}$, associated to measuring in $m$ bases with different minimum overlaps $c_\text{min}$ are plotted against different white-noise ratios $p$\@. They are compared against the square-root relative differences between the actual entanglement fidelity and the fidelity lower bounds in Ref.\;\cite{BavarescoEtAl2018}, $\sqrt{\Delta\tilde{\mathcal{F}}\suptiny{1}{0}{(M)}} \coloneqq \sqrt{1 - \tilde{\mathcal{F}}\suptiny{1}{0}{(M)}/\mathcal{F}(\rho\subtiny{0}{0}{A\nl B}\suptiny{1}{0}{\mathrm{iso}})}$, for measuring in $M$ ``tilted" bases (which are coincidentally MUBs in this example). Both $\Delta\overline{\mathcal{F}}\suptiny{1}{0}{\mathrm{iso}}_{d+1}$ at $c_\text{min}=1/d$ and $\Delta\tilde{\mathcal{F}}\suptiny{1}{0}{(M=d)}$ for $\rho\suptiny{1}{0}{\mathrm{iso}}\subtiny{0}{0}{A\nl B}$ are zero, and they are omitted from the plot.}\label{fig:compareJessica_Fidelity_iso}
\end{figure}

Furthermore, Figs.\;\ref{fig:compareJessica_iso} and~\ref{fig:compareJessica_Fidelity_iso} may suggest that our method only tolerates very small bases bias. However, one should recall that we picked the worst-case bases choice for any given $c_\text{min}$~\cite{footnote:Eg1WorstCase}. In practice, measuring in any other bases associated with a fixed $c_\text{min}$ will give better noise tolerance (i.e., witnessing higher Schmidt number or providing higher fidelity lower bound).

\vspace*{-1.5mm}
\subsubsection{Purified thermal states mixed with white noise}\label{app:thermal}
\vspace*{-1.5mm}

In this example, we want to investigate (1) how the evenness of the eigenvalues of the one-party reduced density matrix $\rho\subtiny{0}{0}{A}$ affects the performance of our witness when applied to $\rho\subtiny{0}{0}{A\nl B}$, and (2) whether measuring in an additional basis that is not unbiased with respect to the other measurement bases which are MUBs improve the performance of our witness.\\[-3mm]
    
We consider the purified thermal states mixed with white noise parametrized by $p\in[0,1]$ and $\beta\in[0,\infty)$:
\begin{subequations}
\begin{flalign}
    &\rho_\text{th}(p,\beta) = (1-p)\ket{\psi_\text{th}(\beta)}\!\!\bra{\psi_\text{th}(\beta)} + \frac{p}{d^2}\identity_{d^2},\\
    &\ket{\psi_\text{th}(\beta)} = \frac{1}{\sqrt{\mathcal{Z}}}\sum_{n=0}^{d-1} e^{-\frac{\beta n}{2}}\ket{n}\otimes \ket{n},
\end{flalign}
\end{subequations}
where $\mathcal{Z}=\sum_{n=0}^{d-1} e^{-\beta n}$\@. The reduced density matrix is
\begin{equation}
    \rho_{\text{th},A}(p,\beta) = \sum_{n=0}^{d-1} \left[(1-p)\frac{e^{-\beta n}}{\mathcal{Z}} + \frac{p}{d}\right] \ket{n}\!\!\bra{n}\,, \label{eq:reducedDensity_th}
\end{equation}
which has uneven eigenvalues for $\beta>0$\@. These states are interesting since they are known to be not optimal for the Schmidt-number witness in Ref.\;\cite{BavarescoEtAl2018}. Also, by changing just one parameter $\beta$, $\ket{\psi_\text{th}(\beta)}$ goes from being the maximally entangled state to an entangled state with $d$ positive uneven Schmidt coefficients, and finally to the separable state $\ket{0}\otimes\ket{0}$ as $\beta$ goes from 0 to $\infty$\@.\\[-3mm]

For simplicity, we consider only odd-prime dimensions $d$ in this example. Suppose that there are three local measurement bases available in our experiment: $\{\ket{e^1_i}=\ket{i}\}_{i=0}^{d-1}$, $\{\ket{e^2_j}=\frac{1}{\sqrt{d}}\sum_{n=0}^{d-1} \omega^{jn} \ket{n}\}_{j=0}^{d-1}$ and $\{\ket{e^3_j}\}_{j=0}^{d-1}$ with
\begin{equation}
    \ket{e^3_j} = Z_\alpha(\theta)\sum_{k=0}^{d-1} \frac{\omega^{jk+k^2}}{\sqrt{d}} \ket{k}
\end{equation}
where $\omega=e^{i\frac{2\pi}{d}}$, $Z_\alpha(\theta)=\sum_{k=0}^{d-1} e^{i\delta_{\alpha,k}\theta}\ket{k}\!\!\bra{k}$ with $\alpha\in[d]\coloneqq\{0,\ldots,d-1\}$ and $\theta\in[0,2\pi)$, and $\delta_{\alpha,k}$ is the Kronecker delta. The first two bases and bases 1 and 3 are both pairwise mutually unbiased, but the three bases together do not form a set of MUBs except when $\theta=0$~\cite{Ivonovic1981}. Therefore, one can interpret them as an imperfect implementation of three MUBs with the third basis subject to a phase drift $Z_\alpha(\theta)$ relative to the second basis.\\[-3mm]

Let us first evaluate the expectation value of our witness operator [see Eq.\;\eqref{eq:witness}]
\begin{equation}
    \mathcal{S}_{d}\suptiny{1}{0}{(m)}(\rho_\text{th}) = (1-p)\sum_{z=1}^m\sum_{a=0}^{d-1} |\langle e^z_a,\tilde{e}^z_a{}^*|\psi_\text{th}(\beta)\rangle|^2 + \frac{mp}{d}. \label{eq:eg2_Smd1}
\end{equation}
For each matching pair of local measurement bases, we have
\begin{subequations}
\begin{flalign}
    \sum_{a=0}^{d-1}|\langle e^1_a,\tilde{e}^1_a{}^*|\psi_\text{th}(\beta)\rangle|^2 &= 1,\\
    \sum_{a=0}^{d-1}|\langle e^2_a,\tilde{e}^2_a{}^*|\psi_\text{th}(\beta)\rangle|^2 
    &= \sum_{a=0}^{d-1}|\langle e^3_a,\tilde{e}^3_a{}^*|\psi_\text{th}(\beta)\rangle|^2\label{eq:eg2_Smd4}\\
    &= \begin{cases}
        1 & \text{if}\ \ \beta=0,\\
        \frac{1}{d\mathcal{Z}}\left(\frac{1-e^{-d\beta/2}}{1-e^{-\beta/2}}\right)^2 & \text{if}\ \ \beta>0.
    \end{cases} \nonumber
\end{flalign}
\end{subequations}
If we combine Eqs.\;\eqref{eq:eg2_Smd1}--\eqref{eq:eg2_Smd4} and simplify the expression with $\mathcal{Z}=\sum_{n=0}^{d-1} e^{-\beta n} = \frac{1-e^{-d\beta}}{1-e^{-\beta}}$ for all $\beta>0$, $\tanh(x)=\frac{1-e^{-2x}}{1+e^{-2x}}$, and $\frac{1-e^{-x}}{1-e^{-x/2}} = 1+e^{-\frac{x}{2}}$, we get
\begin{subequations}
\begin{flalign}
    &\mathcal{S}_{d}\suptiny{1}{0}{(m)}(\rho_\text{th}) = (1-p)\tau\suptiny{1}{0}{(m)}_d(\beta) + \frac{mp}{d},
    \\
    &\tau\suptiny{1}{0}{(m)}_d(\beta) \coloneqq \begin{cases}
        m &\text{if}\ \ \beta=0,\\
        1+\frac{(m-1)\tanh\left(d\beta/4\right)}{d\tanh\left(\beta/4\right)} &\text{if}\ \ \beta>0,
    \end{cases}\label{eq:Smd_th}
\end{flalign}
\end{subequations}
depending on whether we use only the first two bases ($m=2$) or all three bases ($m=3$).\\[-3mm]

Our next step is to determine the upper bound of $\mathcal{S}_{d}\suptiny{1}{0}{(m)}$ for any bipartite state with Schmidt number at most~$k$\@. This requires us to calculate all the bases overlaps:
\begin{subequations}
\begin{flalign}
    &|\langle e^1_a|e^z_{a'}\rangle|^2 = \frac{1}{d}\;(=c_\text{min}^{1,z}= c_\text{max}^{1,z}),\quad\forall\;z\in\{2,3\},\label{eq:c^1z_thermal}\\
    &\langle e^2_a|e^3_{a'}\rangle = \frac{1}{d}\left[\omega^{(a'-a)\alpha+\alpha^2}(e^{i\theta}-1) + \sum_{k=0}^{d-1} \omega^{(a'-a)k+k^2}\right],\\
    &|\langle e^2_a|e^3_{a'}\rangle|^2 = \frac{1}{d^2}\left\{\left|\sum_{k=0}^{d-1} \omega^{(a'-a)k+k^2}\right|^2 + |e^{i\theta}-1|^2\right.\\
    &\qquad\qquad +\left. 2\text{Re}\left[\omega^{(a'-a)\alpha+\alpha^2}(e^{i\theta}-1)\sum_{k=0}^{d-1} \omega^{(a'-a)k+k^2}\right]\right\}\nonumber
\end{flalign}
\end{subequations}
for all $a,a'\in[d]$\@. With $-|y|\leq \text{Re}(y)\leq |y|$ for all $y\in\mathbbm{C}$ and a fact regarding quadratic Gauss sums~\cite{Ivonovic1981}, i.e.,
\begin{equation}
    \left|\sum_{k=0}^{d-1} \omega^{ak+k^2}\right| = \sqrt{d} \quad\forall\;a\in\mathbbm{Z} \text{ \& odd prime }d,
\end{equation}
we obtain the upper and lower bounds for $|\langle e^2_a|e^3_{a'}\rangle|^2$ as
\begin{flalign}
    c_\pm^{2,3} = \frac{1}{d^2}\left(\sqrt{d}\pm 2\left|\sin(\theta/2)\right|\right)^2,
    \label{eq:cmin_max_th}
\end{flalign}
such that $c_\text{min}^{2,3} = c_-^{2,3} \leq |\langle e^2_a|e^3_{a'}\rangle|^2 \leq c_+^{2,3} = c_\text{max}^{2,3}$\@. Then, applying the formula for $\overline{\mathcal{B}}_k$ in Lemma~\ref{lemma:SchmidtRkLoose}, we obtain for $m=2$ or 3, $\overline{\mathcal{T}}(\overline{\mathcal{C}}) \coloneqq \min\{\overline{\lambda}(\overline{\mathcal{C}}), m\}$ with
\begin{equation}
    \overline{\lambda}(\overline{\mathcal{C}}) = \frac{1}{2}\left(1+\sqrt{1+4d\delta_{m,3}\overline{G}\supscr{1.5}{-2}{2,3}}\right),
\end{equation}
with the shorthand notation $\overline{G}\supscr{1.5}{-2}{z,z'} \coloneqq \overline{G}(c^{z,z'}_\text{max},c^{z,z'}_\text{min})$ and the fact that $\overline{G}\supscr{1.5}{-2}{1,z} = \overline{G}\supscr{1.5}{-2}{z,1} = 0$ and $\overline{G}\supscr{1.5}{-2}{2,3} = \overline{G}\supscr{1.5}{-2}{3,2} = 1-(d+1)c_\text{min}^{2,3} + \Omega^{2,3}$ as defined in Lemma~\ref{lemma:SchmidtRkLoose}.\\[-3mm]

In order to witness $\rho_\text{th}(p,\beta)$'s Schmidt number is at least $k+1$, the relationship $\mathcal{S}_{d}\suptiny{1}{0}{(m)}(\rho_\text{th})>\overline{\mathcal{B}}_k$ must be satisfied, which after some manipulation gives
\begin{equation}
    p< \frac{\tau\suptiny{1}{0}{(m)}_d(\beta)d - km - (d-k)\overline{\mathcal{T}}(\overline{\mathcal{C}})}{\tau\suptiny{1}{0}{(m)}_d(\beta)d - m}\eqcolon p\suptiny{1}{0}{(k)}_{\text{th},m}(\beta,\theta)
    \label{eq:p^k_th,m}
\end{equation}
if $\tau\suptiny{1}{0}{(m)}_d(\beta)>\frac{m}{d}$ which holds for all $\beta\in[0,\infty)$ according to the definition in Eq.\;\eqref{eq:Smd_th}. We plot $p\suptiny{1}{0}{(k)}_{\text{th},m}(\beta,\theta)$, the threshold of white-noise ratio for detecting $\rho_\text{th}(p,\beta)$'s Schmidt number larger than~$k$ against $\theta$ with $\beta=0.5$ and against $\beta$ with $\theta=0.05$ for $d=5$, $m=2,3$ and $k=1$ to 4 in Figs.\;\ref{fig:p_Bound_thermal_thetaVar} and~\ref{fig:p_Bound_thermal_betaVar}, respectively. Both figures suggest that measuring in an additional basis that is close to forming a set of MUBs with the already measured bases can improve the noise tolerance of our witness. On the other hand, an additional measurement basis that is far from being mutually unbiased with the other bases could adversely affect the performance of our witness due to a significant increase in the upper bound $\mathcal{B}_k$ ($\overline{\mathcal{B}}_k$) in Theorem~\ref{thm:SchmidtRkBound} (Lemma~\ref{lemma:SchmidtRkLoose}). We summarize this observation in the following remark. 

\begin{repremark}{remark1}
    There exist scenarios where an additional measurement basis improves the noise tolerance of our Schmidt-number witness. However, the opposite case can also happen depending on the choice of the bases. Therefore, in order to witness the highest Schmidt number of a state, one should apply the witness inequality in Theorem~\ref{thm:SchmidtRkBound} or Lemma~\ref{lemma:SchmidtRkLoose} to all subsets of the total set of $m'$ available measurement bases and find the largest~$k$ such that $\mathcal{S}_{d}\suptiny{1}{0}{(m)}(\rho)>\mathcal{B}_k$ or $\overline{\mathcal{B}}_k$ evaluated over all subsets of $m$ chosen bases for all $m\in\{2,\ldots,m'\}$\@.
\end{repremark}

\begin{figure}[ht!]
    \centering
    \includegraphics[width=\linewidth]{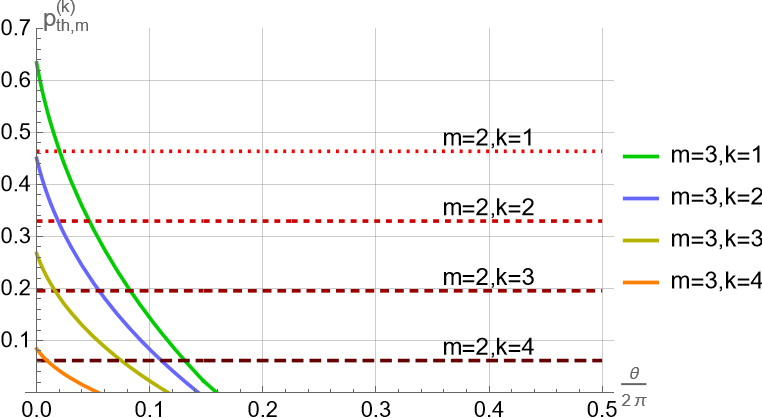}
    \caption{The white-noise threshold for detecting $\rho_\text{th}(p,\beta)$'s Schmidt number to be larger than~$k$, $p\suptiny{1}{0}{(k)}_{\text{th},m}(\beta,\theta)$, plotted against the third basis' phase-drift parameter $\theta$ for $\beta=0.5$, $d=5,\nr m=2,3$ and $k=1$ to 4. Since $p\suptiny{1}{0}{(k)}_{\text{th},m}(\beta,\theta)=p\suptiny{1}{0}{(k)}_{\text{th},m}(\beta,2\pi-\theta)$~\cite{footnote:p_th_symm_theta}, we omit the plot for $\pi\leq\theta\leq 2\pi$\@. The threshold for measuring in all three bases $p\suptiny{1}{0}{(k)}_{\text{th},m=3}(\beta,\theta)$ decreases as $\theta$ increases and eventually gets below $p\suptiny{1}{0}{(k)}_{\text{th},m=2}(\beta,\theta)$, the threshold for measuring only in the first two bases for all~$k$\@.}\label{fig:p_Bound_thermal_thetaVar}
\end{figure}

We can also see how the additional measurement basis and the evenness of the reduced density matrix's eigenvalues affect our lower bound of the entanglement fidelity. The entanglement fidelity of $\rho_\text{th}(p,\beta)$ can be found by noticing that the absolute value of any matrix element of a unitary is upper bounded by 1 \footnote{Since any unitary $U$ has a spectral decomposition $U=\sum_i \lambda_i|\lambda_i\rangle\!\langle\lambda_i|$, where all eigenvalues satisfy $|\lambda_i|=1$ and $\{|\lambda_i\rangle\}_i$ is an orthonormal basis, any matrix element of $U$ satisfies $|U_{jk}| = |\sum_i \lambda_i\langle j|\lambda_i\rangle\!\langle\lambda_i|k\rangle| \leq \sqrt{\sum_i |\lambda_i||\langle j|\lambda_i\rangle|^2}\sqrt{\sum_i |\lambda_i||\langle k|\lambda_i\rangle|^2} = 1$, where we use the Cauchy{\textendash}Schwarz inequality.}, which implies
\begin{flalign}
    \max_U|\langle\psi_\text{th}|\identity\otimes U|\Phi^+\rangle| &= \max_U\frac{1}{\sqrt{d\mathcal{Z}}}\left|\sum_{n=0}^{d-1} e^{-\frac{\beta n}{2}} \langle n|U|n\rangle\right|\nonumber\\
    &= \frac{1}{\sqrt{d\mathcal{Z}}}\sum_{n=0}^{d-1} e^{-\frac{\beta n}{2}}.
\end{flalign}
Then, applying the same methods that resulted in Eq.\;\eqref{eq:Smd_th} gives us $\rho_\text{th}(p,\beta)$'s entanglement fidelity,
\begin{equation}
    \mathcal{F}(\rho_\text{th}(p,\beta)) = 
    \begin{cases}
        1-p + \frac{p}{d^2} &\text{if}\ \ \beta=0,\\
        \frac{(1-p)\tanh\left(d\beta/4\right)}{d\tanh\left(\beta/4\right)} + \frac{p}{d^2} 
        &\text{if}\ \ \beta>0.
    \end{cases}
\end{equation}
We compare\nl
{\begin{figure}[ht!]
    \centering
    \includegraphics[width=0.9\linewidth]{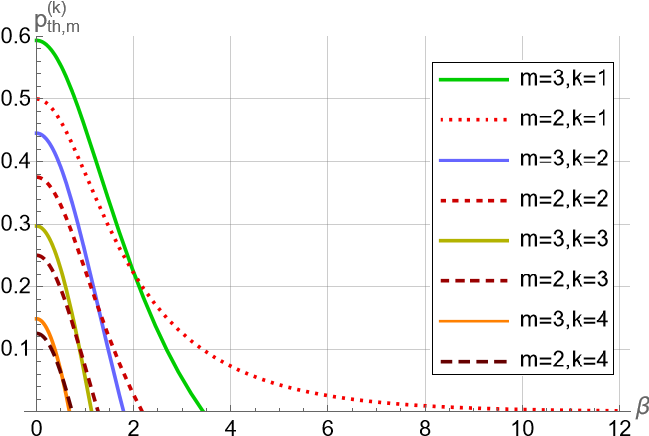}
    \caption{The white-noise threshold for detecting $\rho_\text{th}(p,\beta)$'s Schmidt number being larger than~$k$, $p\suptiny{1}{0}{(k)}_{\text{th},m}(\beta,\theta)$, plotted against $\beta$ for $\theta=0.05$, $d=5$, $m=2,3$ and $k=1$ to 4. All thresholds decrease as $\beta$ increases, which corresponds to increasing skewness of  eigenvalues of the reduced density matrix in Eq.\;\eqref{eq:reducedDensity_th}. Due to the bias of the third basis, $p\suptiny{1}{0}{(k)}_{\text{th},m=3}(\beta,0.05)-p\suptiny{1}{0}{(k)}_{\text{th},m=2}(\beta,0.05)$ goes from being strictly positive to negative as $\beta$ increases from 0.}\label{fig:p_Bound_thermal_betaVar}
\end{figure}}
our fidelity lower bound with the true value in Fig.\;\ref{fig:FidelityDiff_thermal}, which shows the relative difference $\Delta\overline{\mathcal{F}}^\text{th}_m \coloneqq 1 - \overline{\mathcal{F}}_m/\mathcal{F}(\rho_\text{th}(p,\beta))$ for different values of $p, \beta$ and $\theta$\@.\\[-3mm]

Finally, we compare the performance of our Schmidt-number witness with the one in Ref.\;\cite{BavarescoEtAl2018}. 
In Fig.\;\ref{fig:compareJessica_SN_th}, we plot the white-noise thresholds for detecting $\rho_\text{th}(p,\beta)$'s Schmidt number larger than~$k$, $p\suptiny{1}{0}{(k)}_{\text{th},m}(\beta,\theta)$, against the third basis' phase-drift parameter $\theta$ for $\beta=0.05$, $d=5$, and $k=1$ to 4. These are compared with the white-noise thresholds for measuring in the computational basis together with $M=2$ ``tilted" bases of Ref.\;\cite{BavarescoEtAl2018}, $\tilde{p}\suptiny{1}{0}{(k)}_{\text{th},M=2}$, which are numerically computed as described in Appendix~\ref{app:JessicaWitnessThermal}. Similar to the example of isotropic states, Fig.\;\ref{fig:compareJessica_SN_th} suggests that the witness of Ref.\;\cite{BavarescoEtAl2018} is more noise-tolerant than our method in certifying Schmidt numbers of $\rho_\text{th}(p,\beta)$\@. However, we should not overlook that the ``tilted" measurement bases are not orthogonal and require precise control in the relative phases among the bases vectors. Therefore, the choice of bases in Ref.\;\cite{BavarescoEtAl2018} may not be applicable in general experimental settings. For instance, the natural measurement bases for certifying entanglement in certain photonic  systems are the temporal and frequency bases \cite{ChangSarihanChengErkerLiEtAl2024}, or the position and momentum bases in cold atoms \cite{EulerGaerttner2023}, both of which are inherently orthogonal. It remains unclear how practical it is to realize non-orthogonal measurement bases or bases with specifically chosen relative phases in these settings. 
Note that full derivations of the noise tolerance of Ref.\;\cite{BavarescoEtAl2018}'s witness can be found in Appendix~\ref{app:JessicaWitnessThermal}.

\begin{figure}[t!]
    \centering
    \includegraphics[width=\linewidth]{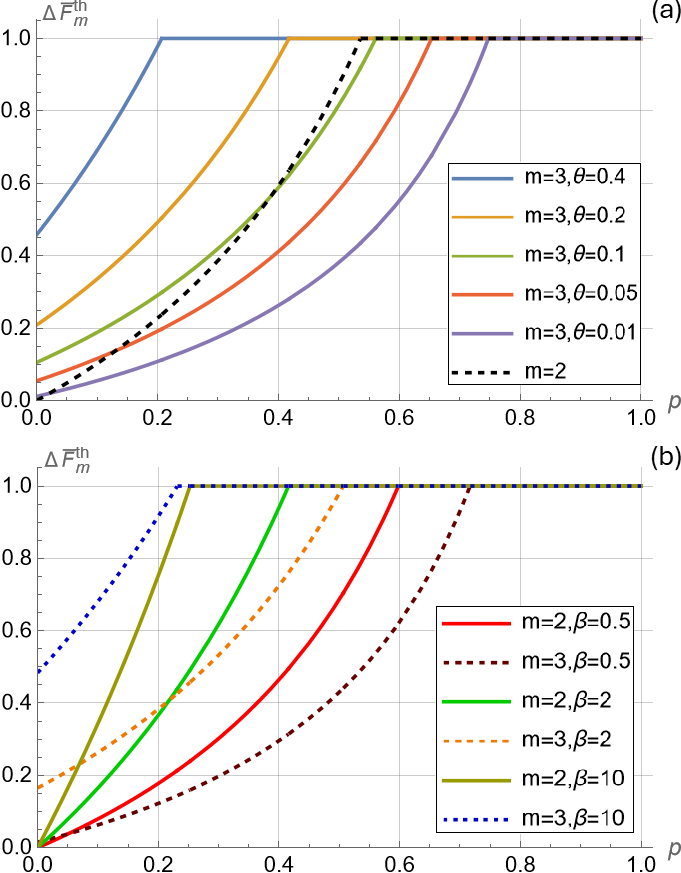}
    \caption{The relative differences between the actual entanglement fidelity of $\rho_\text{th}$ and our fidelity lower bounds, $\Delta\overline{\mathcal{F}}^{\text{th}}_m \coloneqq 1 - \overline{\mathcal{F}}_m/\mathcal{F}(\rho_\text{th}(p,\beta))$, associated to measuring in $m=2,3$ bases are plotted against different white-noise ratios $p$, where only the third basis is parameterized by $\theta$\@. In (a), $\beta$ is fixed to be 1 and we plot $\Delta\overline{\mathcal{F}}^{\text{th}}_m$ for different values of $\theta$\@. When the third measurement basis is used, $\Delta\overline{\mathcal{F}}^{\text{th}}_m$ is lower than that of using 2 MUBs for small $\theta$ and it grows beyond the relative fidelity difference of 2 MUBs as $\theta$ increases. In (b), $\theta$ is fixed to be 0.05 and we plot $\Delta\overline{\mathcal{F}}^{\text{th}}_m$ for different values of $\beta$\@. For small $p$, using 3 bases is worse than using only 2 bases for the given $\theta$\@. For mid-range values of $p$, the bounds using 3 bases are tighter for $\beta=0.5$ and 2 but not for 10. In general, the relative fidelity difference appears to grow with $\beta$\@. The cusps in both plots occur when $\overline{\mathcal{F}}_m$ in Eq.\;\eqref{ineq:FboundLoose} transits from $\overline{\mathcal{T}}(\overline{\mathcal{C}})< \mathcal{S}_d\suptiny{1}{0}{(m)}(\rho_\text{th})$ to $\overline{\mathcal{T}}(\overline{\mathcal{C}})\geq \mathcal{S}_d\suptiny{1}{0}{(m)}(\rho_\text{th})$, where the fidelity lower bound becomes trivial (i.e., $\overline{\mathcal{F}}_m= 0$), as $p$ increases.}\label{fig:FidelityDiff_thermal}
\end{figure}

\vspace*{-1.5mm}

\subsection{Comparison with the Schmidt-number witness in Ref.\;\cite{BavarescoEtAl2018}}\label{app:JessicaWitness}\vspace*{-1.5mm}

In this appendix, we first summarize the main idea behind the Schmidt-number witness proposed in Ref.\;\cite{BavarescoEtAl2018}. After defining the witness, we will apply it to isotropic states and noisy purified thermal states and calculate the white-noise tolerances in both cases, which are plotted in Figs.\;\ref{fig:compareJessica_iso} and~\ref{fig:compareJessica_SN_th}, respectively.\\[-3mm]

\begin{figure}[t!]
    \centering
    \includegraphics[width=\linewidth]{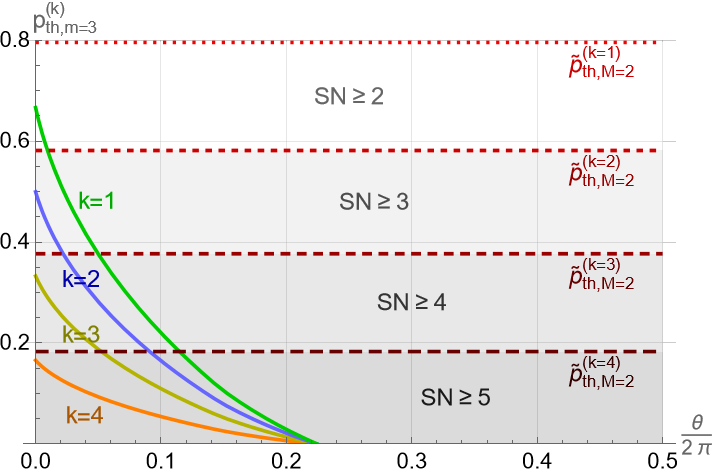}
    \caption{The white-noise thresholds for detecting $\rho_\text{th}(p,\beta)$'s Schmidt number larger than~$k$, $p\suptiny{1}{0}{(k)}_{\text{th},m=3}(\beta,\theta)$, plotted against the third basis' phase-drift parameter $\theta$ for $\beta=0.05$, $d=5$, and $k=1$ to 4. Since $p\suptiny{1}{0}{(k)}_{\text{th},m=3}(\beta,\theta)=p\suptiny{1}{0}{(k)}_{\text{th},m=3}(\beta,2\pi-\theta)$~\cite{footnote:p_th_symm_theta}, we omit the plot for $\pi\leq\theta\leq 2\pi$\@. These are compared with the white-noise thresholds for measuring in the computational basis together with $M=2$ ``tilted" bases of Ref.\;\cite{BavarescoEtAl2018}, $\tilde{p}\suptiny{1}{0}{(k)}_{\text{th},M=2}$\@. Below the dotted lines labelled by $\tilde{p}\suptiny{1}{0}{(k)}_{\text{th},M=2}$ are regions of white-noise ratios, in which $\rho_\text{th}(p,\beta)$'s Schmidt number is certified to be at least $k+1$ by the witness of Ref.\;\cite{BavarescoEtAl2018}.}\label{fig:compareJessica_SN_th}
\end{figure}

\subsubsection{Summary of the witness in Ref.\;\cite{BavarescoEtAl2018}}\label{app:JessicaWitnessSummary}
For any bipartite state $\rho\subtiny{0}{0}{A\nl B}\suptiny{1}{0}{(\leq\nl\nl k\nl)}$ of Schmidt number at most~$k$, $\mathcal{F}(\rho\subtiny{0}{0}{A\nl B}\suptiny{1}{0}{(\leq\nl\nl k\nl)},\ket{\psi})$, the fidelity between $\rho\subtiny{0}{0}{A\nl B}\suptiny{1}{0}{(\leq\nl\nl k\nl)}$ and a pure state $\ket{\psi}=\sum_{i=0}^{d-1} \lambda_i\ket{e_i}\ket{f_i}$ with known Schmidt coefficients $\{\lambda_i\}_{i=0}^{d-1}$ such that $\lambda_i\geq\lambda_j\;\forall\;i<j$ and $\sum_{i=0}^{d-1} \lambda_i^2 = 1$, where $\{\ket{e_i}\}_{i}, \{\ket{f_i}\}_{i}$ are orthonormal bases, must satisfy
\begin{equation}
    \mathcal{F}(\rho\subtiny{0}{0}{A\nl B}\suptiny{1}{0}{(\leq\nl\nl k\nl)},\ket{\psi}) \leq \widetilde{\mathcal{B}}_k(\ket{\psi}) \coloneqq \sum_{i=0}^{k-1} \lambda_i^2.
    \label{ineq:JessicaBk}
\end{equation}
Hence, if $\mathcal{F}(\rho\subtiny{0}{0}{A\nl B},\ket{\psi}) > \widetilde{\mathcal{B}}_k(\ket{\psi})$, then we certify the Schmidt number of $\rho\subtiny{0}{0}{A\nl B}$ must be at least $k+1$\@.\\[-3mm]

Using as few as two measurement bases, it is possible to obtain a good estimate of $\mathcal{F}(\rho\subtiny{0}{0}{A\nl B},\ket{\psi})$ for any target state $\ket{\psi}$ in any odd-prime local dimension if the measurement bases are constructed in the way specified by Ref.\;\cite{BavarescoEtAl2018}. For that to work, the target state $\ket{\psi}=\sum_{i=0}^{d-1} \lambda_i\ket{e_i}\ket{f_i}$ is chosen such that
\begin{equation}
    \lambda_i = \sqrt{\frac{\langle e_i f_i|\nr\rho\subtiny{0}{0}{A\nl B}\nr|e_i f_i\rangle}{\sum_{j=0}^{d-1} \langle e_j f_j|\nr\rho\subtiny{0}{0}{A\nl B}\nr|e_j f_j\rangle}} \label{eq:JessicaTargetLambda}
\end{equation}
and $\{\ket{e_i}\}_{i} \;(\{\ket{f_i}\}_{i})$ is the local orthonormal measurement basis of party A (B). From now on, w.l.o.g., we set $\ket{e_i}=\ket{i}$ and $\ket{f_i}=\ket{i}$ for all $i$\@.\\[-3mm]

Since calculating $\mathcal{F}(\rho\subtiny{0}{0}{A\nl B},\ket{\psi})$ (which will be abbreviated as $\mathcal{F}$ from now on) requires some knowledge of the off-diagonal entries of $\rho\subtiny{0}{0}{A\nl B}$ with respect to the local computational basis $\{\ket{i}\}$, both parties must measure in more than one basis. In Ref.\;\cite{BavarescoEtAl2018}, these extra $1\leq M\leq d$ bases are chosen to be the ``tilted" bases $\{\ket{\tilde{j}_\alpha}\}_{j=0}^{d-1}$ with $\alpha\in\{0,\ldots,M-1\}$, which are defined by
\begin{equation}
    \ket{\tilde{j}_\alpha} = \frac{1}{\sqrt{\sum_i \lambda_i}}\sum_{n=0}^{d-1} \omega^{jn+\alpha n^2} \sqrt{\lambda_n}\ket{n},
\end{equation}
where $\omega=e^{i\frac{2\pi}{d}}$. 
Given that party A (B) measures in local bases $\{i\}_{i=0}^{d-1}$ and $\{\ket{\tilde{j}_\alpha}\}_{j=0}^{d-1}$ ($\{\ket{\tilde{j}_\alpha^*}\}_{j=0}^{d-1}$), the two parties can evaluate a lower bound of the fidelity $\mathcal{F}$ given by $\tilde{F}\suptiny{1}{0}{(M)}\coloneqq F_1 + \tilde{F}_2\suptiny{1}{0}{(M)}$ \footnote{Note that for all odd-prime dimensions $d$, $\tilde{F}\suptiny{1}{0}{(M')}\geq \tilde{F}\suptiny{1}{0}{(M)}$ holds for all $M'\geq M$, whereas for all other dimensions, only $\tilde{F}\suptiny{1}{0}{(M)}\geq \tilde{F}\suptiny{1}{0}{(1)}$ is guaranteed to hold $\forall\;M\geq 1$~\cite{BavarescoEtAl2018}.}, where
\begin{subequations}
\begin{flalign}
    &F_1 \coloneqq \sum_{n=0}^{d-1} \lambda_n^2\langle nn|\nr\rho\subtiny{0}{0}{A\nl B}\nr|nn\rangle,\\
    &\tilde{F}_2\suptiny{1}{0}{(M)} \coloneqq \frac{(\sum_n \lambda_n)^2}{d} \Sigma\suptiny{1}{0}{(M)} \,-\! \sum_{m,n=0}^{d-1} \lambda_m \lambda_{n} \langle mn|\nr\rho\subtiny{0}{0}{A\nl B}\nr|mn\rangle \nonumber\\
    &\ \ - \!\!\!\sum_{\substack{m\neq m', m\neq n,\\n\neq n',n'\neq m'}} \!\!\! \tilde{\gamma}\suptiny{1}{0}{(M)}_{mm'nn'}\sqrt{\langle m'n'|\nr\rho\subtiny{0}{0}{A\nl B}\nr|m'n'\rangle\!\langle mn|\nr\rho\subtiny{0}{0}{A\nl B}\nr|mn\rangle}\ ,\\
    &\Sigma\suptiny{1}{0}{(M)} \coloneqq \frac{1}{M}\sum_{\alpha=0}^{M-1}\sum_{j=0}^{d-1} \langle \tilde{j}_\alpha \tilde{j}_\alpha^*|\nr\rho\subtiny{0}{0}{A\nl B}\nr|\tilde{j}_\alpha \tilde{j}_\alpha^*\rangle,
\end{flalign}
\end{subequations}
with the quantity $\tilde{\gamma}\suptiny{1}{0}{(M)}_{mm'nn'}$ given by
\begin{subequations}
\begin{flalign}
    &\tilde{\gamma}\suptiny{1}{0}{(M)}_{mm'nn'} \coloneqq \frac{\tilde{\gamma}_{mm'nn'}}{M}\left|\sum_{\alpha=0}^{M-1}\omega^{\alpha(m^2-m'^2-n^2+n'^2)}\right|,\\
    &\tilde{\gamma}_{mm'nn'} \coloneqq {\begin{cases} 0 &\hspace*{-3mm}\text{if}\ (m-m'-n+n')\text{mod}\;d\neq 0,\\ \sqrt{\lambda_m \lambda_{n}\lambda_{m'} \lambda_{n'}} &\text{otherwise.}\end{cases}}
\end{flalign}
\end{subequations}
Since $\tilde{F}\suptiny{1}{0}{(M)}\leq \mathcal{F}\leq \widetilde{\mathcal{B}}_k(\ket{\psi})$ for all $\rho\subtiny{0}{0}{A\nl B}$ with Schmidt number at most~$k$, if $\tilde{F}\suptiny{1}{0}{(M)}> \widetilde{\mathcal{B}}_{k'}(\ket{\psi})$, then $\rho\subtiny{0}{0}{A\nl B}$ must have Schmidt number at least $k'+1$\@.

\subsubsection{Witnessing isotropic states}\label{app:JessicaWitnessIsotropic}
In this appendix, we will derive $\tilde{F}\suptiny{1}{0}{(M)}$ for the isotropic state $\rho\suptiny{1}{0}{\mathrm{iso}} = (1-p)\ket{\Phi^+_{d}}\!\!\bra{\Phi^+_{d}}+\frac{p}{d^2}\identity_{d^2}$\@. Since
\begin{equation}
    \langle ij|\rho\suptiny{1}{0}{\mathrm{iso}}|ij\rangle =\frac{1-p}{d}\delta_{ij}+\frac{p}{d^2}, \quad\forall\;i,j\in[d],
\end{equation}
the target state $\ket{\psi}$ has $\lambda_i=\frac{1}{\sqrt{d}}\;\forall\;i$ (i.e., $\ket{\psi} = \ket{\Phi^+_d}$) according to Eq.\;\eqref{eq:JessicaTargetLambda}, and the ``tilted" bases together with the computational basis are MUBs in prime-power dimensions~\cite{WoottersFields1989}. Furthermore, we have
\begin{subequations}
\begin{flalign}
    F_1 &=\, \frac{1}{d}\left(1-p+\frac{p}{d}\right),\\[1mm]
    \Sigma\suptiny{1}{0}{(M)} &=\, \frac{1}{M}\sum_{\alpha=0}^{M-1}\left(1-p+\frac{p}{d}\right)=1-p+\frac{p}{d},
\end{flalign}
\end{subequations}
where we use the fact that $U_\alpha\otimes U_\alpha^*\ket{\Phi^+_{d}}=\ket{\Phi^+_{d}}$ with $U_\alpha = \sum_j\ket{\tilde{j}_\alpha}\!\!\bra{j}$, and for the last two terms of $\tilde{F}_2\suptiny{1}{0}{(M)}$ we have
\begin{equation}
    \sum_{m,n=0}^{d-1} \lambda_m \lambda_{n} \langle mn|\rho\suptiny{1}{0}{\mathrm{iso}}|mn\rangle = \frac{1}{d}\tr(\rho\suptiny{1}{0}{\mathrm{iso}}) = \frac{1}{d},
\end{equation}
as well as
\begin{equation}
    \tilde{\gamma}\suptiny{1}{0}{(M)}_{mm'nn'} = \begin{cases} 0\ \ \ \ \ \ \ \ \ \text{ if}\ \  (m-m'-n+n')\text{ mod}\;d\neq 0,\\ \frac{1}{Md}\left|\sum_{\alpha=0}^{M-1}\omega^{\alpha(m^2-m'^2-n^2+n'^2)}\right| \text{ otherwise,}\end{cases}
\end{equation}
such that
\begin{flalign}
    &\sum_{\substack{m\neq m', m\neq n,\\n\neq n',n'\neq m'}}\!\!\! \tilde{\gamma}\suptiny{1}{0}{(M)}_{mm'nn'}\sqrt{\langle m'n'|\rho\suptiny{1}{0}{\mathrm{iso}}|m'n'\rangle\!\langle mn|\rho\suptiny{1}{0}{\mathrm{iso}}|mn\rangle} \nonumber\\
    = & \sum_{\substack{m\neq m', m\neq n,\\n\neq n',n'\neq m'}}\!\!\! \tilde{\gamma}\suptiny{1}{0}{(M)}_{mm'nn'}\sqrt{\left(\frac{1-p}{d}\delta_{m'n'}+\frac{p}{d^2}\right)\left(\frac{1-p}{d}\delta_{mn}+\frac{p}{d^2}\right)}\nonumber\\
    =\, & \frac{p}{Md^3}\!\!\!\!\!\!\sum_{\substack{m\neq m', m\neq n,\\n\neq n',n'\neq m',\\(m-m'-n+n')\text{ mod}\;d=0}}\!\!\!\!\!\! \left|\sum_{\alpha=0}^{M-1}\omega^{\alpha(m^2-m'^2-n^2+n'^2)}\right|.
\end{flalign}
Using a variant of the Dirichlet kernel~\cite{Dirichlet1829},
\begin{equation}
    \sum_{\alpha=0}^{M-1} e^{i\alpha x} = e^{\frac{i(M-1)x}{2}} \frac{\sin\left(\frac{Mx}{2}\right)}{\sin\left(\frac{x}{2}\right)},\quad\forall\;M\in\mathbbm{N},
\end{equation}
and setting $x=\frac{2\pi}{d}(m^2-m'^2-n^2+n'^2)$, we have
\begin{equation}
    \left|\sum_{\alpha=0}^{M-1}\omega^{\alpha(m^2-m'^2-n^2+n'^2)}\right| = \frac{\left|\sin\left(\frac{\pi M(m^2-m'^2-n^2+n'^2)}{d}\right)\right|}{\left|\sin\left(\frac{\pi(m^2-m'^2-n^2+n'^2)}{d}\right)\right|}.
\end{equation}
Combining all terms of $\tilde{F}\suptiny{1}{0}{(M)}= F_1 + \tilde{F}_2\suptiny{1}{0}{(M)}$, we obtain
\begin{equation}
    \tilde{F}\suptiny{1}{0}{(M)} = 1-p+\frac{p}{d^2} - \frac{p}{Md^3}D\suptiny{1}{0}{(M)}_d,
\end{equation}
where we define
\begin{equation}
    D\suptiny{1}{0}{(M)}_d \coloneqq \!\!\!\!\!\!\!\!\!\sum_{\substack{m\neq m', m\neq n,\\n\neq n',n'\neq m',\\(m-m'-n+n')\text{ mod}\;d=0}}\!\!\!\!\!\!\!\!\! \frac{\left|\sin\left(\frac{\pi M(m^2-m'^2-n^2+n'^2)}{d}\right)\right|}{\left|\sin\left(\frac{\pi(m^2-m'^2-n^2+n'^2)}{d}\right)\right|}.
    \label{eq:DirichletSum}
\end{equation}
For all odd prime $d$, $m^2-m'^2-n^2+n'^2 \neq 0$ if $m, m', n, n'$ satisfy all the constraints under the summation symbol in Eq.\;\eqref{eq:DirichletSum}~\cite{BavarescoEtAl2018}. Hence, $D\suptiny{1}{0}{(d)}_d = 0$ for all odd prime $d$\@.\\[-3mm]

To witness the Schmidt number of $\rho\suptiny{1}{0}{\mathrm{iso}}$ to be at least $k+1$, $\tilde{F}\suptiny{1}{0}{(M)}$ must satisfy
\begin{equation}
    \tilde{F}\suptiny{1}{0}{(M)} > \widetilde{\mathcal{B}}_k(\ket{\psi}= \ket{\Phi^+_d}) = \frac{k}{d},
    \label{ineq:JessicaWitnessIsoSchmidtNum}
\end{equation}
where $\widetilde{\mathcal{B}}_k(\ket{\psi})$ is defined in Eq.\;\eqref{ineq:JessicaBk}. With some simple algebra, we see that Ineq.\;\eqref{ineq:JessicaWitnessIsoSchmidtNum} is satisfied if and only if the white-noise ratio in $\rho\suptiny{1}{0}{\mathrm{iso}}$ satisfies
\begin{equation}
    p < \frac{d(d-k)}{d^2-1+\frac{D\suptiny{1}{0}{(M)}_d}{Md}} \eqcolon \tilde{p}_{\text{iso},M}\suptiny{1}{0}{(k)}\quad\forall\;1\leq M\leq d.
\end{equation}
Since $D\suptiny{1}{0}{(M)}_d\geq 0$ for all $M,d\in\mathbbm{N}$, $\tilde{p}_{\text{iso},M}\suptiny{1}{0}{(k)} \leq p_\text{iso}\suptiny{1}{0}{(k)} \coloneqq \frac{d(d-k)}{d^2-1}$ where $p_\text{iso}\suptiny{1}{0}{(k)}$ is the minimum white-noise ratio for $\rho\suptiny{1}{0}{\mathrm{iso}}$ to have Schmidt number~$k$~\cite{TerhalHorodecki2000}. Moreover, for odd-prime dimensions $d$, $\tilde{p}_{\text{iso},M=d}\suptiny{1}{0}{(k)} = p_\text{iso}\suptiny{1}{0}{(k)}$ since $D\suptiny{1}{0}{(d)}_d = 0$\@.

\vspace*{1.5mm}

\subsubsection{Witnessing noisy purified thermal states}\label{app:JessicaWitnessThermal}\vspace*{-1.5mm}
In this appendix, we will derive $\tilde{F}\suptiny{1}{0}{(M)} = F_1 + \tilde{F}_2\suptiny{1}{0}{(M)}$ for the noisy thermal state $\rho_\text{th}(p,\beta) = (1-p)\ket{\psi_\text{th}(\beta)}\!\!\bra{\psi_\text{th}(\beta)} + \frac{p}{d^2}\identity_{d^2}$ with $\ket{\psi_\text{th}(\beta)} = \frac{1}{\sqrt{\mathcal{Z}}}\sum_{n=0}^{d-1} e^{-\frac{\beta n}{2}}\ket{n}\otimes \ket{n}$, where $\mathcal{Z}=\sum_{n=0}^{d-1} e^{-\beta n}$\@. Since $\rho_\text{th}(p,\beta=0) = \rho\suptiny{1}{0}{\mathrm{iso}}$ and the corresponding fidelity bound $\tilde{F}\suptiny{1}{0}{(M)}$ and Schmidt-number witness bound $\widetilde{\mathcal{B}}_k(\ket{\psi})$ as defined in Eq.\;\eqref{ineq:JessicaBk} will be the same as in Sec.\;\ref{app:JessicaWitnessIsotropic}, we will consider only the case with $\beta>0$ in this appendix. Knowing that
\begin{equation}
    \langle ij|\rho_\text{th}|ij\rangle =\frac{1-p}{\mathcal{Z}}e^{-\beta j}\delta_{ij}+\frac{p}{d^2} \quad\forall\;i,j\in[d],\label{eq:thermaldiag}
\end{equation}
the target state $\ket{\psi}$ has Schmidt coefficients
\begin{equation}
    \lambda_j = \sqrt{\frac{(1-p)\frac{e^{-\beta j}}{\mathcal{Z}}+\frac{p}{d^2}}{1-p+\frac{p}{d}}}\label{eq:JessicaThermal_lambdas}
\end{equation}
according to Eq.\;\eqref{eq:JessicaTargetLambda}. Note that the ``tilted" bases in this case do not form an orthogonal bases for $\beta>0$\@. Using Eqs.\;\eqref{eq:thermaldiag} and \eqref{eq:JessicaThermal_lambdas}, we can easily obtain
\begin{equation}
    F_1 = \frac{1}{1-p+\frac{p}{d}}\left[\frac{\left(1-p\right)^2\tanh\frac{\beta}{2}}{\tanh\frac{d\beta}{2}}+\frac{2p(1-p)}{d^2}+\frac{p^2}{d^3}\right],
\end{equation}
along with
\begin{equation}
    \Sigma\suptiny{1}{0}{(M)} \,=\, \frac{d(1-p)}{\mathcal{Z}(\sum_j\lambda_j)^2}\left(\sum_{n=0}^{d-1}\lambda_n e^{-\frac{\beta n}{2}}\right)^2 + \frac{p}{d},
\end{equation}
and
\begin{flalign}
    &\sum_{m,n=0}^{d-1} \lambda_m \lambda_{n} \langle mn|\rho_\text{th}|mn\rangle\\
    &\ = \frac{1-p}{1-p+\frac{p}{d}}\left[\frac{(1-p)\tanh\frac{\beta}{2}}{\tanh\frac{d\beta}{2}}+\frac{p}{d^2}\right] +\frac{p(\sum_j\lambda_j)^2}{d^2},\nonumber
\end{flalign}
as well as the expressions
\begin{flalign}
    &\sum_{\substack{m\neq m', m\neq n,\\n\neq n',n'\neq m'}} \tilde{\gamma}\suptiny{1}{0}{(M)}_{mm'nn'}\sqrt{\langle m'n'|\rho_\text{th}|m'n'\rangle\!\langle mn|\rho_\text{th}|mn\rangle}\nonumber\\ 
    &\qquad\qquad =\, \frac{p}{Md^2}\overline{D}\suptiny{1}{0}{(M)}_d(p,\beta),
\end{flalign}
and finally
\begin{flalign}
    \overline{D}\suptiny{1}{0}{(M)}_{d}\nl(p,\beta) \,\coloneqq 
    \!\!\!\!\!\!\!\!\!\!\!\!\!\!\!\!\!\!\!\!\!\!\!\sum_{\substack{m\neq m', m\neq n,\\n\neq n',n'\neq m',\\(m-m'-n+n')\text{ mod}\;d=0}}
    \!\!\!\!\!\!\!\!\!\!\!\!\!\!\!\!\!\!\!\!\!\!\sqrt{\lambda_m \lambda_{n}\lambda_{m'} \lambda_{n'}}
    \,\frac{\left|\sin\left(\frac{\pi M(m^2-m'^2-n^2+n'^2)}{d}\right)\right|}{\left|\sin\left(\frac{\pi(m^2-m'^2-n^2+n'^2)}{d}\right)\right|},
\end{flalign}
where we again use the Dirichlet kernel~\cite{Dirichlet1829} as in the previous appendix. Combining all terms of $\tilde{F}\suptiny{1}{0}{(M)}= F_1 + \tilde{F}_2\suptiny{1}{0}{(M)}$ and after some simplification, we obtain
\begin{flalign}
    \tilde{F}\suptiny{1}{0}{(M)} = \frac{p}{d^2}\left[1-\frac{\overline{D}\suptiny{1}{0}{(M)}_d(p,\beta)}{M}\right] + (1-p)\kappa_d(p,\beta), \label{eq:JessicaF^M_thermal}
\end{flalign}
where $\kappa_d(p,\beta) \coloneqq \frac{1}{\mathcal{Z}}\left(\sum_{n=0}^{d-1}\lambda_n e^{-\frac{\beta n}{2}}\right)^2$\@.\\[-3mm]

To witness the Schmidt number of $\rho_\text{th}(p,\beta)$ to be at least $k+1$, $\tilde{F}\suptiny{1}{0}{(M)}$ must satisfy $\tilde{F}\suptiny{1}{0}{(M)} > \widetilde{\mathcal{B}}_k(\ket{\psi})$ where
\begin{flalign}
    \widetilde{\mathcal{B}}_k(\ket{\psi}) &= \sum_{j=0}^{k-1} \frac{(1-p)\frac{e^{-\beta j}}{\mathcal{Z}}+\frac{p}{d^2}}{1-p+\frac{p}{d}}\nonumber\\
    &= \frac{1}{1-p+\frac{p}{d}}\left[\frac{(1-p)(1-e^{-k\beta})}{1-e^{-d\beta}}+\frac{pk}{d^2}\right],
    \label{eq:JessicaWitnessThermalBound}
\end{flalign}
as defined in Eq.\;\eqref{ineq:JessicaBk}. We can then numerically solve for the white-noise threshold ratio $\tilde{p}_{\text{th},M}\suptiny{1}{0}{(k)}$ such that $\tilde{F}\suptiny{1}{0}{(M)} > \widetilde{\mathcal{B}}_k(\ket{\psi})$ is satisfied for all $p<\tilde{p}_{\text{th},M}\suptiny{1}{0}{(k)}$ (see Fig.\;\ref{fig:compareJessica_SN_th}).

\vspace*{-1.5mm}

\subsection{Random measurement bases for witnessing Schmidt number in high dimensions}\label{app:COM_Levy}\vspace*{-1.5mm}

In this appendix, we will use concentration of measure to show that the likelihood of a set of orthonormal bases that are chosen uniformly at random in $\mathbbm{C}^d$ to be biased decreases with the dimension $d$ exponentially, i.e., proving Ineq.\;\eqref{ineq:bases_COM}.\\[-3mm]

First of all, we remark that there exists a bijection between all normalized pure states in $\mathbbm{C}^d$ and the $(2d-1)$-sphere, $\mathbbm{S}^{2d-1}=\{\vec{x}\in\mathbbm{R}^{2d}:||\vec{x}||_2 = 1\}$ such that any normalized pure state in $\mathbbm{C}^d$ can be written as
\begin{equation}
    \ket{\vec{x}} = \sum_{k=0}^{d-1} (x_{2k+1} +i x_{2k+2})\ket{k}, \label{eq:C^d<->S^2d-1}
\end{equation}
where $\vec{x}=\sum_{j=1}^{2d}x_j \vec{e}_j \in \mathbbm{S}^{2d-1}$ and $\vec{e}^T_i\vec{e}_j=\delta_{ij}\;\forall\;i,j$\@.\\[-3mm]

Suppose that the first local measurement basis is $\{\ket{i}\}_{i=0}^{d-1}$ and the $z$-th basis with $z\in\{2,\ldots,m\}$ is chosen to be $\{\ket{e^z_a}\}_{a=0}^{d-1}$ such that $\ket{e^z_0}\equiv \ket{\vec{x}\suptiny{1}{0}{(z)}_0}$ with $\vec{x}\suptiny{1}{0}{(z)}_0$ sampled uniformly at random on $\mathbbm{S}^{2d-1}$ and $\ket{e^z_a}$ for $a=1,\ldots,d-1$ can be constructed by the Gram-Schmidt process using $\{\ket{i}\}_{i=0}^{d-1}$ as the reference basis. Let us define the set of rotation matrices $\{R\suptiny{1}{0}{(z)}_a\}_{a=1}^{d-1}\subset O(2d)$ such that $\ket{e^z_a} \equiv \ket{\vec{x}\suptiny{1}{0}{(z)}_a} \equiv \ket{R\suptiny{1}{0}{(z)}_a\vec{x}\suptiny{1}{0}{(z)}_0}$\@.
Using the bijection in Eq.\;\eqref{eq:C^d<->S^2d-1}, the overlap between the first and the $z$-th bases is given by
\begin{equation}
    f_j(\vec{x}\suptiny{1}{0}{(z)}_a) \coloneqq |\langle j|\vec{x}\suptiny{1}{0}{(z)}_a\rangle|^2 = (x\suptiny{1}{0}{(z)}_a)^2_{2j+1} + (x\suptiny{1}{0}{(z)}_a)^2_{2j+2} \label{eq:basesOverlapFunc_RandSample}
\end{equation}
for $a, j\in\{0,\ldots,d-1\}$\@.\\[-3mm]

To find the average of the bases overlap $\mathbbm{E}_{\mathbbm{S}^{2d-1}}[f_j(\vec{x}\suptiny{1}{0}{(z)}_a)]$, we integrate $f_j(\vec{x}\suptiny{1}{0}{(z)}_a)$ over $\mathbbm{S}^{2d-1}$ in the spherical coordinates with respect to the uniform spherical measure $\mu$ on $\mathbbm{S}^{2d-1}$\@. Since $\mu$ is invariant under any rotation $R\in O(2d)$ and $\{\vec{x}\suptiny{1}{0}{(z)}_a\}_{a,z}$ are all related by some rotations in $O(2d)$,
\begin{equation}
    \mathbbm{E}_{\mathbbm{S}^{2d-1}}[f_j(\vec{x}\suptiny{1}{0}{(z)}_a)] = \mathbbm{E}_{\mathbbm{S}^{2d-1}}[f_j(\vec{x})]\quad\forall\;a,z, \label{eq:AverageOverlap_BasisIndep}
\end{equation}
where $\vec{x}\in\mathbbm{S}^{2d-1}$\@. Hence, we can drop the indices $a$ and $z$ in the integral. As a remark, the average $\mathbbm{E}_{\mathbbm{S}^{2d-1}}[f_j(\vec{x})]$ is not affected if $\ket{\vec{x}}$ is defined in another basis, i.e., $\ket{\vec{x}} = \sum_{k=0}^{d-1} (x_{2k+1} +i x_{2k+2})\ket{\tilde{k}}$ where $\{\ket{i}\}_i\neq\{\ket{\tilde{k}}\}_k$, as there exist $\vec{x}'=R \vec{x}\in\mathbbm{S}^{2d-1}$ with $R\in O(2d)$ such that $\ket{\vec{x}} = \sum_{k=0}^{d-1} (x'_{2k+1} +i x'_{2k+2})\ket{k}$\@.\\[-3mm] 

Moreover, the average overlap between the $z$-th and $z'$-th bases for any $z,z'\in\{2,\ldots,m\}$ also equals to $\mathbbm{E}_{\mathbbm{S}^{2d-1}}[f_j(\vec{x})]$\@. To illustrate that, we consider the overlap $|\langle \vec{x}\suptiny{1}{0}{(z)}_a|\vec{x}\suptiny{1}{0}{(z')}_{b}\rangle|^2$\@. Since there exists a unitary $U\in U(d)$ such that $U\ket{\vec{x}\suptiny{1}{0}{(z)}_a} = \ket{j}$ and $U\ket{\vec{x}\suptiny{1}{0}{(z')}_{b}} = \ket{R_U\vec{x}\suptiny{1}{0}{(z')}_{b}}$ where $R_U\in O(2d)$ and $R_U\vec{x}\suptiny{1}{0}{(z')}_{b}\in\mathbbm{S}^{2d-1}$, we have
\begin{flalign}
    |\langle \vec{x}\suptiny{1}{0}{(z)}_a|\vec{x}\suptiny{1}{0}{(z')}_{b}\rangle|^2 =|\langle \vec{x}\suptiny{1}{0}{(z)}_a|U^\dagger U|\vec{x}\suptiny{1}{0}{(z')}_{b}\rangle|^2 = |\langle j|R_U\vec{x}\suptiny{1}{0}{(z')}_{b}\rangle|^2,
\end{flalign}
and the average of the bases overlap is
\begin{flalign}
    &\int_{\vec{x}\suptiny{1}{0}{(z')}_{b}\in\mathbbm{S}^{2d-1}} |\langle \vec{x}\suptiny{1}{0}{(z)}_a|\vec{x}\suptiny{1}{0}{(z')}_{b}\rangle|^2 d\mu(\mathbbm{S}^{2d-1})\\
    &=\, \int_{\vec{x}\suptiny{1}{0}{(z')}_{b}\in\mathbbm{S}^{2d-1}} |\langle j|R_U\vec{x}\suptiny{1}{0}{(z')}_{b}\rangle|^2 d\mu(\mathbbm{S}^{2d-1})\nonumber\\
    &=\, \int_{\vec{x}\suptiny{1}{0}{(z')}_{b}\in\mathbbm{S}^{2d-1}} |\langle j|\vec{x}\suptiny{1}{0}{(z')}_{b}\rangle|^2 d\mu(R_U^{-1}\mathbbm{S}^{2d-1})\nonumber\\
    &=\, \int_{\vec{x}\suptiny{1}{0}{(z')}_{b}\in\mathbbm{S}^{2d-1}} f_j(\vec{x}\suptiny{1}{0}{(z')}_{b}) d\mu(\mathbbm{S}^{2d-1})\nonumber\\ 
    &=\, \mathbbm{E}_{\mathbbm{S}^{2d-1}}[f_j(\vec{x}\suptiny{1}{0}{(z')}_b)] = \mathbbm{E}_{\mathbbm{S}^{2d-1}}[f_j(\vec{x})]\,,\nonumber
\end{flalign}
where we use the invariance of the spherical measure $\mu(R_U^{-1}\mathbbm{S}^{2d-1}) = \mu(\mathbbm{S}^{2d-1})$ and Eq.\;\eqref{eq:AverageOverlap_BasisIndep}. Therefore, the overlaps between all $m$ bases (with $m-1$ randomly chosen ones) will all follow the same statistics, so we can consider only the overlaps between the first two bases from now on without loss of generality.\\[-3mm]

To perform the integration for averaging $f_j(\vec{x})$, we use the standard conversion between the Cartesian, of which $\vec{x}$ is written in Eqs.\;\eqref{eq:C^d<->S^2d-1} and \eqref{eq:basesOverlapFunc_RandSample}, and the spherical coordinates~\cite{Blumenson1960} on a unit sphere, i.e.,
\begin{subequations}
\begin{flalign}
    x_{1\leq j\leq 2d-1} &= \cos(\varphi_j) \prod_{i=1}^{j-1} \sin(\varphi_i),\label{eq:spherical1}\\
    x_{2d} &= \prod_{i=1}^{2d-1} \sin(\varphi_i),\label{eq:spherical2}
\end{flalign}
\end{subequations}
where $\varphi_1,\ldots,\varphi_{2d-2}\in[0,\pi]$ and $\varphi_{2d-1}\in[0,2\pi)$\@. Using the expression of the surface area element of $\mathbbm{S}^{2d-1}$ in the spherical coordinates~\cite{Blumenson1960},
\begin{equation}
    dA_{\mathbbm{S}^{2d-1}} = d\varphi_{2d-1} \prod_{i=1}^{2d-2} \sin^{2d-i-1}\varphi_i d\varphi_i,
\end{equation}
together with the identities
\begin{equation}
    \int_0^\pi \sin^\alpha\theta\;d\theta = \sqrt{\pi}\frac{\Gamma\left(\frac{1+\alpha}{2}\right)}{\Gamma\left(1+\frac{\alpha}{2}\right)} \quad\forall\;\text{Re}(\alpha)>-1,
\end{equation}
and
\begin{equation}
    \prod_{j=1}^k \frac{\Gamma\left(d-\frac{j}{2}\right)}{\Gamma\left(d+1-\frac{j+1}{2}\right)} = \frac{\Gamma\left(d-\frac{k}{2}\right)}{\Gamma\left(d\right)},
\end{equation}
with $\Gamma(n)=(n-1)!\;\forall\;n\in\mathbbm{N}$, the total area of $\mathbbm{S}^{2d-1}$ is
\begin{flalign}
    A_{\mathbbm{S}^{2d-1}} &= \prod_{i=1}^{2d-2} \int_0^\pi \sin^{2d-i-1}\varphi_i d\varphi_i \int_0^{2\pi} d\varphi_{2d-1} = \frac{2\pi^d}{(d-1)!}\,.
\end{flalign}
Next, we use Eqs.\;\eqref{eq:spherical1} and \eqref{eq:spherical2} to write
\begin{equation}
    f_j(\vec{x}) = \prod_{i=1}^{2j} \sin^{2}\varphi_i (1-\sin^{2}\varphi_{2j+1}\sin^{2}\varphi_{2j+2})
\end{equation}
in the spherical coordinates, where we define $\varphi_{2d}=0$, and by using
\begin{equation}
    \prod_{j=1}^{2k} \frac{\Gamma\left(d+1-\frac{j}{2}\right)}{\Gamma\left(d+1-\frac{j-1}{2}\right)} = \frac{(d-k)!}{d!},
\end{equation}
we can calculate the average bases overlap to be
\begin{flalign}
    \mathbbm{E}_{\mathbbm{S}^{2d-1}}[f_j(\vec{x})] &= \int_{\mathbbm{S}^{2d-1}} f_j(\vec{x}) \frac{dA_{\mathbbm{S}^{2d-1}}}{A_{\mathbbm{S}^{2d-1}}} = \frac{1}{d}
\end{flalign}
for all $j\in\{0,\ldots,d-1\}$\@. This shows that the average bases overlaps over the uniform spherical measure are all equal to the bases overlaps of MUBs.\\[-3mm]

Next, we apply L\'{e}vy's lemma which upper bounds the likelihood that the overlaps between a randomly chosen basis and the fixed basis or between two randomly chosen bases deviate from $\frac{1}{d}$ by $\epsilon> 0$\@.
\begin{lemma}[L\'{e}vy's Lemma~\cite{MilmanSchechtman1986,Ledoux2001,HaydenLeungWinter2006}]
    Let $f:\mathbbm{S}^{d'-1}\mapsto \mathbbm{R}$ be a Lipschitz continuous function such that $\exists\;K\geq0$,
    \begin{equation}
        |f(\vec{x})-f(\vec{y})|\leq K||\vec{x}-\vec{y}||_2 \quad\forall\;\vec{x},\vec{y}\in\mathbbm{S}^{d'-1}.\label{eq:Lipschitz}
    \end{equation}
    For a point $\vec{x}\in\mathbbm{S}^{d'-1}$ chosen uniformly at random,
    \begin{equation}
        \text{Pr}\{|f(\vec{x})-\mathbbm{E}[f]|>\epsilon\} \leq 2 \exp\left(-\frac{Cd'\epsilon^2}{K^2}\right),
    \end{equation}
    where $\mathbbm{E}[f]$ is the mean value of $f$ and the constant $C>0$ may take the value $C=(9\pi^3\ln2)^{-1}$\@.
\end{lemma}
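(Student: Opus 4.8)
The plan is to establish the inequality in two stages: first prove concentration of $f$ about its \emph{median} $M_f$ via the isoperimetric inequality on the sphere, and then convert the median-based bound into the stated mean-based bound. Throughout I would write $\sigma$ for the normalized uniform measure on $\mathbbm{S}^{d'-1}$, let $\rho(\vec{x},\vec{y})$ denote geodesic (arc-length) distance, and for a set $A$ let $A_\epsilon=\{\vec{x}:\rho(\vec{x},A)\leq\epsilon\}$ be its geodesic $\epsilon$-enlargement. Because the chord never exceeds the arc, $||\vec{x}-\vec{y}||_2\leq\rho(\vec{x},\vec{y})$, so the Euclidean Lipschitz hypothesis in Ineq.~\eqref{eq:Lipschitz} immediately upgrades to $|f(\vec{x})-f(\vec{y})|\leq K\rho(\vec{x},\vec{y})$; this is the only place where the precise metric matters.

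The geometric input I would invoke is the L\'{e}vy--Schmidt spherical isoperimetric inequality: among all measurable sets of a fixed measure, geodesic caps minimize the measure of every $\epsilon$-enlargement. Applying it to a set $A$ with $\sigma(A)\geq\tfrac12$ yields the enlargement bound $\sigma(\mathbbm{S}^{d'-1}\setminus A_\epsilon)\leq\exp\bigl(-(d'-1)\epsilon^2/2\bigr)$ (up to a polynomial prefactor that I would absorb into the constant). To apply this to $f$, set $A=\{\vec{x}:f(\vec{x})\leq M_f\}$, which has $\sigma(A)\geq\tfrac12$ by definition of the median. For $\vec{x}\in A_\epsilon$ there is $\vec{y}\in A$ with $\rho(\vec{x},\vec{y})\leq\epsilon$, whence $f(\vec{x})\leq M_f+K\epsilon$ by the upgraded Lipschitz bound; equivalently $\{f>M_f+K\epsilon\}\subseteq\mathbbm{S}^{d'-1}\setminus A_\epsilon$, so the upper tail $\sigma\{f>M_f+K\epsilon\}$ is controlled by the enlargement bound. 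Running the same argument with $\{f\geq M_f\}$ controls the lower tail, and after rescaling $\epsilon\mapsto\epsilon/K$ and summing both tails I obtain $\text{Pr}\{|f-M_f|>\epsilon\}\leq 2\exp(-C' d'\epsilon^2/K^2)$ for an explicit $C'$.

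The final step replaces $M_f$ by $\mathbbm{E}[f]$. Integrating the median-tail bound gives $|\mathbbm{E}[f]-M_f|\leq\mathbbm{E}|f-M_f|=\int_0^\infty\text{Pr}\{|f-M_f|>t\}\,dt=O\!\bigl(K/\sqrt{d'}\bigr)$, so the median and the mean differ by a quantity of the same scale $K/\sqrt{d'}$ that already governs the fluctuations themselves. For $\epsilon$ large enough that the exponential bound is nontrivial, $\{|f-\mathbbm{E}[f]|>\epsilon\}\subseteq\{|f-M_f|>\epsilon/2\}$ after absorbing this shift, which costs only a constant factor in the exponent; for smaller $\epsilon$ the claimed right-hand side exceeds $1$ and the inequality is vacuous. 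I anticipate that the main obstacle is not the logical structure but the bookkeeping of constants: obtaining precisely $C=(9\pi^3\ln2)^{-1}$ requires carrying the polynomial prefactors and the $d'-1$ versus $d'$ discrepancy through both the isoperimetric enlargement estimate and the median-to-mean shift exactly as in Refs.~\cite{MilmanSchechtman1986,Ledoux2001,HaydenLeungWinter2006}, and of course the spherical isoperimetric inequality itself is a deep result that I would cite rather than reprove.
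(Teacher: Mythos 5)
The paper does not prove this lemma at all{\textemdash}it is imported verbatim as a cited result from Refs.~\cite{MilmanSchechtman1986,Ledoux2001,HaydenLeungWinter2006}, so there is no internal proof to compare against, and your sketch reproduces precisely the standard argument of those references: the chord-to-geodesic Lipschitz upgrade, the L\'{e}vy--Schmidt isoperimetric inequality giving concentration about the median, and the median-to-mean conversion. Your structure is sound; the only unfinished business is the constant bookkeeping that yields $C=(9\pi^3\ln 2)^{-1}$ (polynomial prefactors, the $d'-1$ versus $d'$ shift, and the factor lost in replacing the median by the mean), which you correctly acknowledge and defer to the cited sources rather than claiming to derive.
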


In order to use L\'{e}vy's lemma, we must show that $f_j(\vec{x})$ is Lipschitz continuous. To show this, we return to the Cartesian coordinate and define $P_{i,j}=\vec{e}_i\cdot\vec{e}_i^T + \vec{e}_j\cdot\vec{e}_j^T$ for $i\neq j$, $X=\vec{x}\cdot\vec{x}^T$, and $Y=\vec{y}\cdot\vec{y}^T$\@. For any $\vec{x},\vec{y}\in\mathbbm{S}^{2d-1}$,
\begin{flalign}
    |f_j(\vec{x})-f_j(\vec{y})| &= |x^2_{2j+1} + x^2_{2j+2} - y^2_{2j+1} - y^2_{2j+2}|\nonumber\\
    &= |\tr[(X-Y)P_{2j+1,2j+2}]|\\
    &\leq ||X-Y||_\text{HS}\nr||P_{2j+1,2j+2}||_\text{HS},
    \nonumber
\end{flalign}
where we use the Cauchy{\textendash}Schwarz inequality and the Hilbert-Schmidt norm, $||\cdot||_\text{HS}$\@. Since $||P_{i,j}||_\text{HS}=\sqrt{2}$,
\begin{flalign}
    |f_j(\vec{x})-f_j(\vec{y})| &\leq \sqrt{2\;\tr[(X-Y)^\dagger(X-Y)]}\nonumber\\
    &=\sqrt{2[(\vec{x}^T\vec{x})^2+(\vec{y}^T\vec{y})^2-2|\vec{x}^T\vec{y}|^2]}\\
    &= 2\sqrt{1-|\vec{x}^T\vec{y}|^2},\nonumber
\end{flalign}
where we use the fact that $\vec{x},\vec{y}\in\mathbbm{S}^{2d-1}$\@. Then, knowing that $\vec{x}^T\vec{y}=\sum_{j=1}^{2d}x_jy_j\leq 1$, we have
\begin{flalign}
    |f_j(\vec{x})-f_j(\vec{y})| &\leq\, 2\sqrt{\Bigl(1-\sum_{i=1}^{2d}x_iy_i\Bigr)\Bigl(1+\sum_{j=1}^{2d}x_jy_j\Bigr)}\nonumber\\
    &\leq\, 2\sqrt{2\Bigl(1-\sum_i x_iy_i\Bigr)}\,=\,2 ||\vec{x}-\vec{y}||_2\,.
\end{flalign}

Hence, the Lipschitz constant in Eq.\;\eqref{eq:Lipschitz} is $K=2$ and by setting $d'=2d$, L\'{e}vy's lemma tells us that
\begin{equation}
    \text{Pr}\left\{\left|f_{j}(\vec{x})-\frac{1}{d}\right|\,>\,\epsilon\right\} \leq 2 \exp\left(-\frac{d\epsilon^2}{18\pi^3\ln2}\right).\label{eq:bases_COM}
\end{equation}
This implies that in large dimensions $d$, it is very likely to find two randomly chosen bases to be close to mutually unbiased. Since we can still certify entanglement in high dimensions with measurement bases that are not MUBs with our witness, Ineq.\;\eqref{eq:bases_COM} suggests that it may not worth the effort to precisely control the relative phases of the measurement bases to ensure that they are mutually unbiased in large dimensions.


\subsection{Maximal number of orthonormal bases with overlap constraints}\label{app:onbNumBound}\vspace*{-1.5mm}

In this appendix, we will prove the following corollary which relates the maximal number of orthonormal bases that obey certain overlap constraints to the function $\lambda(\mathcal{C})$ defined in Theorem~\ref{thm:SchmidtRkBound}.
\begin{corollary}\label{corollary:MaxBasesNum}
    Suppose that there are $m$ orthonormal bases with overlaps $\mathcal{C}=\{|\langle e^z_a|e^{z'}_{a'}\rangle|^2\}_{a,a',z\neq z'}$\@. Then,
    \begin{equation}
        m \leq \frac{d+1}{2}\,\left(1+\sqrt{1+\frac{8\lambda(\mathcal{C})\bigl(\lambda(\mathcal{C})-1\bigr)}{d^2-1}}\,\right) \,\eqcolon\, \overline{m}_d
    \end{equation}
    for all $d\geq2$, where $\lambda(\mathcal{C})$ is defined in Theorem~\ref{thm:SchmidtRkBound}.
\end{corollary}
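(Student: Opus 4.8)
The plan is to regard the full collection of $md$ unit vectors $\{\ket{e^z_a}\}_{z=1,\dots,m;\,a=0,\dots,d-1}$ in $\mathbbm{C}^d$ as a single overcomplete system and apply the second-order Welch bound~\cite{Welch1974}. For any $N$ unit vectors $\{v_i\}$ in $\mathbbm{C}^d$ this bound states that $\sum_{i,j=1}^N |\langle v_i|v_j\rangle|^4 \geq N^2/\binom{d+1}{2} = 2N^2/[d(d+1)]$. Taking $N=md$ yields the lower bound $\sum_{(z,a),(z',a')}|\langle e^z_a|e^{z'}_{a'}\rangle|^4 \geq 2m^2 d/(d+1)$. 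The strategy is then to split this total fourth-power overlap sum into same-basis and cross-basis pieces, bound the cross-basis piece from above in terms of $\lambda(\mathcal{C})$, and combine the two estimates into a single quadratic inequality in $m$.

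First I would split the sum as $\sum_z\sum_{a,a'}|\langle e^z_a|e^z_{a'}\rangle|^4 + \sum_{z\neq z'}\sum_{a,a'}|\langle e^z_a|e^{z'}_{a'}\rangle|^4$. Orthonormality within each individual basis collapses the diagonal term to $\sum_z\sum_{a,a'}\delta_{aa'}=md$. For each cross-basis inner sum I would invoke the definition of $G^{z,z'}$ from Theorem~\ref{thm:SchmidtRkBound}, which rearranges to $\sum_{a,a'}|\langle e^z_a|e^{z'}_{a'}\rangle|^4 = d\,[\,G^{z,z'}-1+(d+1)c^{z,z'}_\text{min}\,]$.

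The one genuinely delicate step --- and the crux of the argument --- is to control the term $(d+1)c^{z,z'}_\text{min}$. Here the normalization $\sum_{a'}|\langle e^z_a|e^{z'}_{a'}\rangle|^2=1$, which holds for every fixed $a$, forces the minimum overlap to obey $c^{z,z'}_\text{min}\leq 1/d$ (the minimum of $d$ nonnegative numbers summing to $1$ cannot exceed their mean). This gives the upper bound $\sum_{a,a'}|\langle e^z_a|e^{z'}_{a'}\rangle|^4 \leq d\,G^{z,z'}+1$. Summing over all $m(m-1)$ ordered pairs $z\neq z'$ and using the identity $\sum_{z\neq z'}G^{z,z'} = 2\lambda(\mathcal{C})\bigl(\lambda(\mathcal{C})-1\bigr)/d$ --- which follows by simply inverting the defining relation $\lambda(\mathcal{C})=\tfrac12(1+\sqrt{1+2d\sum_{z\neq z'}G^{z,z'}})$ --- bounds the cross-basis sum above by $2\lambda(\mathcal{C})\bigl(\lambda(\mathcal{C})-1\bigr)+m(m-1)$.

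Finally I would chain the Welch lower bound with this upper bound to obtain $2m^2 d/(d+1)-md \leq 2\lambda(\mathcal{C})(\lambda(\mathcal{C})-1)+m^2-m$. Collecting terms gives $m^2\tfrac{d-1}{d+1}-m(d-1)-2\lambda(\mathcal{C})(\lambda(\mathcal{C})-1)\leq 0$, and multiplying through by $(d+1)/(d-1)>0$ (valid for $d\geq2$) produces the quadratic $m^2-(d+1)m-\tfrac{2(d+1)}{d-1}\lambda(\mathcal{C})(\lambda(\mathcal{C})-1)\leq 0$. Since $m>0$, it is at most the larger root; factoring $(d+1)^2$ out of the discriminant and using $(d-1)(d+1)=d^2-1$ collapses the square root to $(d+1)\sqrt{1+8\lambda(\mathcal{C})(\lambda(\mathcal{C})-1)/(d^2-1)}$, which reproduces $\overline{m}_d$ exactly. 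Beyond the observation that $c^{z,z'}_\text{min}\leq 1/d$, the remaining work is routine algebra, so I anticipate no further obstacles.
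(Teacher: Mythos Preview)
Your proposal is correct and follows essentially the same route as the paper: both apply the second-order Welch bound to the $md$ vectors, split the fourth-power sum into same-basis ($=md$) and cross-basis parts, express the latter via $G^{z,z'}$, invoke $c^{z,z'}_\text{min}\leq 1/d$, and solve the resulting quadratic $m^2-(d+1)m-\tfrac{2(d+1)}{d-1}\lambda(\mathcal{C})(\lambda(\mathcal{C})-1)\leq 0$. The only cosmetic difference is that the paper first writes the cross-basis sum as the exact identity $2\lambda(\mathcal{C})(\lambda(\mathcal{C})-1)+d\sum_{z\neq z'}[(d+1)c^{z,z'}_\text{min}-1]$ before bounding, whereas you bound each term and then sum; the content is identical.
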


We first give the formal statement of the Welch bounds before proving the corollary. These bounds limit the maximal number of unit vectors compatible with certain overlaps between them.

\begin{lemma}[Welch bounds~\cite{Welch1974}]\label{lemma:WelchBounds}
    Let $\{\ket{\psi_i}\}_{i=1}^M\subset \mathbbm{C}^d$ be a set of $M$ unit vectors. Then, for all $k\geq 1$,
    \begin{equation}\label{ineq:Welch}
        \sum_{i,j=1}^M |\langle\psi_i|\psi_j\rangle|^{2k} \geq \frac{M^2}{\binom{d+k-1}{k}}.
    \end{equation}
\end{lemma}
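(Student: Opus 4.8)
The plan is to lift the whole configuration into the totally symmetric subspace and then apply a rank-constrained Cauchy--Schwarz inequality to the associated Gram matrix. First I would fix $k\geq1$ and pass from each unit vector $\ket{\psi_i}$ to its $k$-fold tensor power $\ket{\psi_i}^{\otimes k}$. All of these tensor powers lie in the totally symmetric subspace $\mathrm{Sym}^k(\mathbbm{C}^d)\subseteq(\mathbbm{C}^d)^{\otimes k}$, whose dimension is $D\coloneqq\binom{d+k-1}{k}$, the number of degree-$k$ monomials in $d$ variables (a stars-and-bars count). The one elementary identity that drives everything is $\langle\psi_i^{\otimes k}|\psi_j^{\otimes k}\rangle=\langle\psi_i|\psi_j\rangle^k$.

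Next I would introduce the $M\times M$ Gram matrix $G$ with entries $G_{ij}=\langle\psi_i^{\otimes k}|\psi_j^{\otimes k}\rangle=\langle\psi_i|\psi_j\rangle^k$. Being a Gram matrix, $G$ is Hermitian and positive semi-definite, and since its defining vectors all live in a space of dimension at most $D$, we have $\rank(G)\leq D$. Two traces then encode all the relevant information: on the one hand $\tr(G)=\sum_{i=1}^M\langle\psi_i|\psi_i\rangle^k=M$ because the $\ket{\psi_i}$ are normalized; on the other hand, using Hermiticity $G_{ij}=\overline{G_{ji}}$,
\[
\tr(G^2)=\tr(G^\dagger G)=\sum_{i,j=1}^M |G_{ij}|^2=\sum_{i,j=1}^M|\langle\psi_i|\psi_j\rangle|^{2k}.
\]

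Finally I would apply Cauchy--Schwarz to the $r\coloneqq\rank(G)\leq D$ nonzero (real, nonnegative) eigenvalues $\sigma_1,\dots,\sigma_r$ of $G$:
\[
M^2=\bigl(\tr G\bigr)^2=\Bigl(\sum_{l=1}^r\sigma_l\Bigr)^2\leq r\sum_{l=1}^r\sigma_l^2=r\,\tr(G^2)\leq D\,\tr(G^2).
\]
Inserting the expression for $\tr(G^2)$ and rearranging yields $\sum_{i,j}|\langle\psi_i|\psi_j\rangle|^{2k}\geq M^2/D=M^2/\binom{d+k-1}{k}$, which is exactly Ineq.~\eqref{ineq:Welch}.

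The only step that needs genuine care -- and hence the main obstacle -- is the symmetric-subspace input: verifying $\dim\mathrm{Sym}^k(\mathbbm{C}^d)=\binom{d+k-1}{k}$ and that each $\ket{\psi_i}^{\otimes k}$ indeed lies in $\mathrm{Sym}^k(\mathbbm{C}^d)$, which is what delivers the crucial bound $\rank(G)\leq D$. Everything downstream is routine linear algebra. As a side observation, the chain of inequalities is saturated precisely when $G$ has full rank $D$ with all nonzero eigenvalues equal, i.e.\ when the $\ket{\psi_i}^{\otimes k}$ form a tight frame for the symmetric subspace; this equality condition is what makes the $k=2$ instance the sharp tool used in Corollary~\ref{corollary:MaxBasesNum}.
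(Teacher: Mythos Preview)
Your argument is correct and is essentially the standard proof of the Welch bounds: lift to $\mathrm{Sym}^k(\mathbbm{C}^d)$, bound the rank of the Gram matrix by $\binom{d+k-1}{k}$, and apply Cauchy--Schwarz to the eigenvalues. The paper does not actually prove Lemma~\ref{lemma:WelchBounds}; it is quoted as a known result from Ref.~\cite{Welch1974} and then invoked in the proof of Corollary~\ref{corollary:MaxBasesNum}, so there is no competing approach to compare against. Your remark on the equality case (tight frames in the symmetric subspace) is also accurate, though not needed for the application in the paper.
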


\begin{proof}[Proof of Corollary~\ref{corollary:MaxBasesNum}]
    Our goal is to upper bound the number of orthonormal bases, $m$, for a given value of $\lambda(\mathcal{C})$ that is defined in Theorem~\ref{thm:SchmidtRkBound}. Let us apply Lemma~\ref{lemma:WelchBounds} by setting $M=md$ and $\{\ket{\psi_i}\}_{i=1}^M = \{|e^{z}_{a}\rangle\}_{a,z}$\@. The summation term in Ineq.\;\eqref{ineq:Welch} can be split into two terms
    \begin{align}
        \sum_{i,j=1}^M |\langle\psi_i|\psi_j\rangle|^{2k}= \sum_{z}\sum_{a,a'}|\langle e^z_a|e^{z}_{a'}\rangle|^{2k} + \sum_{z\neq z'}\sum_{a,a'}|\langle e^z_a|e^{z'}_{a'}\rangle|^{2k}\,.
    \end{align}
    Now we set $k=2$, then the first term evaluates to $\sum_{z}\sum_{a,a'}|\delta_{a,a'}|^4 = md$ and the second term is related to $\lambda(\mathcal{C})$ by 
    \begin{align}
        \sum_{\substack{z\neq z'\\ a,a'}}|\langle e^z_a|e^{z'}_{a'}\rangle|^4 = 2\lambda(\mathcal{C})\bigl(\lambda(\mathcal{C})-1\bigr) + d\sum_{z\neq z'}[(d+1)c^{z,z'}_\text{min}-1]\,.
    \end{align}
    Using the fact that $c^{z,z'}_\text{min}\leq \frac{1}{d}\;\forall\;z,z'$ and after some algebra, we obtain
    \begin{equation}
        m^2-(d+1)m-\frac{2(d+1)}{d-1}\lambda(\mathcal{C})\bigl(\lambda(\mathcal{C})-1\bigr)\,\leq\, 0,
    \end{equation}
    which implies the inequality
    \begin{equation}
        m\,\leq\, \frac{d+1}{2}\,\left(1+\sqrt{1+\frac{8\lambda(\mathcal{C})\bigl(\lambda(\mathcal{C})-1\bigr)}{d^2-1}}\,\right) \,\eqcolon\, \overline{m}_d
    \end{equation}
    for all $d\geq2$, as stated in Corollary~\ref{corollary:MaxBasesNum}.
\end{proof}

\vspace*{-1.5mm}

\subsection{Proof of Lemma~\ref{lemma:3MUBsAnyD}}\label{app:3MUBsAnyD}\vspace*{-1.5mm}

In this appendix, we present the proof of Lemma 2 from the main article, which we restate here for clarity.

\begin{replemma}{lemma:3MUBsAnyD}
    For any $d\in \mathbbm{N}$, the three orthonormal bases $\{\ket{e^1_a}=\ket{a}\}_{a=0}^{d-1}$, $\{\ket{e^2_a}\}_{a=0}^{d-1}$, and $\{\ket{e^3_a}\}_{a=0}^{d-1}$, with
    \begin{subequations}
    \begin{flalign}
        \ket{e^2_a} & =\frac{1}{\sqrt{d}}\sum_{j=0}^{d-1} e^{i2\pi\left[\frac{aj}{d}+f(j)\right]}\ket{j}, \\
        \ket{e^3_a} & =\frac{1}{\sqrt{d}}\sum_{j=0}^{d-1} e^{i2\pi\left[\frac{(d-p^r) j^2}{2d}+\frac{aj}{d}+f(j)\right]}\ket{j},
    \end{flalign}
    \end{subequations}
    where $f$ is any real-valued function, $r\in\mathbbm{N}\cup\{0\}$, and $p$ is any odd prime such that $gcd(d,p)=1$ and $d>p^r$, are mutually unbiased. The simplest example would be having $p^r=1$.
\end{replemma}
We remark that the freedom in defining the relative phases $\{e^{i2\pi f(j)}\}_j$ in both bases can be thought of as the global phase freedom in defining the computational basis. We also note that the quadratic phases in our construction and our bases unbiasedness can be related to properties of stabilizer states in odd dimensions \cite{Gross2006,KuengGross2015}.

To prove this, we will need the following propositions and observations, where Propositions~\ref{prop:GaussSum} and~\ref{prop:ReciprocityThm_GaussSum} were proven in Refs.\;\cite{Herrera-ValenciaSrivastavPivoluskaHuberFriisMcCutcheonMalik2020} and~\cite{BerndtEvansWilliams1998}, respectively.

\begin{proposition}[Lemma 1 in Ref.\;\cite{Herrera-ValenciaSrivastavPivoluskaHuberFriisMcCutcheonMalik2020}] \label{prop:GaussSum}
Let $a,b\in\mathbbm{Z}$ and $c$ be a positive odd integer such that $gcd(a,c)=1$\@. Then,
\begin{equation}
    G(a,b,c) \coloneqq \sum_{n=0}^{c-1} e^{2\pi i\frac{an^2+bn}{c}} = \varepsilon_c\sqrt{c}\left(\frac{a}{c}\right) e^{-2\pi i\frac{\psi(a)b^2}{c}},
\end{equation}
where $\left(\frac{a}{c}\right)\in\{-1,0,1\}$ is the Jacobi symbol, 
\begin{align}
    \varepsilon_c \,=\, 
    \begin{cases}
        1 &\text{if}\ \ c\equiv 1 \mod 4,\\ i &\text{if}\ \ c\equiv 3 \mod 4,
    \end{cases}
\end{align}
and $\psi(a)\in\mathbbm{Z}$ such that $4\psi(a)a\equiv 1\mod c$\@. Therefore, $|G(a,b,c)|=\sqrt{c}$\@.
\end{proposition}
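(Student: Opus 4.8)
The plan is to first eliminate the linear term by completing the square modulo $c$, and then to evaluate the resulting pure quadratic Gauss sum $G(a,0,c)$. Since $c$ is odd and $\gcd(a,c)=1$, both $2a$ and $4a$ are invertible modulo $c$; write $\overline{2a}$ for the inverse of $2a$. A direct expansion gives $a(n+\overline{2a}\,b)^2 \equiv an^2 + bn + a\overline{2a}^2 b^2 \pmod c$, and since $a\overline{2a}^2 \equiv \overline{4a} \equiv \psi(a)$ by the defining congruence $4\psi(a)a\equiv1$, we obtain $an^2+bn \equiv a(n+\overline{2a}\,b)^2 - \psi(a)b^2 \pmod c$. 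Substituting this into the exponent and reindexing the summation by the bijection $n \mapsto n + \overline{2a}\,b$ on $\mathbbm{Z}/c\mathbbm{Z}$ factors out the claimed phase:
\begin{equation}
    G(a,b,c) = e^{-2\pi i \psi(a)b^2/c}\, G(a,0,c).
\end{equation}
It therefore remains to show $G(a,0,c) = \varepsilon_c\sqrt{c}\left(\frac{a}{c}\right)$, from which the modulus claim $|G(a,b,c)|=\sqrt c$ is immediate since the phase factor has unit modulus.

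Next I would treat the prime case $c=p$. Counting solutions of $n^2\equiv t$ (which number $1+\left(\frac{t}{p}\right)$ for $t\not\equiv0$ and $1$ for $t\equiv0$) and using $\sum_{t=0}^{p-1}e^{2\pi i a t/p}=0$, one rewrites $G(a,0,p)=\sum_{t=0}^{p-1}\left(\frac{t}{p}\right)e^{2\pi i a t/p}$. The substitution $t\mapsto \overline{a}\,t$ together with $\left(\frac{\overline a}{p}\right)=\left(\frac a p\right)$ then pulls the Legendre symbol out of the sum, giving $G(a,0,p)=\left(\frac a p\right)G(1,0,p)$. This reduces the prime case to the single classical evaluation $G(1,0,p)=\varepsilon_p\sqrt p$.

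The heart of the matter, and the main obstacle, is precisely this sign/phase determination $G(1,0,p)=\varepsilon_p\sqrt p$. While $|G(1,0,p)|=\sqrt p$ follows cheaply from $|G(1,0,p)|^2=\sum_{m,n}e^{2\pi i(m^2-n^2)/p}=p$, fixing the argument to be $1$ or $i$ according to $p\bmod4$ is Gauss's theorem and genuinely requires more input---for instance diagonalizing the finite Fourier transform and comparing traces (Schur's argument), Poisson summation, or invoking the Gauss-sum reciprocity law of Proposition~\ref{prop:ReciprocityThm_GaussSum}. I would cite one such standard evaluation rather than reproduce it here.

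Finally, to pass from odd primes to arbitrary odd $c$, I would use the Chinese remainder theorem: for coprime $c_1,c_2$ the parametrization $n=c_2u+c_1v$ with $u$ ranging modulo $c_1$ and $v$ modulo $c_2$ is a bijection modulo $c_1c_2$, and expanding $n^2$ kills the cross term (it contributes an integer multiple of $2\pi i$), yielding the twisted multiplicativity $G(a,0,c_1c_2)=G(ac_2,0,c_1)\,G(ac_1,0,c_2)$. Iterating over the prime-power factorization of $c$, applying the prime and prime-power evaluations, and assembling the factors using the multiplicativity of the Jacobi symbol $\left(\frac a c\right)=\prod_i\left(\frac{a}{p_i}\right)^{e_i}$ gives the general formula. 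The remaining subtlety is the bookkeeping showing that the product of the prime-level factors $\varepsilon_{p_i}$ and the $\sqrt{p_i}$'s combines---via quadratic reciprocity---into the single factor $\varepsilon_c\sqrt c$ dictated by $c\bmod4$; this is the second place where care is needed, but it reduces to a finite, mechanical verification once the prime case is established.
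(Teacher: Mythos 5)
Your proposal cannot be compared against an in-paper proof for the simple reason that the paper does not prove Proposition~\ref{prop:GaussSum} at all: it imports the statement verbatim as Lemma~1 of Ref.~\cite{Herrera-ValenciaSrivastavPivoluskaHuberFriisMcCutcheonMalik2020}, with the underlying Gauss-sum theory traceable to standard sources such as Ref.~\cite{BerndtEvansWilliams1998}. What you have written is the classical textbook derivation, and as a sketch it is sound: the completion of the square is valid because $2a$ and $4a$ are invertible modulo odd $c$, and your identity $a\overline{2a}^{\,2}\equiv\overline{4a}\equiv\psi(a)\pmod c$ correctly produces the phase $e^{-2\pi i\psi(a)b^2/c}$ with the right sign; the reduction of the prime case to $G(1,0,p)$ via the solution count $1+\left(\frac{t}{p}\right)$ and the substitution $t\mapsto\overline{a}\,t$ is standard and correct; and the Chinese-remainder factorization $G(a,0,c_1c_2)=G(ac_2,0,c_1)\,G(ac_1,0,c_2)$ is right, with the $\varepsilon$-bookkeeping indeed collapsing to $\varepsilon_{c_1}\varepsilon_{c_2}(-1)^{\frac{c_1-1}{2}\frac{c_2-1}{2}}=\varepsilon_{c_1c_2}$, a four-case check once quadratic reciprocity is invoked. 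Two ingredients remain external or unstated. First, Gauss's sign determination $G(1,0,p)=\varepsilon_p\sqrt{p}$, which you explicitly flag and propose to cite; this is legitimate (and note that Proposition~\ref{prop:ReciprocityThm_GaussSum}, which the paper does quote, can supply it by taking $a=2$, $b=0$, $c=p$ and evaluating the two-term sum on the right-hand side). Second, your assembly step needs the prime-power evaluation $G(a,0,p^e)$ for $e\geq2$ whenever $c$ is not squarefree, and you only name it; for completeness you should record the reduction $G(a,0,p^e)=p\,G(a,0,p^{e-2})$, obtained by writing $n=m+p^{e-1}t$ and noting that the sum over $t$ vanishes unless $p\mid m$, after which induction and the congruences $p^2\equiv1\pmod 4$, $\left(\frac{a}{p^e}\right)=\left(\frac{a}{p}\right)^{e}$ make the formula match. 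With that one paragraph added, your argument is a complete, self-contained proof, which is more than the paper itself provides.
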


\begin{proposition}[Reciprocity theorem for generalized Gauss sums {[Theorem 1.2.2 in Ref.\;\cite{BerndtEvansWilliams1998}]}]\label{prop:ReciprocityThm_GaussSum}
Let $a,b,c\in\mathbbm{Z}$ such that $ac\neq0$ and $ac+b$ is even. Then,
\begin{equation}
    S(a,b,c)\coloneqq \sum_{n=0}^{|c|-1} e^{i\pi\frac{an^2+bn}{c}} = \left|\frac{c}{a}\right|^{\frac{1}{2}} e^{i\pi\frac{|ac|-b^2}{4ac}} S(-c,-b,a).
\end{equation}
\end{proposition}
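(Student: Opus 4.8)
The plan is to establish this reciprocity law by the classical residue (contour-integration) method: I interpret the finite sum $S(a,b,c)$ as a sum of residues of a meromorphic function and then recompute the same contour integral as a Gaussian (Fresnel) integral. I would first assume $a,c>0$; the remaining sign configurations follow by complex conjugation, using the elementary symmetry $S(a,b,c)=\overline{S(-a,-b,c)}$ together with the parity hypothesis. The structural ingredient that makes everything work, and the only place the assumption ``$ac+b$ even'' is used, is the quasi-periodicity of $f(z)\coloneqq e^{i\pi(az^2+bz)/c}$: a direct expansion gives $f(z+c)=f(z)\,e^{2\pi iaz}\,e^{i\pi(ac+b)}$, and $e^{i\pi(ac+b)}=1$ exactly because $ac+b$ is even, so that $f(z+c)=f(z)\,e^{2\pi iaz}$.

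Next I would integrate $\phi(z)\coloneqq f(z)/(e^{2\pi iz}-1)$ counterclockwise around a parallelogram $C$ whose two long sides $L$ and $L+c$ are the lines through $-\tfrac12$ and $c-\tfrac12$ with direction $e^{i\pi/4}$, closed off by short segments near infinity. Inside $C$ the only poles are the simple poles of $1/(e^{2\pi iz}-1)$ at $z=0,1,\dots,c-1$, with residue $f(n)/(2\pi i)$, so the residue theorem gives $\oint_C\phi=\sum_{n=0}^{c-1}f(n)=S(a,b,c)$. On the side $L+c$ I substitute $z=w+c$ and apply the quasi-periodicity together with $e^{2\pi i(w+c)}=e^{2\pi iw}$; the two slanted contributions then combine into $\int_L f(w)\,(e^{2\pi iaw}-1)/(e^{2\pi iw}-1)\,dw$. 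Since $a$ is a positive integer the rational factor telescopes into the finite geometric series $\sum_{k=0}^{a-1}e^{2\pi ikw}$, reducing everything to the $a$ Gaussian integrals $\int_L e^{i\pi(aw^2+bw)/c+2\pi ikw}\,dw$.

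Each such integral I would evaluate by completing the square, $\tfrac{i\pi a}{c}w^2+i\pi(\tfrac{b}{c}+2k)w=\tfrac{i\pi a}{c}\bigl(w+\tfrac{b+2kc}{2a}\bigr)^2-\tfrac{i\pi(b+2kc)^2}{4ac}$, so that the slanted Fresnel integral $\int_L e^{i\pi aw'^2/c}\,dw'=|c/a|^{1/2}e^{i\pi\,\mathrm{sgn}(ac)/4}$ supplies the prefactor $|c/a|^{1/2}e^{i\pi|ac|/(4ac)}$, while expanding $(b+2kc)^2=b^2+4kbc+4k^2c^2$ makes the constant terms assemble into $e^{-i\pi b^2/(4ac)}\sum_{k=0}^{a-1}e^{-i\pi(ck^2+bk)/a}=e^{-i\pi b^2/(4ac)}\,S(-c,-b,a)$. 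Multiplying these factors reproduces exactly $S(a,b,c)=|c/a|^{1/2}e^{i\pi(|ac|-b^2)/(4ac)}S(-c,-b,a)$. The main obstacle is analytic rather than algebraic: one must rigorously show that the two short end segments of $C$ contribute nothing as they recede to infinity (the direction $e^{i\pi/4}$ is chosen precisely so that $\mathrm{Re}(i\pi aw^2/c)\to-\infty$ along $L$), and one must pin down the correct branch of $|c/a|^{1/2}$ and the phase $e^{i\pi\,\mathrm{sgn}(ac)/4}$ in the Fresnel integral by a limiting argument that rotates the line of integration onto the real axis while controlling the contribution of the connecting arcs.
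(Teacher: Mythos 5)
The paper itself contains no proof of this proposition: it is imported verbatim as Theorem~1.2.2 of the cited monograph~\cite{BerndtEvansWilliams1998} and used as a black box inside the proof of Proposition~\ref{prop:OurGaussSum} (the Gauss-sum evaluation behind the three-MUB construction). So there is no in-paper argument to compare against; the relevant benchmark is the classical proof in the cited reference, and your outline \emph{is} essentially that proof. All the structural steps are correct: the quasi-periodicity $f(z+c)=f(z)\,e^{2\pi i a z}e^{i\pi(ac+b)}$, with the parity hypothesis entering precisely to kill the factor $e^{i\pi(ac+b)}$; the residue computation for $f(z)/(e^{2\pi i z}-1)$ over a parallelogram slanted along $e^{i\pi/4}$, picking up the poles at $n=0,\dots,c-1$ with residues $f(n)/(2\pi i)$; the recombination of the two long sides into $\int_L f(w)\,(e^{2\pi i a w}-1)/(e^{2\pi i w}-1)\,dw$ and its telescoping into a geometric sum over $k=0,\dots,a-1$; and the completion of the square, whose constant terms correctly assemble into $e^{-i\pi b^2/(4ac)}\,S(-c,-b,a)$ while the slanted Fresnel integral supplies $|c/a|^{1/2}e^{i\pi/4}$.

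Two points would need more care in a full write-up. First, the reduction of the remaining sign configurations is not quite as immediate as you state: using $S(a,b,c)=\overline{S(-a,-b,c)}$ together with the positive case produces, e.g., $S(-c,+b,a)$ where the target is $S(-c,-b,a)$, so you additionally need the symmetry $S(\alpha,\beta,\gamma)=S(\alpha,-\beta,\gamma)$, valid when $\alpha\gamma+\beta$ is even (substitute $n\mapsto\gamma-n$ in the sum); thus the parity hypothesis is used a second time, not only in the quasi-periodicity. Second, the analytic issues you flag at the end --- the vanishing of the short end segments, the legitimacy of translating the slanted line by the real shift $(b+2kc)/(2a)$, and pinning down the phase of the slanted Fresnel integral --- constitute most of the actual work; they are standard but cannot be waved away. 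Neither point is a conceptual gap: your proposal reproduces the argument of the source the paper cites, and with those details filled in it would be a complete proof.
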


\begin{observation}\label{obs:gcd_oddD}
    For all odd $d\in\mathbbm{N}$, odd prime $p$ such that $gcd(d,p)=1$ and $r\in\mathbbm{N}\cup\{0\}$, $gcd\left(d,\frac{d-p^r}{2}\right)=1$\@.
\end{observation}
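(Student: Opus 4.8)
The plan is to establish $gcd\!\left(d,\frac{d-p^r}{2}\right)=1$ by ruling out any common prime factor. The first thing I would check is that $\frac{d-p^r}{2}$ is actually an integer, so that the statement is well posed: since $d$ is odd and $p$ is an odd prime, $p^r$ is odd for every $r\in\mathbbm{N}\cup\{0\}$ (including the case $p^0=1$), and hence $d-p^r$ is a difference of two odd integers and therefore even. Thus $\frac{d-p^r}{2}\in\mathbbm{Z}$ and its $gcd$ with $d$ makes sense. This parity observation is precisely where the oddness hypotheses on $d$ and $p$ enter.

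Next I would argue by contradiction. Suppose some prime $q$ divides both $d$ and $\frac{d-p^r}{2}$. From $q\mid\frac{d-p^r}{2}$, multiplying through by $2$ gives $q\mid(d-p^r)$. Combining this with $q\mid d$ yields $q\mid\bigl[d-(d-p^r)\bigr]=p^r$. So any common prime factor of $d$ and $\frac{d-p^r}{2}$ must divide $p^r$.

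Finally, since $q$ is prime and $q\mid p^r$, I conclude that $q=p$ whenever $r\geq 1$, while for $r=0$ the relation $q\mid p^0=1$ is already impossible, so no such $q$ can exist. In the remaining case $r\geq 1$, the divisibility $q\mid d$ together with the hypothesis $gcd(d,p)=1$ forces $q\neq p$, contradicting $q=p$. Hence $d$ and $\frac{d-p^r}{2}$ admit no common prime factor, which is exactly $gcd\!\left(d,\frac{d-p^r}{2}\right)=1$.

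I do not expect any serious obstacle here; the argument is elementary. The only two points that require a moment's care are (i) verifying that $\frac{d-p^r}{2}$ is an integer, which is where the oddness of $d$ and $p^r$ is used, and (ii) separating the degenerate case $r=0$ (where $p^r=1$) from $r\geq 1$, so that the implication $q\mid p^r\Rightarrow q=p$ is invoked only when it is actually valid.
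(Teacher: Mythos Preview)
Your proof is correct and follows essentially the same approach as the paper: both arguments verify that $\tfrac{d-p^r}{2}\in\mathbbm{Z}$, then show that any common divisor of $d$ and $\tfrac{d-p^r}{2}$ must divide $d-2\cdot\tfrac{d-p^r}{2}=p^r$, and finally use $gcd(d,p)=1$ to rule this out. Your version, phrased in terms of a hypothetical common prime factor and separating the $r=0$ case explicitly, is slightly more careful than the paper's, but the underlying idea is identical.
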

\begin{proof}
    For odd $d\in\mathbbm{N}$, $\frac{d-p^r}{2}\in\mathbbm{Z}$\@. Let $gcd\left(d,\frac{d-p^r}{2}\right)=k$\@. Then, there exist $n,m\in\mathbbm{N}$ such that $d=nk$ and $\frac{d-p^r}{2}=mk$\@. Since $k\neq p^q$ for any $q\in\mathbbm{N}$ due to $gcd(d,p)=1$, it follows that the largest $k\in\mathbbm{N}$ that satisfies $(n-2m)k = d-(d-p^r) = p^r$ is $k=1$\@.
\end{proof}

\begin{observation}\label{obs:gcd_evenD}
    For all even $d\in\mathbbm{N}$, odd prime $p$ such that $gcd(d,p)=1$ and $r\in\mathbbm{N}\cup\{0\}$, $gcd\left(d-p^r,\frac{d}{2}\right)=1$\@.
\end{observation}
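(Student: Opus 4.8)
The plan is to mirror the short divisibility argument used for Observation~\ref{obs:gcd_oddD}: the key idea is that any common divisor of $d-p^r$ and $d/2$ must also divide $p^r$, after which the coprimality hypothesis $gcd(d,p)=1$ eliminates every nontrivial divisor of $p^r$. First I would note that since $d$ is even, $d/2\in\mathbbm{N}$, so $gcd\!\left(d-p^r,\tfrac{d}{2}\right)$ is well-defined; moreover, because $p$ is an odd prime with $gcd(d,p)=1$, the number $p^r$ is odd and hence distinct from the even number $d$, so $d-p^r\neq 0$ and this gcd is a genuine positive integer. I would then set $k=gcd\!\left(d-p^r,\tfrac{d}{2}\right)$ and write $\tfrac{d}{2}=nk$ and $d-p^r=mk$ for suitable $n,m\in\mathbbm{N}$.

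Next I would eliminate $d$ between these two relations: multiplying the first by $2$ and subtracting the second yields $2nk-mk=d-(d-p^r)=p^r$, i.e.\ $(2n-m)\,k=p^r$, so that $k\mid p^r$. Since $k\mid \tfrac{d}{2}$ implies $k\mid d$, and $gcd(d,p)=1$ forces $p\nmid d$ and therefore $p\nmid k$, the only divisor of $p^r$ compatible with $k$ is $k=1$ (the divisors of $p^r$ being exactly $1,p,\dots,p^r$). This gives $gcd(d-p^r,d/2)=1$, as claimed.

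I do not expect a genuine obstacle here, as the argument is a brief divisibility chase entirely parallel to Observation~\ref{obs:gcd_oddD}. The only step needing a little care is the final combinatorial conclusion that a divisor of $p^r$ coprime to $p$ must equal $1$; this rests on $p$ being prime, together with the fact that $k\mid d$ and $gcd(d,p)=1$ guarantee $p\nmid k$.
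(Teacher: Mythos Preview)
Your proof is correct and follows essentially the same divisibility argument as the paper's own proof: set $k$ equal to the gcd, express both quantities as multiples of $k$, combine them to obtain $k\mid p^r$, and then use $\gcd(d,p)=1$ to force $k=1$. The only differences are cosmetic (you swap the roles of the variable names $n$ and $m$) and that you spell out a bit more explicitly why a divisor of $p^r$ that also divides $d$ must equal $1$.
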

\begin{proof}
    Let $gcd\left(d-p^r,\frac{d}{2}\right)=k$\@. Then, there exist $n,m\in\mathbbm{N}$ such that $d-p^r=nk$ and $\frac{d}{2}=mk$\@. Since $k\neq p^q$ for any $q\in\mathbbm{N}$ due to $gcd(d,p)=1$, the largest $k\in\mathbbm{N}$ that satisfies $(2m-n)k = d-(d-p^r) = p^r$ is $k=1$\@.
\end{proof}

\begin{proposition}\label{prop:OurGaussSum}
    Let $d\in\mathbbm{N}$, $k\in\mathbbm{Z}$, $r\in\mathbbm{N}\cup\{0\}$, and $p$ be an odd prime such that $gcd(d,p)=1$ and $d>p^r$\@. Then, it holds that
    \begin{equation}
        \mathcal{G}_{d}^{k} \coloneqq \left|\sum_{j=0}^{d-1} e^{2\pi i\left(\frac{(d-p^r)j^2}{2d}+\frac{kj}{d}\right)}\right| = \sqrt{d}.
    \end{equation}
\end{proposition}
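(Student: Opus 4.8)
The plan is to split the argument on the parity of $d$, because the quadratic Gauss-sum evaluation of Proposition~\ref{prop:GaussSum} applies only to odd moduli, while the numerator $\frac{d-p^r}{2}$ appearing in the exponent is an integer precisely when $d$ is odd. So the two observations on greatest common divisors are tailored to exactly these two cases, and I expect each to be the key that licenses Proposition~\ref{prop:GaussSum} in its respective case.

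First I would treat $d$ odd. Then $d-p^r$ is even, so $a\coloneqq\frac{d-p^r}{2}\in\mathbbm{Z}$ and the sum reads $\mathcal{G}_d^k=|G(a,k,d)|$ in the notation of Proposition~\ref{prop:GaussSum}. Observation~\ref{obs:gcd_oddD} gives $\gcd(a,d)=1$, and since $d$ is a positive odd integer, Proposition~\ref{prop:GaussSum} yields $|G(a,k,d)|=\sqrt{d}$ immediately. This case is essentially free.

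The even case is the main obstacle, since the half-integer coefficient blocks a direct appeal to Proposition~\ref{prop:GaussSum}. Here I would first clear the $2\pi$ into $\pi$ and recognize the sum as $S(d-p^r,2k,d)$ in the notation of the reciprocity theorem (Proposition~\ref{prop:ReciprocityThm_GaussSum}); one checks the hypotheses $ac\neq 0$ (valid because $d>p^r$) and that $ac+b=(d-p^r)d+2k$ is even (valid because $d$ is even). Applying reciprocity and taking moduli gives $\mathcal{G}_d^k=\bigl(d/(d-p^r)\bigr)^{1/2}\,|S(-d,-2k,d-p^r)|$, which trades the troublesome even modulus $d$ for the odd modulus $d-p^r$.

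The payoff is that on the modulus $d-p^r$ the surviving leading coefficient is $-d/2$, which is now an integer precisely because $d$ is even, so $S(-d,-2k,d-p^r)=G(-\tfrac{d}{2},-k,d-p^r)$. Here Observation~\ref{obs:gcd_evenD} supplies $\gcd(\tfrac{d}{2},d-p^r)=1$, and $d-p^r$ is odd, so Proposition~\ref{prop:GaussSum} gives $|G(-\tfrac{d}{2},-k,d-p^r)|=\sqrt{d-p^r}$. Substituting back, the reciprocity prefactor and this modulus combine as $\bigl(d/(d-p^r)\bigr)^{1/2}\sqrt{d-p^r}=\sqrt{d}$, finishing the even case and hence the proposition.
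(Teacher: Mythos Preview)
Your proposal is correct and follows essentially the same approach as the paper: split on the parity of $d$, apply Proposition~\ref{prop:GaussSum} directly via Observation~\ref{obs:gcd_oddD} when $d$ is odd, and for even $d$ use the reciprocity theorem (Proposition~\ref{prop:ReciprocityThm_GaussSum}) to swap to the odd modulus $d-p^r$, then invoke Observation~\ref{obs:gcd_evenD} and Proposition~\ref{prop:GaussSum}. The steps, the auxiliary results used, and the order of application all match the paper's proof.
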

\begin{proof}
    For all odd $d\in\mathbbm{N}$, $d-p^r$ is even and $gcd\left(d,\frac{d-p^r}{2}\right)=1$ by Observation~\ref{obs:gcd_oddD}, so we can apply Proposition~\ref{prop:GaussSum} to get
    \begin{flalign}
        \mathcal{G}_{d}^{k} = \left|\sum_{j=0}^{d-1} e^{i2\pi\frac{\left(\frac{d-p^r}{2}\right)j^2+kj}{d}}\right|= \left|G\left(\frac{d-p^r}{2},k,d\right)\right|=\sqrt{d}.
    \end{flalign}

    We now consider the remaining cases with even $d\in\mathbbm{N}$. The reciprocity theorem (Proposition~\ref{prop:ReciprocityThm_GaussSum}) can be applied in our case for all even $d\in\mathbbm{N}_{\geq2}$ because if we let $a=d-p^r$, $b=2k$, and $c=d$, then $ac\neq 0\;\forall$ even $d>1$ and $ac+b=d(d-p^r)+2k$ is even for all $d,k\in\mathbbm{Z}$, odd $p$, and $r\in\mathbbm{N}\cup\{0\}$\@.\\[-3mm]

    For all even $d\in\mathbbm{N}$, we apply Proposition~\ref{prop:ReciprocityThm_GaussSum} to get
    \begin{flalign}
        \mathcal{G}_{d}^{k} = |S(d-p^r,2k,d)|= \sqrt{\frac{d}{d-p^r}}|S(-d,-2k,d-p^r)|, \label{eq:OurGaussSumEven1}
    \end{flalign}
    where we observe that
    \begin{flalign}
        S(-d,-2k,d-p^r) = \sum_{j=0}^{d-p^r-1} e^{-2\pi i\frac{\left(\frac{d}{2}\right)j^2+kj}{d-p^r}} = G\left(-\frac{d}{2},-k,d-p^r\right).\label{eq:OurGaussSumEven2}
    \end{flalign}
    Since $d-p^r$ is odd and $gcd\left(d-p^r,\frac{d}{2}\right)=1$ by Observation~\ref{obs:gcd_evenD}, we can apply Proposition~\ref{prop:GaussSum} to get $\left|G\left(-\frac{d}{2},-k,d-p^r\right)\right|=\sqrt{d-p^r}$\@. Combining this with Eqs.\;\eqref{eq:OurGaussSumEven1} and \eqref{eq:OurGaussSumEven2}, we obtain $\mathcal{G}_{d}^{k} = \sqrt{d}$ for all even $d\geq2$ as well.
\end{proof}

Finally, we can state the proof of Lemma~\ref{lemma:3MUBsAnyD} having proven Proposition~\ref{prop:OurGaussSum}.
\begin{proof}[Proof of Lemma~\ref{lemma:3MUBsAnyD}]
    First of all, all three bases are clearly orthonormal as we know that $\frac{1}{d}\sum_{j=0}^{d-1} e^{i2\pi\frac{(b-a)j}{d}} = \delta_{a,b}$ for all $a,b\in\mathbbm{Z}$ and $d\in\mathbbm{N}_{\geq2}$\@.
    It is obvious that $|\langle e^1_a|e^z_b\rangle| = \frac{1}{\sqrt{d}}$ for all $z\in\{2,3\}$ and $a,b\in[d]\coloneqq\{0,\ldots,d-1\}$\@. To prove that $|\langle e^2_a|e^3_b\rangle| = \frac{1}{\sqrt{d}}$ for all $a,b\in[d]$, it follows that
    \begin{flalign}
        |\langle e^2_a|e^3_b\rangle| = \frac{1}{d}\left|\sum_{j=0}^{d-1} e^{i2\pi\left(\frac{(d-p^r) j^2}{2d}+\frac{(b-a)j}{d}\right)}\right| = \frac{1}{\sqrt{d}},
    \end{flalign}
    where the last equality follows from Proposition~\ref{prop:OurGaussSum} if we set the integer $k=b-a$\@.
\end{proof}

\vspace*{-3mm}

\subsection{Application of AMUBs}\label{app:AMUBs}\vspace*{-1.5mm}

In this appendix, we investigate whether using approximately MUBs (AMUBs) as the local measurement bases can provide any practical advantage in witnessing Schmidt number in local dimensions where the maximum number of MUBs is unknown.\\[-3mm]

Suppose that we can control an arbitrary number of measurement bases with extreme precision. As Corollary~\ref{cor:MUBoptimal} suggests, we should aim at locally measuring in as many MUBs as possible to get the best performance of our witness. Sadly, the total number of MUBs is unknown for dimensions that are not prime powers (e.g., $d=6,10,12$). 
In fact, given the prime factorization of the dimension $d=\prod_j p_j^{n_j}$ with $p_j^{n_j}<p_{j+1}^{n_{j+1}}\;\forall\;j$, (tensor products of) the Wootters{\textendash}Fields construction only guarantees $p_1^{n_1}+1$ MUBs to exist~\cite{ColomerMortimerFrerotFarkasAcin2022}. However, even with such construction, calculating all the relative phases for each basis is non-trivial for large prime powers~\cite{WoottersFields1989}.\\[-3mm]

Alternatively, if we allow our measurement bases to be nearly mutually unbiased, then one can construct $d+1$ AMUBs for any dimension $d$~\cite{ShparlinskiWinterhof2006}. The AMUBs construction in Ref.\;\cite{ShparlinskiWinterhof2006} has a simpler description than the one by Wootters{\textendash}Fields and takes the following form,
\begin{equation}
    \ket{M^{z}_a} = \frac{1}{\sqrt{d}}\sum_{j=1}^d \exp\left[i2\pi\left(\frac{zj^2}{p}+\frac{aj}{d}\right)\right]\ket{j-1}, \label{eq:AMUBsConstruction}
\end{equation}
where $z\in\{1,\ldots,d\}$ labels the basis, $a\in[d]$ labels the basis vector, and $p$ is the smallest prime such that $p\geq d$\@. Together with the standard basis $\{\ket{M^{0}_j}=\ket{j}\}_j$, they form a set of $d+1$ AMUBs. Using a result from Ref.\;\cite{CochranePeral2001}, it was shown that the bases overlaps satisfy~\cite{ShparlinskiWinterhof2006}
\begin{equation}
    |\langle M^{z}_a|M^{z'}_{a'}\rangle|^2 < \frac{4p}{\pi d^2}\left(\ln p+\gamma-\ln\frac{\pi}{2}+\frac{\pi-1}{2p}\right)+\frac{1}{d}
\end{equation}
for all $z\neq z'\in\{1,\ldots,d\}$ and $a,a'\in[d]$, where $\gamma\approx0.577$ is Euler's constant. Also, we have $|\langle M^{z}_a|j\rangle|^2 = \frac{1}{d}$ for all $z\in\{1,\ldots,d\}$ and $a,j\in[d]$\@. By the Bertrand{\textendash}Chebyshev theorem \cite{Bertrand1845,Chebyshev1852}, which says that there exists a prime $p$ such that $n<p\leq 2n$ for all $n\in\mathbbm{N}$, we get
\begin{flalign}
    c^{z,z'}_\text{max}<\frac{8}{\pi d}\left(\ln(2d)+\gamma-\ln\frac{\pi}{2}+\frac{\pi-1}{4d}\right)+\frac{1}{d} \label{ineq:AMUBcmax}
\end{flalign}
for $z\neq z'\in\{1,\ldots,d\}$, which scales as $\frac{8\ln d}{\pi d}+\mathcal{O}\left(\frac{1}{d}\right)$\@. Note that the upper bound in Ineq.\;\eqref{ineq:AMUBcmax} is larger than 1 for $d<9$, so the bound is only relevant for large dimensions. In large local dimensions $d$, since we are not aware of any non-trivial analytic lower bound for $c^{z,z'}_\text{min}$, we have to assume $c^{z,z'}_\text{min}=0$ and due to the loose upper bound for $c^{z,z'}_\text{max}$ in Ineq.\;\eqref{ineq:AMUBcmax}, we cannot witness any non-trivial Schmidt number of any state using Lemma~\ref{lemma:SchmidtRkLoose} as $\overline{\mathcal{T}}(\overline{\mathcal{C}})=m$\@. For small local dimensions $d$, one can easily find the smallest prime $p\geq d$~\cite{OEIS}, so the values of $c^{z,z'}_\text{max}$ and $c^{z,z'}_\text{min}$ should be computable.\\[-3mm]

To see whether using AMUBs as the local measurement bases in our Schmidt-number witness construction can provide any advantage when the maximal set of MUBs is not known, we consider the smallest non-prime-powered dimensions with the least known MUBs: $d=6,10,14$, and 22 \footnote{They are all products of 2 and another odd prime, so the tensor products of the Wootters{\textendash}Fields constructions will only guarantee 3 MUBs in $d=6,10,14$, and 22.}. The smallest primes $p\geq d$ that enter into Eq.\;\eqref{eq:AMUBsConstruction} are 7, 11, 17, and 23, respectively. Our goal is to see if using $4\leq m\leq d+1$ of these AMUBs can witness higher Schmidt numbers than using only the 3 known MUBs. For this purpose, we consider the isotropic state (see Appendix~\ref{app:isotropic}) again and compare the noise tolerances for witnessing each Schmidt number $2\leq k\leq d$ using $4\leq m\leq d+1$ AMUBs with those using only 3 MUBs. Unfortunately, we verify numerically that measuring in any $m\in\{4,\ldots,d+1\}$ AMUBs described by Eq.\;\eqref{eq:AMUBsConstruction} gives zero noise tolerance. Interestingly, we find that by replacing $p$ in Eq.\;\eqref{eq:AMUBsConstruction} with a non-prime value can sometimes improve the noise tolerance slightly. For example, in $d=6$, if $p=7$ is replaced by 7.2, then measuring in a particular subset of 4 ``modified" AMUBs will give strictly positive noise tolerances for all $k\in\{2,\ldots,6\}$, but still not enough to surpass the noise tolerance when using 3 MUBs. In fact, by replacing $p=7$ with 7.2, three of the bases become mutually unbiased. In Appendix~\ref{app:3MUBsAnyD}, we generalize this observation, enabling us to construct 3 MUBs with a simple analytic description in any dimension $d\in\mathbbm{N}$\@.
This observation also suggests that there could be other constructions of AMUBs that are more suited for witnessing Schmidt numbers than the one from Ref.\;\cite{ShparlinskiWinterhof2006}.

\end{document}